\pgfplotsset{compat=1.18}
\pgfplotsset{
    legend image with text/.style={
        legend image code/.code={%
            \node[anchor=center] at (0.3cm,0cm) {#1};
        }
    },
}
\DeclareMathOperator{\logdet}{logdet}
\DeclareMathOperator{\cl}{cl}
\DeclareMathOperator{\im}{Im}
\let\originalleft\left
\let\originalright\right
\renewcommand{\left}{\mathopen{}\mathclose\bgroup\originalleft}
\renewcommand{\right}{\aftergroup\egroup\originalright}
\newcommand{\inprod}[2]{\left\langle #1, #2\right\rangle}
\newcommand{\bra}[1]{\left\langle #1 \right|}
\newcommand{\ket}[1]{\left| #1 \right\rangle}
\newcommand{\ketbra}[2]{\left|#1\middle\rangle\middle\langle#2\right|}
\newcommand{\de}[1]{\left(#1\right)}
\newcommand{\De}[1]{\left[#1\right]}
\newcommand{\inv}[1]{ #1^{-1} }
\newcommand{\id}{\mathds{1}}
\newcommand{\N}{\mathbb{N}}
\newcommand{\R}{\mathbb{R}}
\newcommand{\C}{\mathbb{C}}
\newcommand{\HH}{\mathbb{H}}
\newtheorem{theorem}{Theorem}[section]
\newtheorem*{theorem*}{Theorem}
\newtheorem{lemma}[theorem]{Lemma}
\newtheorem{proposition}[theorem]{Proposition}
\newtheorem{definition}[theorem]{Definition}
\newtheorem{corollary}[theorem]{Corollary}
\newtheorem{remark}{Remark}
\def\supp{\operatorname{supp}}
\newcommand{\claseC}{{\mathscr C}}
\newcommand{\claseCk}[1]{{\mathscr C}^{#1}}
\newcommand{\hypo}{\operatorname{hypo}}
\def\freq{\operatorname{freq}}
\def\Pr{\operatorname{Pr}}
\def\Tr{\operatorname{Tr}}
\def\interior{\operatorname{int}}
\newcommand{\lmap}{\mathcal L}
\newcommand{\g}{\mathcal G}
\newcommand{\z}{\mathcal Z}
\newcommand{\zg}{\mathcal Z \circ \mathcal G}
\newcommand{\gmap}{\widehat{\mathcal G}}
\newcommand{\zmap}{\widehat{\mathcal Z}}
\newcommand{\zgmap}{\widehat{\mathcal{ZG}}}
\newcommand{\pkey}{p^\mathrm{K}}
\newcommand{\Psihat}{\widehat\Psi_\alpha}
\newcommand{\Psihathat}{\widehat{\widehat\Psi}_\alpha}
\newcommand{\pro}[1]{\ket{#1}\bra{#1}}
\newcommand{\mc}[1]{\mathcal{#1}}
\begin{document}

\author[1,2]{Mariana Navarro}
\thanks{{mariana.navarro@icfo.eu}, these authors contributed equally.}
\orcid{0000-0001-9381-369X}

\author[3,4]{Andrés González Lorente}
\thanks{{andres.gonzalez.lorente@uva.es}, these authors contributed equally.}
\orcid{0009-0009-1689-5682}

\author[3,4]{Pablo V. Parellada}
\orcid{0000-0002-6768-6671}

\author[1,2]{Carlos Pascual-García} \orcid{0000-0003-0659-7349}

\author[3,4]{Mateus Araújo}\orcid{0000-0003-0155-354X}

\affil[1]{Luxquanta Technologies S.L., Av. Joan Carles I, 30, 1º1ª. 08908 L´Hospitalet de Llobregat, Barcelona, Spain}
\affil[2]{ICFO - Institut de Ciencies Fotoniques, The Barcelona Institute of Science and Technology, Castelldefels, Spain}
\affil[3]{Departamento de Física Teórica, Atómica y Óptica, Universidad de Valladolid, 47011 Valladolid, Spain}
\affil[4]{Laboratory for Disruptive Interdisciplinary Science (LaDIS), Universidad de Valladolid, 47011 Valladolid, Spain.}

\title{Finite-size quantum key distribution rates from Rényi entropies using conic optimization}
\date{\today}

\begin{abstract}
    Finite-size general security proofs for quantum key distribution based on Rényi entropies have recently been introduced. These approaches are more flexible and provide tighter bounds on the secret key rate than traditional formulations based on the von Neumann entropy. However, deploying them requires minimizing the conditional Rényi entropy, a difficult optimization problem that has hitherto been tackled using ad-hoc techniques based on the Frank-Wolfe algorithm, which are unstable and can only handle particular cases. In this work, we introduce a method based on non-symmetric conic optimization for solving this problem. Our technique is fast, reliable, and completely general. We illustrate its performance on several protocols, whose results represent an improvement over the state of the art.
\end{abstract}

\maketitle

\section{Introduction}

Quantum key distribution (QKD) \cite{BB84,E91} allows two honest parties, traditionally named Alice and Bob, to create a shared key that remains secret against any quantum third parties thanks to the principles of quantum information theory. In particular, QKD is secure against adversaries with unlimited resources and the ability to perform cross-correlations between the different signals that compose the protocol---a scenario known as general security \cite{renner2008security,Renner2022Security}. 

Recent works have successfully achieved general security for different protocols (for instance, see Refs. \cite{tan2020improved,baeuml2023security,kamin25MEATsecurity}) thanks to different versions of the so-called entropy accumulation theorem (EAT) \cite{dupuis2016entropy,dupuis2019entropy}. The EAT framework reduces any multi-round QKD protocol into a single-round scenario, together with a finite-size analysis, which allows security to be proven for real QKD protocols that run for a finite number of rounds. Among its variants, the marginal-constrained EAT (MEAT) \cite{arqand2025MEAT} constitutes the most up-to-date version. It relies on a formulation based on Rényi entropies, removing impractical limitations from previous approaches \cite{metger2022security} while providing tighter finite-size corrections compared to former results \cite{arqand2024generalized}.

However, Rényi-based secret key rate computations require optimizing the conditional Rényi entropy, which is challenging to perform and is not covered by standard optimization software. Existing works \cite{chung2025,kamin25MEATsecurity} handle it with ad-hoc methods based on the Frank-Wolfe algorithm \cite{frankwolfe56,winick2018reliable}. Unfortunately, this approach suffers from low convergence rates and instability. Furthermore, the existing implementations skipped facial reduction \cite{drusvyatskiy2017,hu2022}, a critical step necessary to make the gradient well-defined, opting instead to add artificial depolarization noise, which reduces both precision and computational performance. Another approach would be to use existing non-symmetric conic methods for optimizing the sandwiched Rényi relative entropy \cite{he2025}. However, this would also require adding artificial depolarization noise, because performing facial reduction makes the method no longer applicable. This is analogous to the case of von Neumann relative entropy, where facial reduction necessitated introducing a new cone dedicated to the QKD problem \cite{Lorente2025quantumkey,he2024}.

In this work we perform facial reduction on the sandwiched Rényi relative entropy cone, and introduce a dedicated cone for optimizing over the resulting problem, named RényiQKD. We do the same for a lower bound introduced in Ref. \cite{chung2025}, resulting in another cone, named FastRényiQKD. The latter cone gives very close bounds and is much faster, and is therefore the most appropriate one for numerical computations. These cones can then be used for computing secret key rates using Rényi entropies within the MEAT framework combined with the Rényi leftover hashing lemma \cite{Dupuis23RenyiHashLemma}. We provide numerical examples for several protocols: BB84 \cite{BB84,BBM92}, mutually unbiased bases (MUB) \cite{Cerf2001,Sheridan2010}, and discrete modulated continuous-variable (DMCV) QKD \cite{zhang2024continuousvariable,pascualgarcía2024}. Our implementation of the cones is developed in the programming language Julia \cite{JuliaLang} using the conic solver Hypatia \cite{coey2022solving,coey2022performance}, which uses an improved version of the non-symmetric optimization algorithm by Skajaa and Ye \cite{skajaa2015,papp2017}. It is interfaced by the modeller JuMP \cite{Lubin2023}, which makes it easy to use and mix with extra conic constraints that may come from specific QKD protocols, such as the partial state characterization constraints in Ref. \cite{pereira2025optimalkeyratesquantum}.

The rest of this document goes as follows: in Section \ref{Sec:Prolegomena} we introduce diverse mathematical concepts that will be required along the manuscript, in Section \ref{Sec:Coherent_Security} we provide a generic QKD protocol amenable for an analysis under the MEAT, as well as the main results to derive the secret key rate. Section \ref{sec:ComputeSKR} shows the application of facial reduction and the construction of our cones. Section \ref{Sec:Experiments} presents our numerical results and Section \ref{Subsec:Benchmarks} benchmarks our method against previous approaches. We conclude in Section \ref{Sec:Conclusion}.

\section{Notation and definitions}\label{Sec:Prolegomena}

Given a Hilbert space $\mc{H}_X$, we define $\mc{D}(X)$ as the set of all trace one, positive semidefinite matrices acting on this space, whereas $\mathbb{H}^d_{\succeq}$ (respec. $\mathbb{H}^d_{\succ}$) will be used to denote Hermitian positive semidefinite (respec. definite) matrices of dimension $d$. For a register $X$, we denote $n$ identical copies as $X^{(n)}$, and define the notation $[XY]_j := X_jY_j$. Let us denote a state $\rho_{XY}$ as classical-quantum (cq-state)  when it can be expressed as 
\begin{equation}
    \rho_{XY} = \sum_{x\in \mathcal{X}} p(x) \ketbra{x}{x}_X\otimes \rho^x_Y,
\end{equation}
for some countable alphabet $\mathcal{X}$, which may induce a space of probabilities denoted as $\mathbb{P}_{\mc{X}}$. Any subset $\Omega \subseteq \mathcal{X} $ can be used to define an event, which allows a restriction for cq-states according to a conditioning, as

\begin{equation} \label{eq:TrPr}
    \rho_{XY|\Omega} = \frac{1}{\mathrm{Pr}_\rho [\Omega]}\sum_{x\in \Omega} p(x) \ketbra{x}{x}_X\otimes \rho^x_Y,
\end{equation}
with $\mathrm{Pr}_\rho[{\Omega}] = \sum_{x\in \tilde{\Omega}} p(x)$ representing the probability of observing the event. When only one case is contained in the event, i.e. $\Omega=\{x\}$ for some $x\in\mc{X}$, we will directly refer to the conditioned state as $\rho_{|x}$. We define the generalized trace distance as
\begin{equation} \label{eq:GenTraceDist}
    T(\rho,\sigma) = \frac{1}{2} \lVert \rho - \sigma \rVert_1 +\frac{1}{2}\left|\Tr[\rho-\sigma]\right|,
\end{equation}
and $\lVert p \rVert_1$ to denote the trace norm of a vector $p$. We also make use of several entropic quantities; particularly the Rényi entropy, expressed as $H_\alpha(A)_\rho$ for $\alpha \in (0,\infty)\backslash \{1\}$, and its sandwiched conditional version, defined as
\begin{equation}
    H^\uparrow_{\alpha}(A|B)_\rho = \sup_{\sigma_B \in \mc{D}(\mc{H}_B)} - D_\alpha(\rho_{AB}\|\mathds{1}\otimes\sigma_B),
\end{equation}
where we use the sandwiched Rényi relative entropy
\begin{equation} \label{eq:QRenyiDiv}
    D_\alpha(\rho\|\sigma) = \frac1{\alpha-1}\log\De{ \frac{\Psi_\alpha(\rho, \sigma)}{\Tr[\rho]}},
\end{equation}
where 
\begin{equation}\label{eq;PsiRenyi}
    \Psi_\alpha(\rho, \sigma) = \Tr\left[(\sigma^\frac{1-\alpha}{2\alpha}\rho\sigma^\frac{1-\alpha}{2\alpha})^\alpha\right],
\end{equation}
provided that $\alpha \in (0,\infty)\backslash\{1\}$ and $\mathrm{supp}(\rho) \subseteq \mathrm{supp}(\sigma)$. Unless explicitly stated otherwise, we will consider trace-one density matrices, and directly set $\Tr[\rho]=1$ for \eqref{eq:QRenyiDiv}. The case $\alpha \to 1$ is well-defined and leads to the usual quantum relative entropy
\begin{equation}
    D(\rho\|\sigma) = \Tr[\rho \log(\rho) - \rho \log(\sigma)], 
\end{equation}
which we will denote as the Kullback-Leibler divergence $D_\mathrm{KL}(\cdot \| \cdot)$ when both states are classical. To conclude, we caution the reader that the numerical optimization methods used in this manuscript typically employ natural logarithms, while the calculations for the key rates make use of the base 2 logarithm.

\section{Protocol outline}\label{Sec:Coherent_Security}

Let us follow Ref. \cite{chung2025} and provide a brief, generic description of a prepare-and-measure QKD protocol that is valid under the MEAT \cite{arqand2025MEAT}.

\paragraph{Preparation and measurement.}
For each round $j \in \{1,\ldots,n\}$, Alice prepares a state $\ket{\psi^x}_{A'}$ according to the value $x$ of a random variable $X_j$, and sends it to Bob. Bob receives the state, applies one (of possibly many) POVM, and records the outcome in a register $Y_j$. Then, Alice (possibly also Bob) designates the round as either a key generation or parameter estimation round, with probabilities $p^\mathrm{K}$ and $1-p^\mathrm{K}$, respectively. Additionally, they build a reference key register $R_j$ for the key generation. 

\paragraph{Public announcement.} Alice and Bob publicly disclose information about the registers $[XY]^{(n)}$, with all such communication stored in a register $I^{(n)}$. They use this register to construct $C^{(n)}$, which contains all the public classical information associated to parameter estimation rounds.

\paragraph{Sifting} Using $[RI]^{(n)}$, Alice builds her key register\footnote{In the case of continuous variable QKD, it should be understood that $S^{(n)}$ and ${C'}^{(n)}$ are actually held by Bob due to the reverse nature of information reconciliation for continuous variable QKD.} $S^{(n)}$. Optionally, Alice may also create a register ${C'}^{(n)}$ with private information\footnote{Such as a virtual tomography, for implementations as in Ref. \cite{baeuml2023security}.} for statistical checks.

\paragraph{Parameter estimation} Alice and Bob use $C^{(n)}$ (optionally, Alice uses ${C'}^{(n)}$) to perform parameter estimation in order to bound Eve's information. If the process reveals that Eve holds excessive information, they abort.

\paragraph{Information reconciliation.} Alice and Bob use an error correction subroutine, which might follow a reverse reconciliation, in order to remove the discrepancies between their private keys. Alice and Bob then certify that their keys are the same using a universal$_2$ hash function. If their hashed keys disagree, they abort.
\paragraph{Privacy amplification.} Alice and Bob distil a shorter, fully private shared key recorded as $K_A$ and $K_B$ respectively, by using another universal$_2$ hash function. \\

For future reference, we define $\mc{C}$ as the alphabet for the registers $C^{(n)}$. To specify the values observed by Alice and Bob during parameter estimation rounds, let us also define $\tilde{\mc{C}}:= \mc{C}\backslash\{\perp\}$. On similar grounds, we denote Alice's key registers $S^{(n)}$ as defined on an alphabet $\mc{S}$, where $S_j=\perp$ indicates a non-key round, and $\tilde{\mc{S}} := \mc{S}\backslash \{\perp\}$ for non-trivial key values. Finally, we assume that, within a given session of the protocol, no public announcements are needed before all measurements are performed, which allows an simplified analysis via Ref. \cite[Corollary 4.2]{arqand2025MEAT}. Furthermore, this approach allows an interpretation of the protocol in terms of an entanglement-based picture \cite{Curty2004,Ferenczi2012}, where Alice prepares an entangled state $\ket{\psi}_{AA'}$ in every round.

\subsection{Security and completeness}

The security of QKD protocols is customarily defined via the $\varepsilon$-security framework by Renner \cite{renner2008security}, which establishes that Alice's and Bob's final key will be $\varepsilon_\mathrm{s}$-secure if their shared key is $\varepsilon_\mathrm{s}$-close to being completely decoupled from Eve's total information. Quantitatively, this is expressed according to the formula
\begin{equation} \label{eq:secure}
    \mathrm{Pr}_\rho[\Omega] T( \rho_{K_A K_BE|\Omega},\bar{\tau}_{K_A K_BE|\Omega} )\leq \varepsilon_\mathrm{s},
\end{equation}
for a given $\varepsilon_\mathrm{s}>0$. Here, $E$ represents Eve's information, and $\rho_{K_AK_BE}$ is the final state shared by the three parties after the execution of the protocol. The set $\Omega$ represents the event of not aborting at any step of the protocol, and $\bar{\tau}_{K_AK_BE}$ is the ideal output state, where the keys share ideal local randomness and no correlations with respect to any other registers. Formally, we define
\begin{align}\label{eq:IdealCCQ}
    \bar{\tau}_{K_A K_B E | \Omega} &= \tau_{K_A K_B} \otimes \rho_{E|\Omega} \\
    &= \frac{1}{d}\sum_{x=0}^{d-1} \pro{x,x}_{K_A K_B}\otimes \rho_{E|\Omega}.
\end{align}
Condition \eqref{eq:secure} can be simplified according to a triangle inequality into the statements of correctness and secrecy, such that we may define an  $\varepsilon_\mathrm{EC}$-correct implementation according to
\begin{equation}\label{eq:correctness}
    \mathrm{Pr} [(K_A \neq K_B) \wedge \Omega] \leq \varepsilon_\mathrm{EC}.
\end{equation}
In particular, correctness can be enforced by Alice sending her key to Bob after applying a universal$_2$ hash function once error correction has finished, such that the total length of the hashed key does not exceed $\lceil\log(1/\varepsilon_\mathrm{EC})\rceil$ bits (see Ref. \cite{Tupkary2024} for a complete derivation).

Secrecy is induced by the fact that Eve does not hold excessive information about one of the keys. Namely, a protocol is $\varepsilon_\mathrm{PA}$-secret whenever
\begin{equation} \label{eq:secret}
    \mathrm{Pr}_\rho[\Omega] T( \rho_{K_AE|\Omega}, \bar{\tau}_{K_AE|\Omega} ) \leq \varepsilon_\mathrm{PA}.
\end{equation}
We defer the discussion about the security of our approach to Appendix \ref{App:Proof_Th}, which will be based on a proof against general attacks via the MEAT \cite{arqand2025MEAT}.

Although the above definitions indeed certify that the final key remains unknown against any quantum adversary, it is still necessary to ensure that an execution will not frequently abort. This is given by the notion of completeness, which certifies that any honest implementation of the protocol (i.e., when Eve performs no attack) will be successful with very high probability. Formally, a protocol is $\varepsilon_\mathrm{cm}$-complete whenever 
\begin{equation}\label{eq:CompletenessBound}
1-\Pr^\mathrm{h}[\Omega]\leq\varepsilon_\mathrm{cm},
\end{equation}
where the superscript $(\cdot)^h$ denotes that the implementation is honest. In general, completeness must be enforced for both parameter estimation and error correction. For the former, this is done via $C^{(n)}$ and a reference probability distribution $p\in \mathbb{P}_\mathcal{C}$ on the alphabet $\mathcal{C}$, which constitutes the vector of probabilities expected by Alice and Bob in such an honest implementation. We define $\hat{\Sigma}_{\Omega}$ as the set of all accepted frequency estimators according to $\{p(c)\}_{c \in \mc{C}}$, and a scalar $\delta>0$. Explicitly, 
\begin{align}\label{eq:AcceptanceSet}
    \hat{\Sigma}_{\Omega} = \left\{\freq_{c^{(n)}} \in \mathbb{P}_{\mc{C}}: \lVert \mathrm{freq}_{c^{(n)}}  -p \rVert_1 \leq \delta \right\}.
\end{align}
During parameter estimation, Alice and Bob will verify that their observed finite distribution $\freq_{c^{(n)}}$ is actually contained in $ \hat{\Sigma}_{\Omega}$, or abort otherwise. This analysis is carried out according to the margin of confidence for the statistical estimators; given an honest implementation (such that we may assume that the rounds are iid), there exists a probability $\bar{\varepsilon}_\mathrm{PE}> 0 $ such that

\begin{equation}\label{eq:Completeness1}
    \Pr^\mathrm{h}\left[\lVert \mathrm{freq}_{c^{(n)}}  -p \rVert_1 \leq \delta \right] \geq 1 - \bar{\varepsilon}_\mathrm{PE}.
\end{equation}
In order to bound this probability quantitatively, we make use of the following concentration bound.

\begin{proposition}[Bretagnolle-Huber–Carol inequality] \emph{\cite{BHC_Ineq,vanderVaart2023} }Let $p$ be a multinomial distribution defined on an alphabet $\mc{X}$. After $n \in \mathbb{N}$ iid trials, the empirical frequency distribution $\hat{p}$ verifies
\begin{equation}
    \Pr\left[ \lVert \hat{p}  -p \rVert_1 \leq \delta \right] \geq 1 - 2^{|\mc{X}|} e^{-n \delta^2/2}.
\end{equation}
\end{proposition}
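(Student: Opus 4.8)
The plan is to reduce the $\ell_1$ deviation to a deviation of set probabilities and then combine a union bound with a Chernoff-type tail estimate. First I would invoke the variational characterisation of the total-variation distance,
\[
\lVert \hat p - p\rVert_1 \;=\; 2\max_{A\subseteq\mc{X}}\bigl(\hat p(A) - p(A)\bigr),
\]
where $\hat p(A):=\sum_{x\in A}\hat p(x)$ and $p(A):=\sum_{x\in A}p(x)$, and the maximum is attained at the (data-dependent) set $A^\star:=\{x\in\mc{X}:\hat p(x)\ge p(x)\}$. This identity follows at once by splitting $\sum_x\lvert\hat p(x)-p(x)\rvert$ into its positive and negative parts and using $\sum_x\hat p(x)=\sum_x p(x)=1$. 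Hence the event $\{\lVert \hat p - p\rVert_1>\delta\}$ is contained in $\bigl\{\exists\,A\subseteq\mc{X}:\ \hat p(A)-p(A)>\delta/2\bigr\}$.

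Next, for each \emph{fixed} subset $A\subseteq\mc{X}$ the random variable $n\,\hat p(A)=\sum_{i=1}^n\mathds{1}[X_i\in A]$ is a sum of $n$ i.i.d.\ $\{0,1\}$-valued variables with mean $p(A)$, so the one-sided Hoeffding inequality gives $\Pr[\hat p(A)-p(A)>\delta/2]\le e^{-2n(\delta/2)^2}=e^{-n\delta^2/2}$. A union bound over the $2^{\lvert\mc{X}\rvert}$ subsets of $\mc{X}$ then yields
\[
\Pr\bigl[\lVert \hat p - p\rVert_1>\delta\bigr]\;\le\;\sum_{A\subseteq\mc{X}}\Pr\bigl[\hat p(A)-p(A)>\delta/2\bigr]\;\le\;2^{\lvert\mc{X}\rvert}e^{-n\delta^2/2},
\]
and taking complements gives the claim.

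The only step requiring genuine care is that the maximising set $A^\star$ is itself a function of the sample, so Hoeffding's bound cannot be applied to $A^\star$ directly; the union bound is precisely the device that lets us pay a factor $2^{\lvert\mc{X}\rvert}$ to control all data-independent subsets simultaneously, and thereby dominate the random one. (One could trim this to $2^{\lvert\mc{X}\rvert-1}$ since $A$ and $\mc{X}\setminus A$ cannot both exhibit a positive deviation, but that refinement is unnecessary here.) Everything else is routine bookkeeping.
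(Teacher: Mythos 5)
The paper states this proposition by citation only (to \cite{BHC_Ineq,vanderVaart2023}) and does not supply a proof, so there is no in-paper argument to compare against. Your proposal is the standard derivation of the Bretagnolle--Huber--Carol inequality and it is correct: the identity $\lVert \hat p - p\rVert_1 = 2\max_{A\subseteq\mc{X}}(\hat p(A)-p(A))$ (obtained by splitting into the positive and negative deviation sets and using that both $\hat p$ and $p$ sum to one), a one-sided Hoeffding bound for each fixed $A$ with deviation $\delta/2$ giving $e^{-2n(\delta/2)^2}=e^{-n\delta^2/2}$, and a union bound over the $2^{\lvert\mc{X}\rvert}$ subsets. Your remark about the necessity of the union bound (because the maximising set $A^\star$ is data-dependent) is exactly the right caveat, and your note that the constant could be trimmed to $2^{\lvert\mc{X}\rvert-1}$ is a genuine, if immaterial, refinement. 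No gaps.
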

Provided that this proposition holds in an implementation, and identifying the right-hand side with the one of \eqref{eq:Completeness1}, the step of parameter estimation will be successful.
Hence, by splitting the nonabortion event as $\Omega = \Omega_\mathrm{PE}\cap \Omega_\mathrm{EC}$ (i.e. success at both parameter estimation and error correction), we can certify via the union bound that the protocol is $(\bar{\varepsilon}_\mathrm{PE}+\bar{\varepsilon}_\mathrm{EC})$-complete. This conclusion assumes the observation of an $\bar{\varepsilon}_\mathrm{EC}$-complete information reconciliation\footnote{Information reconciliation is exclusively performed through the classical channel, which cannot be attacked by Eve beyond wiretapping. Therefore, it can be assumed that its implementation is always honest.}, which we further develop in Section \ref{sec:error correction}.

\subsection{Finite-size secret key}

Using the definitions of Sections \ref{Sec:Prolegomena} and \ref{Sec:Coherent_Security}, we characterize the final secret key achievable via the following result.

\begin{theorem}\label{Th:MainTheorem}
    Let  $n\in \mathbb{N}$, $\sigma_A \in \mc{D}(A)$ and $\tilde{\mc{M}}:[AB]_j \to [SCI]_j$ a quantum channel concatenated $n$ times to form the protocol. Let further $C$  a classical register defined on an alphabet $\mc{C}$ and $\Omega$ the nonabortion event for the protocol.
    Then, for a choice of coefficients $\varepsilon_\mathrm{EC},\varepsilon_\mathrm{PA},\bar{\varepsilon}_\mathrm{PE}, \bar{\varepsilon}_\mathrm{EC}>0$, the protocol is $\varepsilon_\mathrm{EC}$-correct, $\varepsilon_\mathrm{PA}$-secret and $(\bar{\varepsilon}_\mathrm{PE} + \bar{\varepsilon}_\mathrm{EC})$-complete providing a binary key whose length verifies
    \begin{align} \label{eq:TheoremLength}
    \ell \geq&  \, n h^\uparrow_{\alpha} 
    -  \frac{\alpha}{\alpha-1} \log \left(\frac{1}{\varepsilon_\mathrm{PA}} \right)- \mathrm{leak}_\mathrm{EC} + 2,
    \end{align}
    where $\alpha \in (1,2)$; $\mathrm{leak}_\mathrm{EC}$ is the information lost to Eve during an $\varepsilon_\mathrm{EC}$-correct and $\bar{\varepsilon}_\mathrm{EC}$-complete error correction, and
    \begin{equation}\label{eq:mainopt}
    h^\uparrow_{\alpha} \geq \inf_{{q}\in \hat{\Sigma}_{\Omega}} \inf_{\omega\in \mc{D}(AB)}  \inf_{\psi\in \mc{D}(RAB)}
      \frac{\alpha}{\alpha - 1} D_\mathrm{KL}({q}\|\tilde{\mc{M}}(\omega)_{C}) +  q(\perp) D_{\gamma} (\g(\omega)\|\z (\psi))
    \end{equation}
    where $\gamma = \alpha/(2 \alpha -1) \in (2/3, 1)$, $\Tr_B[\omega_{AB}] = \sigma_A$, and $\hat{\Sigma}_{\Omega}$ is the set defined in \eqref{eq:AcceptanceSet} given a reference distribution $p \in \mathbb{P}_\mc{C}$ and $\delta \geq 0$ verifying $\bar{\varepsilon}_\mathrm{PE} \geq 2^{|\mc{C}|}  e^{-n \delta^2/2}$. The maps $\mathcal{G}:AB\to RAB$ and $\mathcal{Z} : RAB \to RAB$ denote the CPTP coherent and pinching maps involved in a single-round key distillation.
\end{theorem}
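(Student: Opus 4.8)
The plan is to derive Theorem~\ref{Th:MainTheorem} by combining two substantial ingredients---privacy amplification through the R\'enyi leftover hashing lemma \cite{Dupuis23RenyiHashLemma} and a single application of the marginal-constrained entropy accumulation theorem \cite{arqand2025MEAT}---with the two comparatively routine conditions of correctness and completeness. Correctness is enforced as recalled below \eqref{eq:correctness}: Alice appends a universal$_2$ hash of her reconciled key of length $\lceil\log(1/\varepsilon_\mathrm{EC})\rceil$, so that $\Pr[(K_A\neq K_B)\wedge\Omega]\leq\varepsilon_\mathrm{EC}$ as in \cite{Tupkary2024}; together with the secrecy established below, this gives, through the triangle inequality relating it to correctness and secrecy, the security bound \eqref{eq:secure}. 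Completeness follows by writing $\Omega=\Omega_\mathrm{PE}\cap\Omega_\mathrm{EC}$ and a union bound: in an honest, iid run the Bretagnolle--Huber--Carol inequality stated above, applied to the multinomial statistics of $C^{(n)}$ with a $\delta$ chosen so that $\bar\varepsilon_\mathrm{PE}\geq 2^{|\mc{C}|}e^{-n\delta^2/2}$, yields $\Pr^\mathrm{h}[\Omega_\mathrm{PE}]\geq 1-\bar\varepsilon_\mathrm{PE}$, while $\Pr^\mathrm{h}[\Omega_\mathrm{EC}]\geq 1-\bar\varepsilon_\mathrm{EC}$ is the assumed completeness of the reconciliation step, so the protocol is $(\bar\varepsilon_\mathrm{PE}+\bar\varepsilon_\mathrm{EC})$-complete.

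For secrecy, the next step is to apply the R\'enyi leftover hashing lemma \cite{Dupuis23RenyiHashLemma} to the post-reconciliation state $\rho$ on $S^{(n)}C^{(n)}I^{(n)}E$: it certifies $\varepsilon_\mathrm{PA}$-secrecy of a hashed key of any length
\[
\ell \leq H^\uparrow_\alpha(S^{(n)}|C^{(n)}I^{(n)}E)_{\rho|\Omega} - \mathrm{leak}_\mathrm{EC} - \frac{\alpha}{\alpha-1}\log\frac{1}{\varepsilon_\mathrm{PA}} + 2,
\]
where $-\mathrm{leak}_\mathrm{EC}$ absorbs, via the chain rule for the conditional sandwiched R\'enyi entropy, the error-correction transcript (syndrome and correctness hash) becoming part of Eve's classical side information, and $\alpha\in(1,2)$ is precisely the range in which this lemma---and the accumulation step below---applies. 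Comparing with \eqref{eq:TheoremLength}, it then remains to show $H^\uparrow_\alpha(S^{(n)}|C^{(n)}I^{(n)}E)_{\rho|\Omega}\geq n\,h^\uparrow_\alpha$ with $h^\uparrow_\alpha$ equal to the optimal value of \eqref{eq:mainopt}.

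This is the core of the argument, handled by the MEAT \cite{arqand2025MEAT} in the form valid when no public announcement precedes any measurement \cite[Corollary 4.2]{arqand2025MEAT}: applied to the $n$-fold concatenation of $\tilde{\mc M}$, it lower bounds the $n$-round entropy by $n$ times the infimum over the acceptance set $\hat\Sigma_\Omega$ of \eqref{eq:AcceptanceSet} of an associated single-round min-tradeoff function, and what is left is to recognise that function as the quantity minimised in \eqref{eq:mainopt}. First, passing to the entanglement-based picture \cite{Curty2004,Ferenczi2012} replaces Alice's preparations by a fixed source and imposes the marginal constraint $\Tr_B[\omega_{AB}]=\sigma_A$ on the single-round state $\omega\in\mc D(AB)$. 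Second, the outer infimum over $q$ is simply the domain of the min-tradeoff function, the frequency vectors with $\lVert q-p\rVert_1\leq\delta$. Third, splitting the single-round output into the parameter-estimation register $C$---whose value $\perp$ marks a key-generation round, of probability $q(\perp)$---and the key register $S$, the classical, multinomial nature of the parameter-estimation data turns the consistency requirement linking $q$ and $\omega$ into the variational penalty $\frac{\alpha}{\alpha-1}D_\mathrm{KL}(q\|\tilde{\mc M}(\omega)_C)$, while the key-round contribution is the single-round conditional R\'enyi entropy of $S$ given Eve, weighted by $q(\perp)$. Fourth, that conditional entropy is re-expressed with the Winick--L\"utkenhaus--Coles construction adapted to sandwiched R\'enyi \cite{winick2018reliable,chung2025}: since Eve's system is available only as a purification of $\omega$, it is moved to the Alice--Bob side, where the coherent map $\g$ is an isometry and the pinching $\z$ realises the key measurement, producing the tractable quantity $\inf_{\psi\in\mc D(RAB)}D_\gamma(\g(\omega)\|\z(\psi))$.

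The delicate point, and the step I expect to be the main obstacle, is keeping the two R\'enyi orders straight. The leftover hashing lemma and the accumulation theorem operate at the hashing order $\alpha\in(1,2)$, which is also where the classical parameter-estimation term picks up its prefactor $\alpha/(\alpha-1)$; but the quantum term of \eqref{eq:mainopt} is evaluated at the reduced order $\gamma=\alpha/(2\alpha-1)$, forced by the duality relation $\tfrac1\gamma+\tfrac1\alpha=2$ obeyed by the sandwiched R\'enyi conditional entropies of pure states (equivalently $\tfrac{\gamma}{1-\gamma}=\tfrac{\alpha}{\alpha-1}$, with $\gamma$ decreasing from $1$ to $2/3$ as $\alpha$ ranges over $(1,2)$), which is invoked when the Eve-side entropy demanded by hashing is replaced by a computation on the purifying Alice--Bob side. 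Making the whole chain close---the duality substitution for the key term, the MEAT chain-rule bookkeeping that produces the $D_\mathrm{KL}$ penalty with its prefactor, and the conditioning on $\Omega$---with exactly the constants appearing in \eqref{eq:TheoremLength} and \eqref{eq:mainopt} is where the real work lies. The remaining checks are routine by comparison: that the penalty is a genuine Kullback--Leibler divergence, the statistics being classical and iid, rather than a general $D_\alpha$; that the three infima in \eqref{eq:mainopt} may be evaluated in any order and the domain behaves well as $\delta\to0$; and that restricting the second argument to the form $\z(\psi)$ with $\psi\in\mc D(RAB)$ loses nothing, up to support conditions handled later by facial reduction in Section~\ref{sec:ComputeSKR}.
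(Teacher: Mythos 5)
Your architecture matches the paper's exactly: R\'enyi leftover hashing lemma \cite{Dupuis23RenyiHashLemma} for secrecy, a chain rule to split off the error-correction transcript, the MEAT \cite[Corollary 4.2]{arqand2025MEAT} for the $n$-round bound, Bretagnolle--Huber--Carol for completeness, and the Tomamichel--Berta--Hayashi duality at $\gamma=\alpha/(2\alpha-1)$ (equivalently $1/\gamma+1/\alpha=2$, $\gamma/(1-\gamma)=\alpha/(\alpha-1)$) to pass from Eve's side to the $\g,\z$ formulation. That is the paper's route, and your identification of the two non-routine ingredients and their division of labour is correct.

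There is, however, one genuine bookkeeping gap in the way you propose to stitch the two main lemmas together, and it would make the chain fail as written. Secrecy as defined in \eqref{eq:secret} carries a factor $\Pr[\Omega]$, so the leftover hashing lemma is invoked with $T(\cdot)=\varepsilon_\mathrm{PA}/\Pr[\Omega]$ rather than $\varepsilon_\mathrm{PA}$; this produces an extra $+\frac{\alpha}{\alpha-1}\log\frac{1}{\Pr[\Omega]}$ in the length bound, which your intermediate inequality
\[
\ell \leq H^\uparrow_\alpha(S^{(n)}|C^{(n)}I^{(n)}E)_{\rho|\Omega} - \mathrm{leak}_\mathrm{EC} - \frac{\alpha}{\alpha-1}\log\frac{1}{\varepsilon_\mathrm{PA}} + 2
\]
omits. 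Correspondingly, the MEAT does not give $H^\uparrow_\alpha(S^{(n)}|C^{(n)}I^{(n)}E)_{\rho|\Omega}\geq n\,h^\uparrow_\alpha$, as you then claim ``remains to be shown'' --- it only gives the weaker $H^\uparrow_\alpha(\cdots)\geq n\,h^\uparrow_\alpha-\frac{\alpha}{\alpha-1}\log\frac{1}{\Pr[\Omega]}$. As stated, neither of your two intermediate bounds is actually provable, and chaining them would not work. What saves the argument --- and is the point you need to make explicit --- is that the two $\log\frac{1}{\Pr[\Omega]}$ terms appear with opposite signs and cancel exactly, which is precisely what the paper's proof does (eq.~\eqref{eq:lenghtKey_v1} carries the $+\frac{\alpha}{\alpha-1}\log\frac{1}{\Pr[\Omega]}$ term, and the MEAT contributes the matching $-\frac{\alpha}{\alpha-1}\log\frac{1}{\Pr[\Omega]}$). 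You also elide the data-processing inequality step under the key-distillation isometry $V=\sum_s\ket{s}_S\otimes Z^s_R$ (it is an inequality, not an identity: $D_\gamma(V\sigma V^\dagger\|\id_S\otimes\psi)\geq D_\gamma(\sigma\|V^\dagger[\id_S\otimes\psi]V)$), but that is a smaller omission and you flag the region of the argument correctly.
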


 Additionally, we provide the following lower bound, which greatly simplifies the computation of the secret key rate at the cost of a slightly suboptimal yield.

\begin{corollary}\label{Th:MainCorollary}
    Given the conditions stated in Theorem \ref{Th:MainTheorem}, the bound $h^\uparrow_\alpha \geq h^\downarrow_\alpha$ holds at \eqref{eq:TheoremLength}, where
\begin{equation}\label{eq:fastmainopt}
     h^\downarrow_\alpha = \inf_{q \in \hat{\Sigma}_\Omega } \inf_{\omega \in \mc{D}(AB)} \frac{\alpha}{\alpha -1} D_\mathrm{KL}({q}\|\tilde{\mc{M}}(\omega)_{C}) + q(\perp) D_\beta (\mc{G}(\omega)\|\zg{(\omega)} ),
\end{equation}
with $\beta = 1/\alpha \in (1/2, 1)$.
\end{corollary}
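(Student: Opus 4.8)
The summand $\tfrac{\alpha}{\alpha-1}D_\mathrm{KL}(q\|\tilde{\mc M}(\omega)_C)$ and the outer infima over $q\in\hat\Sigma_\Omega$ and over $\omega\in\mc D(AB)$ with $\Tr_B[\omega_{AB}]=\sigma_A$ are identical in \eqref{eq:mainopt} and \eqref{eq:fastmainopt}, and $q(\perp)\geq 0$. Hence, using the bound $h^\uparrow_\alpha\geq[\text{RHS of \eqref{eq:mainopt}}]$ from Theorem~\ref{Th:MainTheorem} together with \eqref{eq:TheoremLength}, the whole claim reduces to the single pointwise inequality
\[
  \inf_{\psi\in\mc D(RAB)} D_\gamma(\g(\omega)\|\z(\psi))\;\geq\;D_\beta(\g(\omega)\|\zg(\omega)),\qquad \omega\in\mc D(AB),
\]
with $\gamma=\alpha/(2\alpha-1)\in(2/3,1)$ and $\beta=1/\alpha\in(1/2,1)$. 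So all the content is in this bound between sandwiched Rényi divergences; everything else is bookkeeping inherited from Theorem~\ref{Th:MainTheorem}.

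To prove it I would mirror the strategy behind the lower bound of \cite{chung2025}: identify the left-hand side with minus a conditional sandwiched Rényi entropy $H^\uparrow_\gamma$ of the coherent post-key state, the infimum over $\psi$ playing the role of the optimization over the second argument in the definition of $H^\uparrow_\gamma$ — the reference-key register produced by $\g$ being the conditioned-upon system and $\z$ enforcing its classicality. One then invokes the purification duality for sandwiched Rényi conditional entropies together with the matching chain rule: duality turns $H^\uparrow_\gamma$ on one bipartition into $-H^\downarrow$ at the Rényi-conjugate exponent on the complementary one, and the chain rule then peels off precisely the term $D_\beta(\g(\omega)\|\zg(\omega))$, the remaining ``key-free'' contribution being nonnegative and discarded. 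The exponents are tuned so that this composition closes (note $1/\gamma=2-1/\alpha=2-\beta$), and the fact that one gets an inequality rather than an identity is exactly the place where optimality is surrendered, namely dropping that nonnegative leftover term (equivalently, a data-processing step under the pinching $\z$).

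The crux is this middle step. One must verify that the inner minimization in \eqref{eq:mainopt} genuinely coincides with the asserted $H^\uparrow_\gamma$ — in particular that, once the register over which $\psi$ ranges is viewed as the purifying system, the state on the relevant bipartition is pure so that the duality relation applies, and that no constraint tying $\psi$ to $\omega$ is quietly lost — and then push the exponent arithmetic through the chain rule so that $D_\beta$ at $\beta=1/\alpha$, and not at some other exponent, is exactly what survives. A more self-contained alternative that avoids quoting the duality and chain-rule machinery is to attack the pointwise inequality head-on through the variational (Hölder-type) representations of $D_\gamma$ and $D_\beta$, valid since both exponents lie in $(1/2,1)$: lower-bound $\inf_\psi D_\gamma(\g(\omega)\|\z(\psi))$ by a feasible, suboptimal choice of variational variable built from $\zg(\omega)$, and check by a direct manipulation of powers of the pinched operator that what results is $D_\beta(\g(\omega)\|\zg(\omega))$. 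This turns the problem into an elementary, if tedious, operator computation, but one still has to track the exponents carefully. Either way, once the pointwise inequality is in hand the corollary follows immediately.
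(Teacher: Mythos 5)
Your reduction to the pointwise inequality
\[
  \inf_{\psi\in\mc D(RAB)} D_\gamma(\g(\omega)\|\z(\psi))\;\geq\;D_\beta(\g(\omega)\|\zg(\omega))
\]
is logically sound as a sufficient condition, but it is not what the paper does, and it is a genuinely harder route. The paper does not compare the two lower bounds at all: it derives $h^\downarrow_\alpha$ \emph{independently} as a second, coarser lower bound on the same conditional entropy $H^\uparrow_\alpha(S|\tilde E)_{\nu_{|\bot}}$ that feeds into the MEAT. Concretely, the paper's chain is: $H^\uparrow_\alpha(S|\tilde E)\geq\bar H^\uparrow_\alpha(S|\tilde E)$ (sandwiched dominates Petz, Tomamichel et al.\ Cor.~4), then the Petz duality $\bar H^\uparrow_\alpha(S|\tilde E)=-H^\downarrow_\beta(S|RAB)$ with $\beta=1/\alpha$ on the complementary marginal of the purification (Tomamichel et al.\ Thm.~3), then the \emph{down-arrow} entropy is by definition a single divergence against the marginal, $D_\beta(\rho_{SRAB}\|\mathds1_S\otimes\rho_{RAB})$, with no optimization over a reference state, and finally a single data-processing step under $X\mapsto V^\dagger X V$ replaces $\rho_{SRAB}\to\g(\omega)$ and $\mathds1_S\otimes\rho_{RAB}\to\zg(\omega)$. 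The reason the second argument becomes $\zg(\omega)$, evaluated at $\omega$ itself, is exactly that the down-arrow definition fixes the reference to the marginal — there is no $\psi$ to optimize over. Nothing about the $\gamma$-expression is invoked.

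The gap in your approach is that the pointwise inequality you reduce to does not follow from the inequalities established in the paper. Theorem~\ref{Th:MainTheorem}'s chain only gives $\inf_\psi D_\gamma(\g(\omega)\|\z(\psi))\leq H^\uparrow_\alpha(S|\tilde E)$ via a data-processing step, and the corollary's chain gives $D_\beta(\g(\omega)\|\zg(\omega))\leq H^\uparrow_\alpha(S|\tilde E)$; two quantities bounded above by the same thing cannot be ordered against each other. To make your route work you would need the \emph{converse} of the first data-processing step, i.e.\ that $\inf_\psi D_\gamma(\g(\omega)\|\z(\psi))$ actually \emph{equals} $H^\uparrow_\alpha(S|\tilde E)$ — the paper explicitly notes that this equality is proved in \cite{chung2025} but does not reproduce it, and does not rely on it for either result. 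Your sketch also concedes this weak point ("the crux is this middle step"): identifying the infimum with a conditional entropy of a pure tripartite state requires precisely that converse, and a Hölder-type variational argument would equally have to reproduce it. So while your outer bookkeeping (cancelling the common $D_\mathrm{KL}$ term and the outer infima, using $q(\perp)\geq 0$) is fine, the core step is left unproven and is harder than what the corollary actually needs. The cleaner move is to forget the $\gamma$-expression entirely, go back to $H^\uparrow_\alpha(S|\tilde E)$, and run the Petz-duality chain sketched above.
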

We defer the security proof of Theorem \ref{Th:MainTheorem} to Appendix \ref{App:Proof_Th}.

\subsection{Error correction cost} \label{sec:error correction}

To conclude the theoretical framework, we define the information leaked during error correction. Albeit this step is carried through an authenticated classical channel, there is no universal, information-theoretic bound on the length of $\mathrm{leak}_\mathrm{EC}$. Some implementations can indeed characterize said length \cite{Renes_2012,renner2008security}, although they require algorithms that are challenging to apply. 

As an efficient alternative, Alice and Bob can impose a limit to the total information exchanged during error correction before executing the protocol, and verify the success of the implementation by using a universal$_2$ hash function. Taking this approach,  $\mathrm{leak}_\mathrm{EC}$ can be reasonably estimated via the Slepian-Wolf bound \cite{SlepianWolf}, defining an  $\varepsilon_\mathrm{EC}-$correct and $\varepsilon_\mathrm{EC}-$complete implementation as
\begin{align}\label{eq:EC_P2P}
    \mathrm{leak}_\mathrm{EC} \leq p^\mathrm{s} n f H(S|Y) + \left\lceil \log \left(\frac{1}{\varepsilon_\mathrm{EC}} \right) \right\rceil ,
\end{align}
where $p^\mathrm{s}$ denotes the sifting probability, i.e., the fraction of key rounds that are not lost due to no-clicks and/or postselection, and thus require to be error-corrected. The conditional Shannon entropy $H(S|Y)$ is provided by Alice's key register conditioned on Bob's (after removing the $\{\perp\}$ symbol from said registers), which can be analytically calculated from their measurements. Furthermore, $\lceil \log \left(1/\varepsilon_\mathrm{EC} \right) \rceil $ denotes the cost in bits of using the verification hash, and $f\geq 1$ represents the error correction efficiency, where $f = 1$ constitutes the ideal case (known as the Shannon limit).

\subsection{Non-trace-preserving maps}\label{sec:tracepreserving}

Although in Theorem \ref{Th:MainTheorem} the coherent map $\g$ is assumed to be trace-preserving, we often need to model non-detections or postselection, which leads to trace-non-increasing maps. To address this, we extend such maps into trace-preserving ones by adding a symbol $\{\bot\}$ corresponding to non-detection or discarded events, similar to Ref. \cite[Corollary 9]{kamin25MEATsecurity}. 

Let $\g' :AB \to \tilde{R}AB$ be the original map with register $\tilde{R}$ defined on an alphabet $\tilde{\mc{R}}$ as embedded on an extended register $R$ on $\mc{R} = \{\bot\}\cup\tilde{\mc{R}}$. We then consider its trivial extension $\g_\top : AB \to RAB$, given explicitly by $\g_\top(\omega) = (V\otimes \id_{AB})\g'(\omega)(V^\dagger \otimes \id_{AB})$ where $V = \sum_{r \in \tilde{\mc{R}}} \ket{r}_{R}\bra{r}_{\tilde{R}}$ is an isometry. Analogously, for the map $\z' : \tilde{R}AB \to \tilde{R}AB$ we define its extension $\z_\top : RAB\to RAB$.
 Our goal now is to construct additional maps $\g_\bot$ and $\z_\bot$ such that $\g := \g_\top + \g_\bot$ and $\z := \z_\top + \z_\bot$ are trace-preserving and satisfy the orthogonality relations
 \begin{equation} \label{eq: ortoghonality relations}
     \g_\top(\omega)\g_\bot(\omega') = \z_\top(\psi)\z_\bot(\psi')=\g_\top(\omega)\z_\bot(\psi')=\z_\top(\psi)\g_\bot(\omega') = 0,\quad\forall \omega,\omega',\psi,\psi'.
 \end{equation}
This construction is not unique, and a generic choice is
 \begin{gather}
     \g_\bot(\omega) := [\Tr(\omega)-\Tr(\g_\top(\omega))]\pro{\bot}_R\otimes \frac{1}{d_{AB}}\id_{AB}, \\
     \z_\bot(\psi) := (\pro{\bot}_R\otimes \id_{AB}) \psi (\pro{\bot}_R\otimes \id_{AB}).
 \end{gather}
Using the orthogonality relations in \eqref{eq: ortoghonality relations} on these maps, the expression $\Psi_\gamma(\g(\omega),\z(\psi))$ used in \eqref{eq:mainopt} decomposes as
 \begin{equation}
    \Psi_\gamma(\g(\omega),\z(\psi)) = \Psi_\gamma(\g_\top(\omega),\z_\top(\psi)) + \Psi_\gamma(\g_\bot(\omega),\z_\bot(\psi)),
\end{equation}
such that we can split the optimization into two smaller cones. In addition, the fact that $\z_\bot\circ\g_\bot(\omega) = \g_\bot(\omega)$ allows a further simplification for the expression $\Psi_\beta(\g(\omega),\z\circ\g(\omega))$ used in \eqref{eq:fastmainopt}:
 \begin{equation}
    \Psi_\beta(\g(\omega),\z\circ\g(\omega)) = \Psi_\beta(\g_\top(\omega),\z_\top\circ\g_\top(\omega)) + \Tr(\omega)-\Tr(\g_\top(\omega)),
\end{equation}
which implies that for the lower bound $h^\downarrow_\alpha$, we never need to construct $\g_\bot$ and $\z_\bot$ explicitly.

\section{Computing the key rate}\label{sec:ComputeSKR}

The problem of computing the secret key rate thus reduces to solving the optimization problem in \eqref{eq:mainopt} (or  \eqref{eq:fastmainopt}). It consists of a Kullback-Leibler divergence and a Rényi relative entropy. The Kullback-Leibler divergence is straightforward to optimize: it is a jointly convex function that can be tackled by standard convex optimization techniques. Moreover, it can be handled by conic solvers that support non-symmetric cones, such as Hypatia \cite{coey2022solving} and QICS \cite{QICS}, which makes the optimization particularly effective and convenient.

The challenging term is the sandwiched Rényi relative entropy. It is also jointly convex, and thus in principle it can also be tackled by standard convex optimization techniques, such as the Frank-Wolfe algorithm \cite{frankwolfe56}. However, its arguments $\g(\omega), \z(\psi)$ often have eigenvalues that are constrained to be equal to zero. This makes the derivatives undefined, which causes the optimization to become unstable and fail. Indeed, previous attempts suffered from numerical instability and could only handle particular cases \cite{chung2025, kamin25MEATsecurity}. Technically, we say that the problem is not strictly feasible and therefore Slater's condition is not satisfied. There is a standard technique to handle this, consisting of reformulating the problem in terms of the support of the matrices, known as facial reduction \cite{drusvyatskiy2017}. It has been successfully applied in the case of von Neumann entropies \cite{hu2022, Lorente2025quantumkey}.

\subsection{Facial reduction}

Before going further, note that the Rényi relative entropy is a composition of the function $\Psi_\alpha$ with the logarithm. Since the logarithm is a monotonic function, it is enough to optimize over $\Psi_\alpha$, and we will do so for simplicity.

What we need is to reformulate the function $\Psi_\alpha(\g(\omega),\z(\psi))$ in terms of the supports of $\omega\in \HH^p_\succeq, \; \psi\in \HH^m_\succeq,\; \g(\omega)\in \HH^m_\succeq,$ and $ \z(\psi)\in \HH^m_\succeq$. First of all, let us consider the support of $\omega$. In general, there are several constraints acting on it, which might constrain some of its eigenvalues to be zero. We write then $\omega = V\omega'V^\dagger$ for some isometry $V\in \C^{p\times q}$ such that $\omega'$ is not constrained to be singular. On the other hand, the only constraint acting on $\psi$ is that it has trace one, so it can already be full rank; $\psi\in \HH^m_\succ$, and no reduction is needed. Let us define then the strictly positive maps $\gmap, \zmap$ such that 
\begin{subequations}
    \begin{gather}
        \g(V\omega'V^\dagger) = W_\g\gmap(\omega')W_\g^\dagger, \label{eq:greduction} \\
        \z(\psi) = W_\z\zmap(\psi)W_\z^\dagger, \label{eq:zreduction}
    \end{gather}
\end{subequations} 
for some isometries $W_\g\in \C^{m\times k}, \; W_{\z}\in \C^{m\times n}$. Consequently, we have that $\Psi_\alpha(\g(\omega),\z(\psi)) = \Psi_\alpha(W_\g\gmap(\omega')W_\g^\dagger,W_\z\zmap(\psi)W_\z^\dagger)$. In order to put this in a convenient form, we will use the identities
\begin{equation}\label{eq:equivalence}
\Psi_\alpha(\rho,\sigma) = \Tr\left[(\sigma^\frac{1-\alpha}{2\alpha}\rho\sigma^\frac{1-\alpha}{2\alpha})^\alpha\right] = \Big\|\sigma^\frac{1-\alpha}{2\alpha}\rho^\frac12\Big\|^{2\alpha}_{2\alpha} = \Tr\left[(\rho^\frac12\sigma^\frac{1-\alpha}{\alpha}\rho^\frac12)^\alpha\right].
\end{equation}
which then imply that
\begin{equation}
    \Psi_\alpha(W_\g\gmap(\omega')W_\g^\dagger,W_\z\zmap(\psi)W_\z^\dagger) = \Big\|\zmap(\psi)^\frac{1-\alpha}{2\alpha}W_\z^\dagger W_\g \gmap(\omega')^\frac12\Big\|^{2\alpha}_{2\alpha}
\end{equation}
Letting then $S = W_\z^\dagger W_\g$, which as we prove in Appendix \ref{sec:isometry} is an isometry, we define

\begin{equation}\label{eq:truepsi}
    \Psihat(\omega',\psi) := \Big\|\zmap(\psi)^\frac{1-\alpha}{2\alpha}S \gmap(\omega')^\frac12\Big\|^{2\alpha}_{2\alpha},
\end{equation}
and the facial reduction of the sandwiched Rényi relative entropy is done. 

It is also necessary to do the facial reduction of $\Psi_\alpha(\g(\omega),\z \circ \g(\omega))$. It proceeds along the same lines, with the difference that we define the strictly positive map $\zgmap$ such that 
\begin{equation}
    \zg(V\omega' V^\dagger) = W_{\z\g}\zgmap(\omega')W_{\z\g}^\dagger, \label{eq:zgreduction}
\end{equation}
for some isometry $W_{\z\g} \in \C^{m \times r}$, we define $S = W_{\z\g}^\dagger W_\g$, and end up with
\begin{equation}\label{eq:fastpsi}
    \Psihat(\omega') := \Big\|\zgmap(\omega')^\frac{1-\alpha}{2\alpha}S \gmap(\omega')^\frac12\Big\|^{2\alpha}_{2\alpha}.
\end{equation}

\subsection{The Rényi QKD cones}

The expressions \eqref{eq:truepsi} and \eqref{eq:fastpsi} are suitable for convex optimization. They can, for example, be used to implement a Frank-Wolfe algorithm with facial reduction. However, our goal here is to implement a conic method, and for that we need to represent these expressions via convex cones:
\begin{equation}\label{eq: qkd cone}
    \mc{K}^{\alpha, \gmap, \zmap, S}_\text{RényiQKD} = \cl\left\{(u,\rho,\sigma) \in \R \times \mathbb H^q_\succeq \times \mathbb H^m_\succ; u \ge s_\alpha \Psihat(\rho, \sigma) \right\},
\end{equation}
suitable for optimizing expression \eqref{eq:mainopt}, and
\begin{equation}\label{eq: fast qkd cone}
    \mc{K}^{\alpha, \gmap, \zgmap, S}_\text{FastRényiQKD} = \cl\left\{(u,\rho) \in \R \times \mathbb H^q_\succ; u \ge s_\alpha \Psihat(\rho) \right\},
\end{equation}
suitable for optimizing the lower bound \eqref{eq:fastmainopt}. The latter we denominate fast cone as it has lower dimension and is thus considerably faster to optimize over. In both expressions $s_\alpha = \operatorname{sign}(\alpha-1)$ is a convenience factor to make the cones convex for all $\alpha \ge 1/2$ \cite{frank2013}. Note, however, that for computing key rates the relevant ranges are $(2/3,1)$ and $(1/2,1)$, respectively, as given by Theorem \ref{Th:MainTheorem} and Corollary \ref{Th:MainCorollary}.

In order to implement optimization over these cones, the most crucial part is to find logarithmically homogeneous self-concordant barriers for them (see definition \ref{Def: LHSCB}).
We use the natural barriers
\begin{gather}
    f(u, \rho, \sigma) = -\log(u - s_\alpha \Psihat(\rho, \sigma)) - \logdet(\rho) - \logdet(\sigma) \label{eq:truebarrier},\\
    f(u, \rho) = -\log(u - s_\alpha \Psihat(\rho)) - \logdet(\rho) \label{eq:fastbarrier},
\end{gather}
which can be proven to be self-concordant for $\alpha \in [1/2, 1)$ following the argument of Ref. \cite[Theorem 1.2]{he2025}, where it is shown that the natural barrier of the quasi-Rényi cone 
\begin{equation} \label{eq:quasi Renyi cone}
    \mc{K}^\alpha_\text{quasi-Rényi} = \cl\left\{(u,\rho,\sigma) \in \R \times \mathbb H^n_\succeq \times \mathbb H^n_\succeq; u \ge s_\alpha \Psi_\alpha(\rho, \sigma) \right\}
\end{equation}
is self-concordant.\footnote{One might hope to be able to use this cone directly for optimizing \eqref{eq:truepsi} or \eqref{eq:fastpsi}, but this is not possible, as they are not expressible in terms of $\Psi_\alpha(\rho, \sigma)$. Analogously, the von Neumann QKD cone is not expressible in terms of the relative entropy cone \cite{Lorente2025quantumkey}.} The key difference in our proof is that we need to do a composition with the strictly positive map $X \mapsto S^\dagger X S$ when using Ref. \cite[Theorem 2.1]{HiaiTr2}. Note that this map is strictly positive because $S$ is an isometry. The complete proof is shown in Appendix \ref{sec:selfconcordance}.

It is also necessary to compute the gradient, Hessian, and third-order derivatives of the barriers to run the conic solver. This computation also follows along the lines of Ref. \cite{he2025}, except that we need to find an analogue of the identities \eqref{eq:equivalence} where all the operators involved are full-rank. It is given by
\begin{subequations}\label{eq:goodequivalence}
\begin{align}
     \Psihat(\rho,\sigma) &= \Tr\Big[\big( (S^\dagger\zmap(\sigma)^\frac{1-\alpha}{\alpha}S)^\frac12\gmap(\rho) (S^\dagger\zmap(\sigma)^\frac{1-\alpha}{\alpha}S)^\frac12\big)^\alpha\Big] \label{eq:szs} \\
     &= \quad \Big\|\zmap( \sigma )^{\frac{1-\alpha}{2\alpha}}S \gmap(\rho)^\frac12\Big\|^{2\alpha}_{2\alpha} \\
     &= \Tr\Big[\big(\gmap(\rho)^\frac12 S^\dagger \zmap(\sigma)^\frac{1-\alpha}{\alpha}S\gmap(\rho)^\frac12\big)^\alpha\Big].
\end{align}
\end{subequations}
Note that the right-hand side of Equation \eqref{eq:szs} simplifies to $\Tr\Big[\big(\zmap(\sigma)^\frac{1-\alpha}{2\alpha}S\gmap(\rho) S^\dagger \zmap(\sigma)^\frac{1-\alpha}{2\alpha}\big)^\alpha\Big]$, but the simpler expression is not usable as the matrix inside the trace has null eigenvalues that make the derivatives ill-defined. The complete derivation is shown in Appendix \ref{sec:derivatives}. Finally, we need to compute a starting point for the optimization of each cone, which is shown in Appendix \ref{sec:startingpoint}.

We implemented these cones in the programming language Julia \cite{JuliaLang}, as an extension of the solver Hypatia \cite{coey2022solving}. It is interfaced with the modeller JuMP \cite{Lubin2023} and uses a generic type system, allowing the computation of key rates with precision greater than 64 bits when desired. The code is part of the ConicQKD.jl package \cite{git}.

\subsection{Explicit conic formulation}\label{App:NumFormulation}

We are now in a position to explicitly reformulate the optimization problems \eqref{eq:mainopt} and \eqref{eq:fastmainopt} as conic programs, suitable for our conic solver. First of all, let us rewrite \eqref{eq:mainopt} with all its constraints explicitly
\begin{equation}\label{eq:SecondMEATmain}
\begin{gathered}
    h^\uparrow_{\alpha} \geq \min_{q,\omega,\psi}   \frac{\alpha}{\alpha - 1} D_\mathrm{KL}({q}\|\tilde{\mc{M}}(\omega)_{C}) +  q(\perp) D_{\gamma} (\g(\omega)\|\z(\psi)) \\
    \mathrm{s.t.} \quad \omega_{AB}, \psi_{RAB} \succeq 0,  \\
    \Tr_B[\omega_{AB}] = \sigma_{A}, \; \Tr[\psi_{RAB}] = 1,  \\
    \sum_{c \in {\mc{C}}} q(c) = 1,\;q \ge 0, \\
    \;  \lVert p - q \rVert_1 \leq \delta.
\end{gathered}
\end{equation}
Here, we point out that the Kullback-Leibler divergence and the 1-norm can be expressed via the well-known cones \cite{coey2022solving} 
\begin{gather}
    \mc{K}_\mathrm{KL} = \cl\{(u,q,p) \in \R \times \R_>^d \times \R_>^d ; u \geq D_\mathrm{KL}(q\|p)\},  \\
    \mc{K}_1  = \left\{(u,w) \in \R \times \R^d : u \geq \lVert w \rVert_1 \right\}.
\end{gather}
In conjunction with the RényiQKD cone \eqref{eq: qkd cone}, this allows us to rewrite \eqref{eq:SecondMEATmain} as 
\begin{equation}\label{eq:Conic_1}
\begin{gathered}
    h^\uparrow_{\alpha} \geq 
    \min_{h_\mathrm{KL},u, q, \omega, \psi}  \frac{\alpha}{\alpha - 1} h_\mathrm{KL} + \frac{ q(\perp)}{\gamma -1} \log \left(s_\gamma u\right) \\ %
    \mathrm{s.t.} \quad \Tr_B[\omega_{AB}] = \sigma_{A}, \; \Tr[\psi_{RAB}]=1,\; \sum_{c \in {\mc{C}}} q(c) = 1 \\
    (u,\omega_{AB},\psi_{RAB}) \in \mc{K}^{\gamma, \gmap, \zmap, S}_\text{RényiQKD} \\
    \left(h_\mathrm{KL}, q, \mc{\tilde{M}}(\omega_{AB})_C \right) \in \mc{K}_\mathrm{KL} \\
    (\delta,p - q) \in \mc{K}^1.
\end{gathered}
\end{equation}
Note that the factor $s_\gamma = -1$ is necessary to cancel out the factor $s_\gamma$ inside the definition of the RényiQKD cone, and positivity constraints are implicitly enforced by the cones.

We are still not done because the objective function must be affine. The logarithm term can be handled as before by using the logarithm cone\footnote{This cone is better known as the exponential cone, formulated in terms of the exponential function, but both formulations are equivalent.}
\begin{equation}
    \mc{K}_{\log} = \cl\left\{ (u, v, w) \in \mathbb{R} \times \mathbb{R}_> \times \mathbb{R}_> : u \leq v \log\left(w/v\right) \right\},
\end{equation}
but the term $q(\bot)$ multiplying the logarithm requires more care. One possibility is to do a nested optimization, where the inner is a conic optimization with fixed $q(\bot)$, and the outer optimizes over the single parameter $q(\bot)$ via a non-convex method. Another, more practical option is to use the bound on the probabilities at the acceptance set
\begin{equation}
    p(\bot)  - q(\perp) \le \lVert p - q \rVert_1 \leq \delta,
\end{equation}
to lower-bound $q(\perp)$ at the objective function, resulting in the relaxed problem
\begin{equation}\label{eq:Conic_2}
\begin{gathered}
    h^\uparrow_{\alpha} \geq 
    \min_{h_\mathrm{KL},h_\mathrm{QKD}, u, q, \omega, \psi}  \frac{\alpha}{\alpha - 1} h_\mathrm{KL} + \frac{p(\perp)-\delta}{\gamma-1} h_\mathrm{QKD} \\ %
    \mathrm{s.t.} \quad \Tr_B[\omega_{AB}] = \sigma_{A}, \; \Tr[\psi_{RAB}]=1,\; \sum_{c \in {\mc{C}}} q(c) = 1 \\
    (u,\omega_{AB},\psi_{RAB}) \in \mc{K}^{\gamma, \gmap, \zmap, S}_\text{RényiQKD} \\
    \left(h_\mathrm{KL}, q, \mc{\tilde{M}}(\omega_{AB})_C \right) \in \mc{K}_\mathrm{KL} \\
    (h_\mathrm{QKD},1,s_\gamma u) \in \mc{K}_{\log} \\
    (\delta,p - q) \in \mc{K}^1.
\end{gathered}
\end{equation}
Numerical experiments show that the effect of this relaxation on the key rate is small but noticeable for small block sizes. A more elegant solution can be applied to the FastRényiQKD cone \eqref{eq: fast qkd cone}, following the argument from Ref. \cite[Appendix F]{kamin25MEATsecurity}. Let $q = (q_{\tilde{C}}, q(\bot))$ and $\tilde{\mc{M}}(\omega)_{C} = (\tilde{\mc{M}}(\omega)_{\tilde{C}}, p(\bot))$ where ${\tilde{C}}$ is defined on the alphabet $\tilde{\mc{C}}$, so that
\begin{equation}
    D_\mathrm{KL}({q}\|\tilde{\mc{M}}(\omega)_{C}) = D_\mathrm{KL}(q_{\tilde{C}}\|\tilde{\mc{M}}(\omega)_{\tilde{C}}) + q(\bot)\log\de{\frac{q(\bot)}{p(\bot)}}.
\end{equation}
Then write $\widehat\Psi_\beta = \frac{p(\bot)\widehat\Psi_\beta}{q(\bot)}\frac{q(\bot)}{p(\bot)}$, so that
\begin{equation}
    q(\bot)\log(\widehat\Psi_\beta) = q(\bot)\log\de{\frac{ p(\bot)\widehat\Psi_\beta}{q(\bot)}} + q(\bot)\log_2\de{\frac{q(\bot)}{p(\bot)}}
\end{equation}
Then the objective of \eqref{eq:fastmainopt} can be rewritten as
\begin{equation}
    \frac{\alpha}{\alpha -1} D_\mathrm{KL}({q}\|\tilde{\mc{M}}(\omega)_{C}) + \frac{q(\perp)}{\beta-1} \log_2 \widehat\Psi_\beta(\omega) = \frac{\alpha}{\alpha -1}\de{D_\mathrm{KL}(q_{\tilde{C}}\|\tilde{\mc{M}}(\omega)_{\tilde{C}}) - q(\bot)\log\de{\frac{p(\bot)\widehat\Psi_\beta}{q(\bot)}}}
\end{equation}
and using the same cones as before gives us the exact conic formulation:
\begin{equation}\label{eq:Conic_3}
\begin{gathered}
    h^\downarrow_{\alpha} \geq 
    \min_{h_\mathrm{QKD},h_\mathrm{KL},u,q,\omega}  \frac{\alpha}{\alpha - 1} \de{h_\mathrm{KL} - h_\mathrm{QKD}} \\ %
    \mathrm{s.t.} \quad \Tr_B[\omega_{AB}] = \sigma_{A}, \; \sum_{c \in {\mc{C}}} q(c) = 1 \\
    (u,\omega_{AB}) \in \mc{K}^{\beta, \gmap, \zgmap, S}_\text{FastRényiQKD} \\
    \left(h_\mathrm{KL}, q_{\tilde{C}}, \tilde{\mc{M}}(\omega)_{\tilde{C}} \right) \in \mc{K}_\mathrm{KL} \\
    (h_\mathrm{QKD},q(\bot), p(\bot) s_\beta u) \in \mc{K}_{\log}  \\
    (\delta,p - q) \in \mc{K}^1.
\end{gathered}
\end{equation}
\section{Examples}\label{Sec:Experiments}

In this section we define diverse QKD protocols, the necessary tools to adapt them into the formalism of the MEAT, and compute key rates for them using our technique. We exclusively use the FastRényiQKD cone \eqref{eq: fast qkd cone}, because it is significantly faster and more stable than the RényiQKD cone \eqref{eq: qkd cone}, and as illustrated in Figure \ref{fig:MUB2}, the key rates they produce are essentially indistinguishable.

Unless explicitly stated otherwise, we assume the following values for the tolerance coefficients
\begin{align}
    \varepsilon_\mathrm{EC} &= 10^{-11},\\
    \varepsilon_\mathrm{PA} &= 9 \times 10^{-11}, \\
    \bar{\varepsilon}_\mathrm{PE} &= 9 \times 10^{-11}, \label{eq:eps honest value}
\end{align}
and set $\bar{\varepsilon}_\mathrm{PE} = 2^{|\mc{C}|}  e^{-n \delta^2/2}$, which defines the value of $\delta$. On the other hand, we define for each protocol the corresponding entanglement-based picture, the state $\ket{\sigma}_{AA'}$ prepared by Alice in every round, the channel representing Alice and Bob's measurements, and the maps $\g$ and $\z$ representing the key distillation. Additionally, the entanglement-based approach enforces a constraint on Alice's marginal $\sigma_A$ in the minimization (since said register is not accessible to Eve), which may require a further facial reduction on the state. 

\subsection{Qubit BB84}

Following the protocol outlined in Ref. \cite{kamin25MEATsecurity}\footnote{We note, in particular, that this protocol outline is compatible with Ref. \cite[Corollary 4.2]{arqand2025MEAT} as explained in Ref. \cite[Section IV.B]{kamin25MEATsecurity}.}, we consider an implementation of the BB84 protocol  \cite{BB84} that accounts for no-click events. Under the source-replacement scheme, BB84 can be reformulated as an entanglement-based protocol \cite{E91,BBM92}, in which Alice prepares the maximally entangled state

\begin{align}
    \ket{\sigma}_{AA'} =& \frac{1}{\sqrt{2}} (\ket{0}\otimes \ket{0} + \ket{1}\otimes \ket{1})_{AA'},
\end{align}
so that her marginal state is maximally mixed, $\sigma_A = \mathds{1}/2$. To model imperfections, we consider that the state sent by Alice in the subsystem $A'$ is subjected to a depolarizing channel
\begin{align}
    \mathcal{E} (\rho) = v \rho + (1-v) \Tr[\rho] \frac{\mathds{1}}{d},
\end{align}
where $v \in [0,1]$ describes the visibility. Subsequently, Alice sends her state through a lossy channel with transmittance $\eta = 10^{-\frac{\chi}{10}}$, where $\chi$ represents the total channel transmittance in dB.

In each round, Alice chooses with probability $p^K$ to measure her subsystem $A$ in the $Z$-basis, which will be used exclusively for key generation rounds. With probability $1-p^K$, she instead measures in the $X$-basis, which is reserved for parameter estimation. The measurements performed by Alice to generate the state are given by the operators

\begin{align}
    & P^{X,0} = \frac{1-\pkey}{2}\begin{pmatrix}
        1 & 1 \\
        1 & 1
    \end{pmatrix}, \quad P^{X,1} = \frac{1-\pkey}{2}\begin{pmatrix}
        1 & -1 \\
        -1 & 1
    \end{pmatrix}, \\
    & P^{Z,0} = \pkey \begin{pmatrix}
        1 & 0 \\
        0 & 0
    \end{pmatrix}, \quad P^{Z,1} = \pkey
    \begin{pmatrix}
        0 & 0 \\
        0 & 1
    \end{pmatrix}.
\end{align}
Bob performs the same measurements with the same probabilities as Alice, but also includes the possibility of observing a no-click event denoted by the symbol $\perp$. Thus, Bob's measurement operators are given by
\begin{align}
     Q^{X,0}  &= \frac{1-\pkey}{2} \begin{pmatrix}
        1 & 1 & 0 \\
        1 & 1 & 0 \\
        0 & 0 & 0
    \end{pmatrix}, \quad Q^{X,1} = \frac{1-\pkey}{2} \begin{pmatrix}
        1 & -1 & 0 \\
        1 & 1 & 0 \\
        0 & 0 & 0
    \end{pmatrix},\\
    Q^{Z,0}  &= \pkey
    \begin{pmatrix}
        1 & 0 & 0 \\
        0 & 0 & 0 \\
        0 & 0 & 0
    \end{pmatrix}, \quad Q^{Z,1} = \pkey \begin{pmatrix}
        0 & 0 & 0 \\
        0 & 1 & 0 \\
        0 & 0 & 0
    \end{pmatrix},\\
     Q^{Z, \perp} \;  &= \; \pkey \begin{pmatrix}
        0 & 0 & 0 \\
        0 & 0 & 0 \\
        0 & 0 & 1 
    \end{pmatrix}
    , \quad
    Q^{X, \perp} =  (1-\pkey)\begin{pmatrix}
        0 & 0 & 0 \\
        0 & 0 & 0 \\
        0 & 0 & 1  
    \end{pmatrix}
\end{align}
Once all quantum states have been sent and measured, Alice and Bob proceed with the public announcements. She announces her basis choice $Z$ for key rounds, while for parameter estimation rounds she reveals her basis choice $X$ together with her measurement outcome (either 0 or 1).  Bob's announcement depends on Alice's choice of measurement; if she indicates a key generation round and Bob measures in the $Z$-basis, he discloses only whether a detection occurred, using the symbol $I=\top$ for a successful detection and $I=\perp$ for a non-detection event. If their bases do not coincide, the round is discarded and Bob sets $I=\perp$ as well. For parameter estimation rounds, he announces both his basis, which could be either $Z$ or $X$, and the corresponding outcome, which may be 0, 1, or $\perp$.

From these announcements, Alice and Bob construct the register $C$ with alphabet $\mc{C} = \{\perp\} \cup (\{0,1\} \times \{(Z,0),(Z,1),(Z,\perp),(X,0),(X,1),(X,\perp) \} )$ where $C=\perp$ correspond to a key generation round, independent of Bob's basis choice. We also define the restricted alphabet $\tilde{\mc{C}} = \mc{C} \setminus \{\perp\}$. On the other hand, Alice stores her key in the secret register $S$. Whenever she announces a generation round and Bob measures in the $Z$ basis with a detection event, she sets $S\in\{0,1\}$ according to her measurement outcome; for no-clicks, she assigns $S=\perp$. Given this description, the channel is as follows 
\begin{align} \label{eq:RateBoundingChannel_BB84}
    \tilde{\mc{M}}_{AB\to SCI}(\cdot) =&  \sum_{s=0,1} \Tr_{AB} \left[P_A^{Z,s} \otimes Q_B^{Z,\top} (\cdot) \right] \otimes \pro{s}_S \otimes \pro{\perp}_{C} \otimes \pro{\top}_I  \nonumber \\ 
    & + \sum_{s=0,1} \Tr_{AB} \left[P_A^{Z,s} \otimes Q_B^{Z,\bot} (\cdot) \right] \otimes \pro{\perp}_S \otimes \pro{\bot}_{C} \otimes \pro{\bot}_I   \nonumber \\ 
    & + \sum_{s=0,1} \sum_{b=0,1,\perp} \Tr_{AB} \left[P_A^{Z,s} \otimes Q_B^{X,b} (\cdot) \right] \otimes \pro{\perp}_S \otimes \pro{\bot}_{C} \otimes \pro{\bot}_I   \nonumber \\ 
    &+ \sum_{a,(y,b)\in \tilde{\mc{C}}} \Tr[P_A^{X,a} \otimes Q_B^{y,b} (\cdot)] \pro{\perp}_S \otimes \pro{a,(y,b)}_C \otimes \pro{\bot}_I .
\end{align}
From this structure, we can find the trace-preserving key map 
\begin{equation}
    \Tr_I[\g(\omega)_{ABRI}] = \g(\omega) = \g_\top(\omega) + \g_\bot(\omega),
\end{equation}
where  $\g_\top$ and $\g_\bot$ have a single Kraus operator each, namely,
\begin{align}
    G_\top &= \sqrt{\pkey} \sum_{r=0,1} \ket{r}_R  \otimes \pro{r}_A \otimes (\pro{0} + \pro{1})_B, \\
    G_\bot &=  \ket{\perp}_R \otimes \id_A \otimes \left[\sqrt{1-p^K}(\pro{0} + \pro{1})_B + \pro{\bot}_B\right].
\end{align}
In turn, the pinching map is defined as $\z(\psi) = \z_\top(\psi)+\z_\bot(\psi)$, where $\z_\top$ has Kraus operators $\pro{r}_R\otimes \id_{AB}$, with $r \in \{0,1\}$, and $\z_\bot$ has a single Kraus operator $\pro{\bot}_R\otimes \id_{AB}$. Note that these maps obey the structure introduced in Section \ref{sec:tracepreserving}, and therefore, we can use the simplifications discussed there.

For the RényiQKD cone \eqref{eq: qkd cone} we must perform the facial reduction of the maps $\g_\top, \z_\top$ and $\g_\bot, \z_\bot$. In the case of $\g_\top$, we require an isometry $W_{\g_\top}$ such that $W_{\g_\top} W_{\g_\top}^\dagger$ is a projector onto the support of $G_\top G_\top^\dagger$. A suitable choice is
\begin{equation}
    W_{\g_\top} = \sum_{r=0,1}\ket{r}_R \otimes \pro{r}_A  \otimes (\ket{0}_{B}\bra{0}_{B'} + \ket{1}_{B}\bra{1}_{B'}),
\end{equation}
where $\{\ket{0}_{B'}, \ket{1}_{B'}\}$ is an orthonormal basis. Then, $\gmap_\top$ has a single Kraus operator given by
\begin{equation}
    W_{\g_\top}^\dagger G_\top = \sqrt{p^K}\id_A \otimes (\ket{0}_{B'}\bra{0}_{B} + \ket{1}_{B'}\bra{1}_{B}),
\end{equation}
which satisfies \eqref{eq:greduction} when $V = \id$. In its turn, $\z_\top$ is strictly positive when restricted to the subspace $\{\ket{0},\ket{1}\}$ of the $R$ subsystem. The isometry that does the job is therefore
\begin{equation}
    W_{\z_\top} = (\ket{0}_{R}\bra{0}_{R'} + \ket{1}_{R}\bra{1}_{R'}) \otimes \id_{AB},
\end{equation}
where $\{\ket{0}_{R'}, \ket{1}_{R'}\}$ is an orthonormal basis. Thus, $\zmap_\top$ has Kraus operators $\ket{r}_{R'}\bra{r}_{R} \otimes \id_{AB}$ for $r \in \{0,1\}$, which can be verified to satisfy equation \eqref{eq:zreduction}. The facial reduction of $\g_\bot$ and $\z_\bot$ is trivial: we eliminate the $R$ subsystem employing $W_{\g_\bot} = W_{\z_\bot} = \ket{\bot}_R \otimes \id_{AB}$. Hence, $\gmap_\bot$ has a single Kraus operator $\id_A \otimes \Big[\sqrt{1-p^K}(\pro{0} + \pro{1})_B + \pro{\bot}_B\Big]$, and $\zmap_\bot$ has a single Kraus operator $\bra{\bot}_R \otimes \id_{AB}$. All together, we use the facial reduction of the maps $\g_\top, \z_\top$ and $\g_\bot, \z_\bot$ with parameters $S=W_{\z_\top}^\dagger W_{\g_\top}$ and $S=W_{\z_\bot}^\dagger W_{\g_\bot} = \id_{AB}$, respectively.

For the FastRényiQKD cone \eqref{eq: fast qkd cone}, we only need the facial reduction of $\g_\top$ and $\z_\top \circ \g_\top$ (see Section \ref{sec:tracepreserving}). While the former is already done above, the latter is given by the map $\zgmap_\top$ with Kraus operators $\sqrt{p^K}\pro{r}_A \otimes (\ket{0}_{B'}\bra{0}_{B}+ \ket{1}_{B'}\bra{1}_{B})$ for $r \in \{0,1\}$. One can check that equation \eqref{eq:zgreduction} is respected with $W_{\z_\top\g_\top} = W_{\g_\top}$, and thus the parameter $S$ is given by $W_{\z_\top\g_\top}^\dagger W_{\g_\top} = \id_{AB'}$. \\

\begin{figure}[h!]
	\centering
 	\begin{tikzpicture}
		\begin{axis}[%
			scale only axis,
            ymode = log,
			xmin=0,
			xmax=32,
			ymin=1e-6,
			ymax=1e-0,
			grid=major,
			xlabel={$\chi$ (dB)},
            ylabel = {Secret key rate (bits/pulse)},
			axis background/.style={fill=white},
			legend style={at={(0.97,0.97)},legend cell align=left, align=left, draw=white!15!black, font=\footnotesize}
			]
            \addplot[black!70, mark=*] table[col sep=comma] {plot_data/bb84/bb84_NInf.csv};
        	\addlegendentry{$n \rightarrow \infty$ }
             \addplot[purple, mark=*] table[col sep=comma] {plot_data/bb84/bb84_N1e9_fast.csv};
        	\addlegendentry{$n = 10^{9}$ }
            \addplot[olive!90, mark=*] table[col sep=comma] {plot_data/bb84/bb84_N1e7_fast.csv};
        	\addlegendentry{$n = 10^{7}$}
            \addplot[teal, mark=*] table[col sep=comma] {plot_data/bb84/bb84_N1e6_fast.csv};
        	\addlegendentry{$n = 10^{6}$}
            \addplot[blue!70, mark=*] table[col sep=comma] {plot_data/bb84/bb84_N1e5_fast.csv};
        	\addlegendentry{$n = 10^{5}$}
        \end{axis}
	\end{tikzpicture}
    \caption{Finite secret key rate for qubit BB84 protocol using the FastRényiQKD cone for a different number of rounds $n$. All curves consider $v=0.97$, and $f=1.16$. The probability $p^K$ was optimized for each point according to a coarse-grained tuning, while the Rényi parameter $\alpha$ was optimized numerically.}
    \label{fig:bb84 real estimator}
\end{figure}

\begin{figure}[h!]
	\centering
 	\begin{tikzpicture}
		\begin{axis}[%
			scale only axis,
            ymode = log,
			xmin=0,
			xmax=55,
			ymin=1e-7,
			ymax=1e-0,
			grid=major,
			xlabel={Transmittance $\chi$ (dB)},
            ylabel = {Secret key rate (bits/pulse)},
			axis background/.style={fill=white},
			legend style={at={(0.97,0.97)},legend cell align=left, align=left, draw=white!15!black, font=\footnotesize}
			]
            \addplot[black!70, mark=*] table[col sep=comma] {plot_data/bb84/bb84_NInf.csv};
            \addlegendentry{$n \rightarrow \infty$}
            \addplot[purple!50, mark=*] table[col sep=comma] {plot_data/bb84/bb84_N1e9.csv};
        	\addlegendentry{$n = 10^{9}$}
            \addplot[purple!100, mark=x,mark options={solid}, dashed] table[col sep=comma] {plot_data/bb84/waterloo_N1e9.csv};
        	\addlegendentry{$n = 10^{9}$ \cite{kamin25MEATsecurity}}
            \addplot[olive!50, mark=*] table[col sep=comma] {plot_data/bb84/bb84_N1e7.csv};
        	\addlegendentry{$n = 10^{7}$}
            \addplot[olive, mark=x,mark options={solid}, dashed] table[col sep=comma] {plot_data/bb84/waterloo_N1e7.csv};
        	\addlegendentry{$n = 10^{7}$ \cite{kamin25MEATsecurity}}
            \addplot[blue!40, mark=*] table[col sep=comma] {plot_data/bb84/bb84_N1e5.csv};
        	\addlegendentry{$n = 10^{5}$}
            \addplot[blue,mark=x,mark options={solid}, dashed] table[col sep=comma] {plot_data/bb84/waterloo_N1e5.csv};
        	\addlegendentry{$n = 10^{5}$ \cite{kamin25MEATsecurity}}
        \end{axis}
	\end{tikzpicture}
    \caption{Variable-length finite secret key rate for qubit BB84 protocol using the FastRényiQKD cone (solid lines), compared with the results showcased in Ref. \cite{kamin25MEATsecurity} (dashed lines) for different numbers of rounds $n$. All curves consider $v=0.97$, and $f=1.16$. The probability $p^K$ was optimized for each point according to a coarse-grained tuning, while the Rényi parameter $\alpha$ was optimized numerically. For a fair comparison, we set the security parameters to $\varepsilon_\mathrm{PA} = \varepsilon_\mathrm{EC} = \frac{1}{2}10^{-80}$.} \label{fig:bb84comparisonWaterloo}
\end{figure} 

 Our results are shown in Figure \ref{fig:bb84 real estimator}, where we plot the secret key rates against the total transmittance $\chi$ for the qubit BB84 protocol via the FastRényiQKD cone. Here we consider $f=1.16$, different numbers of rounds $n$, visibility $v=0.97$, and optimal values for $\alpha$ and $\pkey$ in every point. We observe that our implementation reaches 14 dB of total loss for a number of rounds $n \leq 10^6$, compatible with current industrial capacity \cite{Walenta_2014}. To further validate our approach, we implement the variable-length QKD framework described in Ref. \cite{kamin25MEATsecurity}. In contrast to the fixed-length scenario discussed above, this formulation requires solving two distinct optimization problems to compute the final key rate (see Appendix \ref{App:VarLength}). As shown in Figure \ref{fig:bb84comparisonWaterloo}, this improved numerical convergence leads to slightly better key rates and an increase in maximal transmittance. Moreover, the use of our conic methods provides a neat speedup as indicated in Section \ref{Subsec:Benchmarks}.

\subsection{MUBs}

The protocol described in Ref. \cite{Araujo2023quantumkey} (originally outlined in Ref. \cite{Cerf2001,Sheridan2010}) consists of Alice and Bob using a set of $m$ MUBs to measure a state whose dimension is $d$, such that $2 \leq m \leq d+1$. This protocol is directly provided in an entanglement-based picture, where the initial state prepared by an external source is the generalized, maximally entangled state 
\begin{equation}
    \ket{\sigma}_{AA'} = \frac{1}{\sqrt{d}} \sum_{j=0}^{d-1} \ket{j}_{A} \otimes \ket{j}_{A'}.
\end{equation}
This state when distributed to Alice and Bob, evolves into an isotropic state $\sigma^\mathrm{iso}_{AA'}(v) = v{\sigma}_{AA'} + (1-v) \mathds{1}/d^2$, where Alice's marginal is always $\sigma_A = \mathds{1}/d$. Following Ref. \cite{Araujo2023quantumkey}, the key is distilled according to Alice's projective measurements on the computational basis $\{\pro{s}_A\}_{s=0}^{d-1}$. On the other hand, since no-click events are not considered in this formulation (or in a broader sense, postselection events), register $I$ becomes trivial and thus can be eliminated for this protocol.

Parameter estimation is performed by Alice using the POVMs $\{\{\Pi^{j,l} \}_{j=0}^{d-1}\}_{l=0}^{m-1}$ where $\Pi^{j,l}$ denotes the $j-$th vector of the $l-$th MUB (in particular, $l=0$ represents the computational basis). Bob's POVMs are the transposed version of Alice's. The choice of basis is defined via the probability vector
\begin{equation}
    \{p(l)\}_{l=0}^{m-1} = (\pkey, (1-\pkey)/(m-1),\ldots,(1-\pkey)/(m-1)).
\end{equation}
Hence, only rounds where both Alice and Bob use the zeroth basis will be spent for the key distillation, implying that the sifting probability is given by $p^\mathrm{s} = (\pkey)^2$. Note further that the original implementation requires only the statistics generated by Alice and Bob when their basis choices coincide, so we can merge the statistics from all other cases into a single probability outcome for parameter estimation. All in all, we define the alphabet 
$\mc{C} =  \{\perp, 1,\ldots,m\}$, such that $\tilde{\mc{C}} = \mc{C} \backslash \{\perp\}$. 
Furthermore, $\mc{S} = \{\perp,0,\ldots,d-1\}$ with $\tilde{\mc{S}} = \mc{S} \backslash \{\perp\}$ for the secret key alphabet. According to this notation and reducing the POVMs that add up to the identity, one round of the MUB protocol is described by the following channel

\begin{align} 
    \tilde{\mc{M}}_{AB\to SC}(\cdot) =& p(0)^2  \sum_{s\in\tilde{\mc{S}}}  \Tr[\pro{s}_A \otimes \mathds{1}_B (\cdot)]\pro{s}_S \otimes \pro{\perp}_{C} \nonumber \\ 
    & + \sum_{l=1}^{m-1} \sum_{j=0}^{d-1} p(l)^2 \Tr[\Pi^{j,l} \otimes (\Pi^{j,l})^T (\cdot)] \pro{\perp}_S \otimes \pro{l}_C  \nonumber\\
    & + \left[1 - \sum_{l=0}^{m-1} p(l)^2 \right] \pro{\perp}_S \otimes \pro{m}_C .\label{eq:RateBoundingChannel_MUB}
\end{align}
The key map $\g$ is given by $\mathcal {G}(\rho) = G\rho G^\dagger$, where

\begin{equation}
    G = \sum_{r=0}^{d-1} \ket{r}_{R} \otimes \pro{r}_A \otimes \mathds{1}_B.
\end{equation}
As this is an isometry, the facial reduction is trivial: $\gmap$ is the identity map, and thus equation \eqref{eq:greduction} is satisfied with $V = \id$ and $W_\g = G$. In its turn, $\mc{Z}$ is provided by the usual dephasing superoperators $Z^r =\pro{r}_R \otimes \mathds{1}_{AB}$ for $r \in \{0,\ldots,d-1\}$, and is already a strictly positive map, so equation \eqref{eq:zreduction} is satisfied with $\zmap = \z$ and $W_\z = \id$. The last parameter we need for the RényiQKD cone \eqref{eq: qkd cone} is $S = W_\z^\dagger W_\g = G$. For the FastRényiQKD cone \eqref{eq: fast qkd cone} we need in addition the facial reduction of $\zg$. It is the map $\zgmap$ with Kraus operators $\pro{r}_A \otimes \mathds{1}_{B}$ for $r \in \{0,\ldots,d-1\}$, and equation \eqref{eq:zgreduction} is satisfied with $W_{\z\g}=G$, and thus the parameter $S$ is simply $W_{\z\g}^\dagger W_\g = G^\dagger G = \id$.

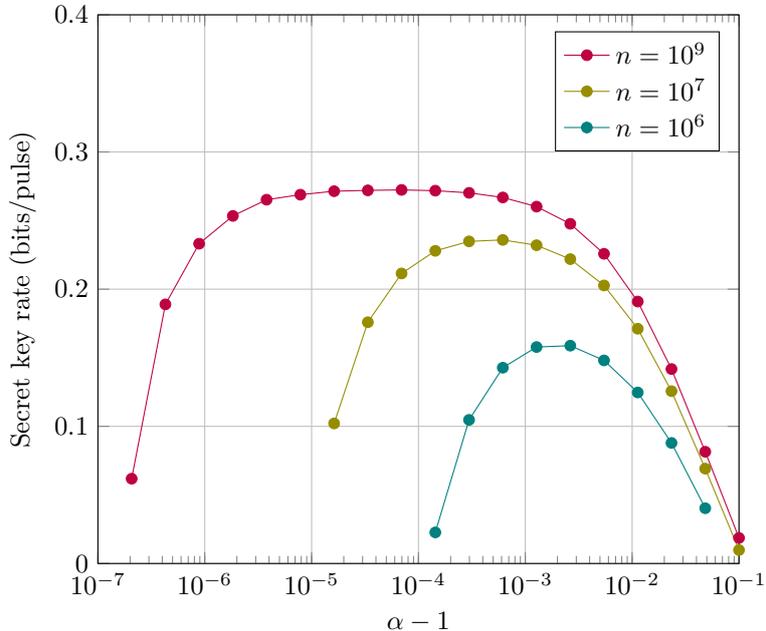
\begin{figure}[h!]
	\centering
 	\begin{tikzpicture}
		\begin{axis}[%
			scale only axis,
            xmode = log,
			xmin=1e-7,
			xmax=1e-1,
			ymin=0,
			ymax=0.4,
			grid=major,
			xlabel={$\alpha-1$},
            ylabel = {Secret key rate (bits/pulse)},
			axis background/.style={fill=white},
            legend pos=north east,
        ]
            \addplot[black!70, mark=*] table[col sep=comma]{plot_data/mub/mub_asym1.csv};
            \addlegendentry{$n \rightarrow \infty$}
            \addplot[purple, mark=*] table[col sep=comma]{plot_data/mub/mub_1e9.txt};
        	\addlegendentry{$n = 10^{9}$ }
            \addplot[olive, mark=*] table[col sep=comma] {plot_data/mub/mub_1e7.txt};
        	\addlegendentry{$n = 10^{7}$ }
            \addplot[teal, mark=*] table[col sep=comma] {plot_data/mub/mub_1e6.txt};
        	\addlegendentry{$n = 10^{6}$ }
        \end{axis}
	\end{tikzpicture}
    \caption{Finite secret key generation rate for the MUB protocol using the FastRényiQKD \eqref{eq: fast qkd cone} cone with $d=5$, $m=6$, $v = 0.9$, $f = 1.16$, $\pkey=0.5$, and different values for the number of rounds $n$ and the Rényi parameter $\alpha$.}
    \label{fig:MUB}
\end{figure} 
In Figure \ref{fig:MUB} we illustrate the secret key generation rates against the Rényi parameter $\alpha$ for diverse values of the block size $n$ with parameters $d=5$, $m=6$, and $v = 0.9$ for the FastRényiQKD cone \eqref{eq: fast qkd cone}. Overall, we observe (as is the case for the other protocols) that the optimal value for $\alpha$ does not require fine-tuning beyond the order of magnitude to yield values close to the actual maximum at the curves. Figure \ref{fig:MUB2} compares the FastRényiQKD cone \eqref{eq: fast qkd cone} with the RényiQKD cone \eqref{eq: qkd cone} for $\alpha$ varying from $1.1$ to $2$. We see that the RényiQKD cone delivers superior performance, as expected, but the advantage is diminutive and only visible for values of $\alpha$ where the bounds on the secret key rate drop below zero. We emphasize that negative key rates have no operational meaning in QKD, and we consider this regime strictly to compare the two cones. In order to deliver a fair comparison, we did not use the relaxed formulation \eqref{eq:Conic_2} for the RényiQKD cone, but the nested optimization discussed just above it.

\begin{figure}[h!]
	\centering
 	\begin{tikzpicture}
		\begin{axis}[%
			scale only axis,
			xmin=0.1,
			xmax=1,
			grid=major,
			xlabel={$\alpha-1$},
            ylabel = {Secret key rate (bits/pulse)},
			axis background/.style={fill=white},
            legend pos=north east,
        ]
            \addplot[purple, mark=*] table[col sep=comma] {plot_data/mub_comparison/mub_1e9.txt};
            \addlegendentry{FastRényiQKD}
            \addplot[purple!50, dashed, mark=*] table[col sep=comma] {plot_data/mub_comparison/mub_1e9_true.txt};
            \addlegendentry{RényiQKD}
        \end{axis}
	\end{tikzpicture}
    \caption{Lower bound on the finite secret key rate for the MUB protocol using the RényiQKD \eqref{eq: qkd cone} and FastRényiQKD \eqref{eq: fast qkd cone} cones with $d=5$, $m=6$, $v = 0.9$, $f = 1.16$, $\pkey=0.5$, $n = 10^9$, and Rényi parameter $\alpha$ varying from $1.1$ to $2$. We would like to emphasize that this region of $\alpha$ is not relevant for secret key generation, and negative key rates have no operational meaning. The purpose here is only to illustrate the difference between the two cones.}
    \label{fig:MUB2}
\end{figure} 

\subsection{DMCV}

For the QKD protocol based on a discrete-modulated continuous variable (DMCV) approach, we implement a variant \cite{baeuml2023security} of the well-known quadrature phase shift keying (QPSK) \cite{Leverrier2011,ghorai2019asymptotic,lin2019asymptotic} based on Alice sampling states with equal probabilities from a distribution of coherent states $\{\ket{i^x \gamma}\}_{x=0}^3$, and Bob implementing a full discretization for the measurement outcomes. We caution the reader to understand $S$ as the key register generated by Bob, as it is customary in CVQKD due to its reverse nature for information reconciliation\footnote{Similarly, the entropy $H(S|Y)$ in Inequality \eqref{eq:EC_P2P} should be taken with $S$ as Bob's secret key after the public announcements, and $Y$ as Alice's.}.

Under the entanglement-based picture, Alice will send for all rounds the state
\begin{equation}
    \ket{\sigma}_{AA'} = \frac{1}{2} \sum_{x=0}^3 \ket{x}_A \otimes \ket{i^x \gamma}_{A'}
\end{equation}
from which her marginal follows directly. While Alice generates her key registers via measurements on her computational basis, Bob extracts said register using a discretization according to a coarse-grained heterodyne detection
\begin{equation}
    \bar{R} = \frac{1}{\pi} \int_{\mc{R}} \pro{\gamma} d\gamma^2.
\end{equation}
Here, $\bar{R}$ is known as a region operator, and $\mathcal{R}\subseteq \mathbb{C}$ an element of a partition in $\mathbb{C}$, such that the collection of region operators completely covers the phase space, constituting a POVM. In particular, we consider the sets of region operators $\{\hat{R}^z_B\}_{z=0}^3$ and $\{\tilde{R}^z_B\}_{z=0}^5$ for key and parameter estimation rounds, respectively, defined in Ref. \cite{pascualgarcía2024} according to a modulation $(\Delta,\Delta_s)$. We refer the reader to Ref. \cite{pascualgarcía2024} for further implementation specifics.

Provided these considerations we define the alphabets $\mc{C} = \{\perp,\perp\} \cup (\{0,\ldots,3\} \times \{0,\ldots,5\})$ and $\mc{S} = \{\perp,0,\ldots,3\}$. We can now formulate the map constituting the DMCV protocol via
\begin{align} \label{eq:RateBoundingChannel2}
    \mc{M}_{AB\to SC}(\cdot) =& p^\mathrm{K} \sum_{s\in\tilde{\mc{S}}}  \Tr[\mathds{1}_A\otimes \hat{R}_B^s (\cdot)]\pro{s}_S \otimes \pro{\perp,\perp}_{C} \nonumber \\
    &+ (1-p^\mathrm{K})\sum_{(x,z)\in \tilde{\mc{C}}} \Tr[\pro{x}_A\otimes \tilde{R}_B^z (\cdot)]\pro{\perp}_S \otimes \pro{x,z}_{C}.
\end{align}
As in the MUB protocol, we have that no postselection is applied on the key distillation, therefore $I$ can be removed. Furthermore, Alice always implements the same measurement such that the sifting probability coincides with the key probability $p^\mathrm{s} = p^\mathrm{K}$. From the key generation rounds, we can build the key map $\g(\rho) = G \rho G^\dagger$ where \cite{lin2019asymptotic}
\begin{equation}
    G = \sum_{r=0}^3 \ket{r}_R \otimes \mathds{1}_A \otimes \sqrt{\hat{R}^r_B}.
\end{equation}
The $\z$ map is again provided by a set of pinching operators $Z^r=\pro{r}_R \otimes \mathds{1}_{AB}$ for $r\in \{0,\ldots,3\}$. 

In order to do the facial reduction, we first note that $\g$ is an isometry\footnote{This is not true when postselection is included, such as in Ref. \cite{kanitschar2023finite, Navarro2025}.}. Therefore $\gmap$ is the identity map, and equation \eqref{eq:greduction} is satisfied with $V = \id$ and $W_\g = G$. In its turn, $\z$ is already a strictly positive map, so equation \eqref{eq:zreduction} is satisfied with $\zmap = \z$ and $W_\z = \id$. Then, the isometry $S$ is given by $S = W_\z^\dagger W_\g = G$, and so the facial reduction for the RényiQKD cone \eqref{eq: qkd cone} is done. For the FastRényiQKD cone \eqref{eq: fast qkd cone} we need in addition the facial reduction of $\zg$, which is trivial, since $\zg$ is already strictly positive \cite{Lorente2025quantumkey}. Thus, $\zgmap = \zg$ and equation \eqref{eq:zgreduction} is satisfied with $W_{\z\g}=\id$, implying $S=G$ once again.

\begin{figure}[h!]
	\centering
 	\begin{tikzpicture}
		\begin{axis}[%
			scale only axis,
            ymode = log,
			xmin=0,
			xmax=40,
			ymin=1e-3,
			ymax=1e-0,
			grid=major,
			xlabel={$L$ (km)},
            ylabel = {Secret key rate (bits/pulse)},
			axis background/.style={fill=white},
			legend style={at={(0.97,0.97)},legend cell align=left, align=left, draw=white!15!black, font=\footnotesize}
			]
            \addplot[black!70, mark=*] table[col sep=comma] {plot_data/dmcv/dmcv_Asym.csv};
        	\addlegendentry{$n \rightarrow \infty$ }
            \addplot[red, mark=*] table[col sep=comma] {plot_data/dmcv/dmcv_1e10.csv};
        	\addlegendentry{$n = 10^{10}$}
            \addplot[purple, mark=*] table[col sep=comma] {plot_data/dmcv/dmcv_1e9.csv};
        	\addlegendentry{$n = 10^{9}$}
            \addplot[violet, mark=*] table[col sep=comma] {plot_data/dmcv/dmcv_1e8.csv};
        	\addlegendentry{$n = 10^{8}$}
            \addplot[olive, mark=*] table[col sep=comma] {plot_data/dmcv/dmcv_1e7.csv};
        	\addlegendentry{$n = 10^{7}$}
            \addplot[teal, mark=*] table[col sep=comma] {plot_data/dmcv/dmcv_5e6.csv};
        	\addlegendentry{$n = 5\times 10^{6}$}
            \addplot[blue!70, mark=*] table[col sep=comma] {plot_data/dmcv/dmcv_3e6.csv};
        	\addlegendentry{$n = 3\times 10^{6}$}
        \end{axis}
	\end{tikzpicture}
    \caption{Secret key generation rates for the DMCV protocol using the FastRényiQKD cone \eqref{eq: fast qkd cone} for different values of the number of rounds $n$, using $N_c = 10$, $\Delta = 4.0$ and $\Delta_s = 1.5$ and error correction at the Shannon limit. $\alpha$, $\pkey$ and the coherent state amplitude were optimized for each point according to a coarse-grained tuning.} \label{fig:DMCV}
\end{figure}
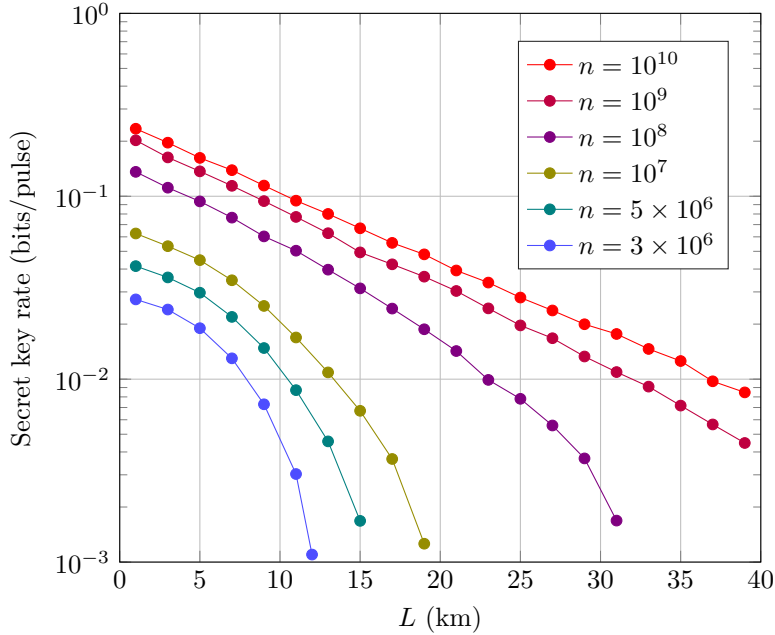 

Figure \ref{fig:DMCV} provides several finite secret key rates, using the modulation presented in Ref. \cite{pascualgarcía2024} while adjusting all other parameters to their optimal value. In particular, we observe that key rates can be generated with $n = 5 \times 10^6$ at 15 km, which is orders of magnitude better than Ref. \cite{pascualgarcía2024}, where $n = 5 \times 10^8$ was reported as the minimum block size. On the other hand, our method provides an enhanced numerical scalability with respect to other approaches \cite{Araujo2023quantumkey,Lorente2025quantumkey} and does not require a min-tradeoff linearization, which should allow a removal of the cutoff assumption without incurring in unfortunate penalties \cite{primaatmaja2024}. 

\section{Benchmarks} \label{Subsec:Benchmarks}

We conclude our numerical analysis by showcasing the performance of our algorithm in contrast with the repository Renyi-security-framework \cite{git_waterloo} and the package QICS \cite{QICS} (used in Refs. \cite{kamin25MEATsecurity} and \cite{he2025} respectively), which are also capable of optimizing the sandwiched Rényi entropy.

In Table \ref{table:conic-performance}, we provide a numerical comparison between our FastRényiQKD cone \eqref{eq: fast qkd cone} and the repository Renyi-security-framework \cite{git_waterloo} used by Ref. \cite{kamin25MEATsecurity}, which implements the Frank-Wolfe algorithm \cite{frankwolfe56,winick2018reliable}. Unlike our approach, this method needs to add an artificial depolarization noise ($\epsilon=10^{-10}$ in this specific realization) to prevent undefined gradients. The values reported represent the average iterations and computation time for 20 iterations with transmittance $\chi=8$ dB required to solve the two optimization problems to obtain the final key rate (see Appendix \ref{App:VarLength}), corresponding to the results presented in Figure \ref{fig:bb84comparisonWaterloo}. The results show that our conic method is orders of magnitude faster than the Frank-Wolfe approach.

\begin{table}[h]
\centering 
\setlength{\tabcolsep}{6pt} 
\renewcommand{\arraystretch}{1.2} 
\begin{tabular}{lcccc} 
\toprule
& \multicolumn{2}{c}{FastRényiQKD} & \multicolumn{2}{c}{Rényi-security-framework \cite{git_waterloo}} \\
\cmidrule(lr){2-3} \cmidrule(lr){4-5}
$n$ & Iterations & Time $[s]$ & Iterations & Time $[s]$ \\
\midrule
$10^9$ & 62 & 0.133 & 60 & 34.427 \\ 
$10^7$ & 42 & 0.098 & 60 & 32.945  \\ 
$10^5$ & 36 & 0.049 & 60 & 19.682 \\ 
\bottomrule
\end{tabular}
\caption{Average iterations and running time for qubit BB84 with transmittance $\chi=8$ dB and different block-sizes.}
\label{table:conic-performance}
\end{table}

\begin{table}[h]
\centering 
\setlength{\tabcolsep}{6pt} 
\renewcommand{\arraystretch}{1.2} 
\begin{tabular}{lcccccccc} 
\toprule
& \multicolumn{2}{c}{FastRényiQKD}   &  \multicolumn{2}{c}{RényiQKD} & \multicolumn{2}{c}{QICS \cite{QICS}} \\
\cmidrule(lr){2-3} \cmidrule(lr){4-5}  \cmidrule(lr){6-7} 
$N_c$ & Iterations & Time $[s]$ & Iterations & Time $[s]$ & Iterations & Time $[s]$\\
\midrule
$5$ & 69 & 15.43 & 83 & 799.7 & 53 & 1731  \\ 
$7$ & 61 & 67.34 & 77 & 3402 & 60 & 8701 \\ 
\bottomrule
\end{tabular}
\caption{Average iterations and time to optimize the cones corresponding to the DMCV protocol. For each value of $N_c$, the iterations and time are averaged over several runs with the same parameters: $L=13$, $n=10^7$, $f=1.0$, $\Delta s = 1.5$, $\Delta = 4.0$.}
\label{table:conic-performance dmcv}
\end{table}

In Table \ref{table:conic-performance dmcv}, we compare the time and iterations taken to optimize the cones corresponding to the DMCV protocol between our implementation and QICS \cite{QICS}. Since the Rényi cone in QICS does not include facial reduction, a small depolarization noise ($\epsilon=10^{-8}$) was introduced in $\g$ to avoid null eigenvalues, which would make the gradient undefined and the optimization unstable. Moreover, since QICS exploits the sparsity in the optimization of the Rényi entropy cone, we also exploited the block structure of the $\zmap$ in the FastRényiQKD and RényiQKD cones to make a fair comparison.

As we can see, our RényiQKD cone \eqref{eq: qkd cone} is considerably faster than QICS. This is due to a factor of 4 reduction in the output dimension of $\g$ caused by the facial reduction. Additionally, switching to the FastRényiQKD \eqref{eq: fast qkd cone}, which is not available in QICS, provides a dramatic speedup of two orders of magnitude with respect to the RényiQKD cone. Regarding stability, in some runs (for example, for $N_c = 7$, $n=10^7$, $L=9$) the optimization in QICS failed to converge, unless we increased the depolarization noise. We don't report benchmarks for the more relevant $N_c = 10$ case \cite{lin2019asymptotic,hu2022,kanitschar2023finite} because QICS didn't finish in a reasonable time.

\section{Conclusion}\label{Sec:Conclusion}

This work introduces the most advanced framework for QKD, combining the improved privacy amplification via Rényi hashing \cite{Dupuis23RenyiHashLemma} and the security proof provided by the MEAT \cite{arqand2025MEAT}, which admits the calculation of secret key rates under finite-size effects and general security by means of sandwiched Rényi entropies. The core technical contribution is the introduction of two convex cones that allow the fast and reliable optimization of the conditional Rényi entropy via non-symmetric conic optimization.

We have tested this framework using instances of different protocols, including the full-discrete QPSK protocol from the DMCV family \cite{baeuml2023security,pascualgarcía2024}, which is well-known for its resource-intensive numerical implementation \cite{primaatmaja2024}. In said case, we have observed secret key rates at smaller block sizes thanks to the tightness of the finite-size analysis of the MEAT. Further improvements can be achieved by using more sophisticated approaches, such as a fine-tuned postselection on Bob's measurements \cite{kanitschar2023finite, Navarro2025}. Furthermore, we have applied our method to the qubit BB84 protocol using the variable-length framework described in Ref. \cite{kamin25MEATsecurity}. We benchmarked our conic approach in these two protocols against QICS \cite{QICS} and the Frank-Wolfe algorithm \cite{frankwolfe56,winick2018reliable} used in previous implementations, respectively. We observed orders of magnitude faster computation time, as well as improved stability, and slightly better secret key rates and critical transmittances.

Regarding future work, it would be interesting to apply our conic methods to compute key rates in more complex and realistic protocols, such as DMCV without a cutoff assumption \cite{Navarro2025} or BB84 using decoy states. From a practical perspective, we note that our framework opens the possibility of performing QKD implementations using reduced block sizes and ensuring long-distance general security without the repetition-rate limitations of previous entropy accumulation techniques \cite{arqand2024generalized,metger2022security}, which paves the path for a practical, reliable, and large-scale implementation for QKD.

\section*{Acknowledgements}

We thank Lars Kamin and Ernest Tan for insightful discussions about entropy accumulation, variable-length approaches, and theoretical aspects of QKD. MN and CPG were supported by the Government of Spain (Severo Ochoa CEX2019-000910-S, FUNQIP and NextGeneration EU PRTR-C17.I1) and European Union (QSNP, 101114043). MN acknowledges funding from the European Union’s Horizon Europe research and innovation programme under the MSCA Grant Agreement No. 101081441. CPG has received funding from the European Union’s Digital Europe Programme under the project QUARTER (101091588), and from the European Innovation Council's Horizon Europe EIC Accelerator Programme under the project MIQRO (101161539).
The research of AGL, PVP, and MA was supported by the Q-CAYLE project, funded by the European Union-Next Generation UE/MICIU/Plan de Recuperación, Transformación y Resiliencia/Junta de Castilla y León (PRTRC17.11), and also by the Department of Education of the Junta de Castilla y León and FEDER Funds (Reference: CLU-2023-1-05). MA was also supported by the Spanish Agencia Estatal de Investigación, Grant No. RYC2023-044074-I.

\section*{Code availability}
The implementation of the cones introduced in this work is available in the package ConicQKD.jl \cite{git}. The results presented in Section \ref{Sec:Experiments} can be reproduced using the repository in Ref. \cite{git_examles}. The comparative results and benchmarks presented in Figure \ref{fig:bb84comparisonWaterloo} and Table \ref{table:conic-performance} were obtained using the repository in Ref. \cite{git_waterloo}, which is heavily based on the Open QKD Security package \cite{burniston_2024_14262569}. They were performed using Julia v1.11.6 and MATLAB R2024an a workstation equipped with an 11th Gen Intel Core i5-11500 and 16 GB RAM. For Table \ref{table:conic-performance dmcv}, the benchmark was performed using QICS v1.1.3 \cite{QICS} on a workstation equipped with an Apple M4 Max and 128 GB DDR5 RAM. 

\printbibliography

\appendix

\section{Marginal-constrained entropy accumulation theorem}\label{App:MEAT}
This section provides a theoretical security analysis based on the marginal-constrained entropy accumulation theorem (MEAT) \cite{arqand2025MEAT} and the required notation changes to adapt it to our approach. As a first step, we define the corresponding results of the MEAT itself, considering that no public announcements are made before all quantum signals have been measured. In addition, we apply the special implementation provided by Ref. \cite[Theorem 4.2b]{arqand2025MEAT}, in which only the classical register $C$ is considered for test rounds.

\begin{proposition} \label{prop:MEAT} \cite[Corollary 4.2, Theorem 4.2a]{arqand2025MEAT}
For each $j \in \{1,...,n\}$, take a state $\sigma^{j} \in \mc{D}(A_{j})$ and a CPTP map $\tilde{\mc{M}}_j: [AB]_{j} \to [SCI]_j$, such that $C_j$ and $I_j$ are classical. Let $\rho_{[SCI]^{(n)} \hat{E}} = \tilde{\mc{M}}_n \circ ... \circ \tilde{\mc{M}}_1 (\bar{\omega}) $ for $\bar{\omega} \in \mc{D} ([AB]^{(n)} \hat{E})$, verifying $\bar{\omega}_{A_1^{n}}= \sigma^{1}_{A_1}\otimes ... \otimes \sigma^{n}_{A_{n}}$.

Suppose furthermore that $\rho = p_\Omega \rho_{|\Omega} + (1-p_{\Omega}) \rho_{|\neg \Omega}$ for $p_\Omega \in (0,1]$ and all $C_j$ are isomorphic to a single register $C$ with alphabet $\mc{C}$. Let $S_\Omega$ be the convex set of probability distributions defined on the alphabet $\mathcal{C}$ such that any $c^{(n)}$ with nonzero probability in $\rho_{|\Omega}$ verifies that $\mathrm{freq}_{c^{(n)}} \in S_\Omega$. Then, for any $\alpha \in (1,\infty)$,
\begin{equation}\label{eq:Renyi_h}
    H^\uparrow_{\alpha}(S^{(n)}|[CI]^{(n)}\hat{E})_{\rho_{|\Omega}} \geq n h^\uparrow_{\alpha} - \frac{\alpha}{\alpha-1} \log \frac{1}{p_{\Omega}},
    \end{equation}
 where
    \begin{equation} \label{eq:OriginalMEAT}
    h^\uparrow_{\alpha} = \inf_{{q}\in S_{\Omega}} \inf_{\nu\in\Sigma_j} \frac{\alpha}{\alpha-1} D_\mathrm{KL}({q}\|{\nu}_{C}) + \sum_{c \in \mathrm{supp}({\nu}_{C})} q(c) H_\alpha^\uparrow(S| I E)_{\nu_{|c}}
    \end{equation}
provided that $E$ is a large-enough purifying register for any $[AB]_{j}$, and $\Sigma_j$ the set of all states $\tilde{\mc{M}}_j (\omega_{[AB]_{j}E})$ for some initial $\omega \in  \mc{D}([AB]_{j}E)$ such that $\omega_{A_{j}}=\sigma_{A_{j}}^{j}$. 
\end{proposition}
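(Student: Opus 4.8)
The plan is to obtain Proposition~\ref{prop:MEAT} as a direct specialisation of the marginal-constrained entropy accumulation theorem of \cite{arqand2025MEAT}; the content of the ``proof'' is therefore to fix the dictionary between their notation and ours and to check that the hypotheses of the cited statements are met, rather than to re-derive any entropic inequality.

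First I would set up the correspondence: the sequence of channels with classical outputs in \cite{arqand2025MEAT} maps onto our $\tilde{\mc{M}}_j:[AB]_j\to[SCI]_j$, with $S_j$ the key-generating register, $C_j$ and $I_j$ the classical side-information registers, and $\hat E$ the adversary's register. The point to highlight is that the factorised-marginal hypothesis $\bar\omega_{A_1^n}=\sigma^1_{A_1}\otimes\cdots\otimes\sigma^n_{A_n}$ is precisely the condition under which the \emph{marginal-constrained} version applies, so that no Markov condition on $\hat E$ is needed and the per-round optimisation is taken over states $\nu\in\Sigma_j$ with the prescribed reduced state $\sigma^j_{A_j}$. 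I would also note that the convexity of $S_\Omega$ together with the decomposition $\rho=p_\Omega\rho_{|\Omega}+(1-p_\Omega)\rho_{|\neg\Omega}$ are exactly what permits passing from the full state to the conditioned state $\rho_{|\Omega}$.

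Next I would invoke \cite[Corollary 4.2]{arqand2025MEAT}: by assumption no public announcement is made before all quantum signals have been measured, which is the structural hypothesis that lets the corollary treat $[SCI]^{(n)}$ as produced by the plain $n$-fold composition $\tilde{\mc{M}}_n\circ\cdots\circ\tilde{\mc{M}}_1$ without extra announcement bookkeeping. Applying the Rényi accumulation bound \cite[Theorem 4.2a]{arqand2025MEAT} conditioned on $\Omega$ then yields \eqref{eq:Renyi_h}, with the single-round quantity \eqref{eq:OriginalMEAT} in which the $D_\mathrm{KL}$ term encodes the acceptance constraint $\mathrm{freq}_{c^{(n)}}\in S_\Omega$ and the term $\sum_c q(c)H_\alpha^\uparrow(S|IE)_{\nu_{|c}}$ is the single-round conditional Rényi entropy optimised over $\Sigma_j$; restricting the test information to the register $C$ alone uses the specialisation \cite[Theorem 4.2b]{arqand2025MEAT}. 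I would close by remarking that $E$ is taken as a purifying register large enough that $H_\alpha^\uparrow(S|IE)$ is independent of the chosen purification, and that the admissible range $\alpha\in(1,\infty)$ is inherited directly from the cited theorem.

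The main obstacle, such as it is, is bookkeeping rather than mathematics: one must ensure that our ``no early announcements'' assumption matches verbatim what Corollary~4.2 requires, that the event $\Omega$, the acceptance set $S_\Omega$, and the per-round constraint sets $\Sigma_j$ are defined consistently with \cite{arqand2025MEAT}, and that the notation changes do not silently weaken or strengthen any hypothesis. No argument beyond \cite{arqand2025MEAT} is needed.
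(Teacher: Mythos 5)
Your proposal is correct and takes essentially the same approach as the paper: Proposition~\ref{prop:MEAT} is a cited adaptation of the MEAT rather than a result the paper re-derives, and the paper's treatment amounts to precisely the dictionary you describe, invoking Corollary~4.2 (no early announcements), Theorem~4.2a (the Rényi accumulation bound itself), and Theorem~4.2b (restricting the test information to the register $C$) of \cite{arqand2025MEAT}.
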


Here, the register $I_j$ stores the public announcements, with alphabet $\mathcal{I}$ satisfying $\mathcal{C \subseteq I}$. In particular, the equality holds whenever there are no public announcements in key generation rounds. For further reference, we introduce a change of notation by writing the probability $p_\Omega$ as $\Pr[\Omega]$, and the set $\Sigma$ as $\Sigma_\Omega$.  

For simplicity, in the following discussion we assume that all maps are identical, i.e. $\tilde{\mc{M}}_1 = \dots = \tilde{\mc{M}}_n = \tilde{\mc{M}}$,  and consequently, so is the marginal $\sigma_A$. We then divide map $\tilde{\mc{M}}$ in two channels as 
\begin{align}
    \tilde{\mc{M}}(\cdot)_{SIC} =  p(\bot)\mc{M}^\mathrm{K}(\cdot)_{SI} \otimes \pro{\perp}_C + [1-p(\bot)]\pro{\perp}_S \otimes \mc{M}^\mathrm{T}(\cdot)_{I\tilde{C}} 
\end{align}
where $\tilde{C} := C\backslash\{\perp\}$, and $\mc{M}^\mathrm{K}(\cdot)$ and $\mc{M}^\mathrm{T}(\cdot)$ are quantum channels for key generation and parameter estimation, respectively. Given this redefined channel and provided that $\nu_C = \Tr_{SI}[\tilde{\mc{M}}(\omega)_{SCI}]$, for some $\omega \in \mc{D}(AB)$, the Kullback-Leibler divergence in eq. \eqref{eq:OriginalMEAT} can be written as 
\begin{align}
    D_\mathrm{KL}({q}\|{\nu}_{C})  &= q(\perp) \log\left(\frac{q(\perp)}{p(\bot) \mc{M}^\mathrm{K}(\omega_{AB})_{|\perp} }\right) \nonumber \\
    & \quad + \sum_{c \in \tilde{\mc{C}}} q(c) \log\left(\frac{q(c)}{[1-p(\bot)]\mc{M}^\mathrm{T}(\omega_{AB})_{|c} }\right).
\end{align}
For the right-hand side in \eqref{eq:OriginalMEAT}, the sandwiched Rényi entropy can be simplified as 

\begin{equation} \label{eq:OnlyKeyRounds}
     \sum_{c \in \mathrm{supp}({\nu}_{C})} q(c) H_\alpha^\uparrow(S|I E)_{\nu_{|c}} \geq   q(\perp) H_\alpha^\uparrow(S| IE)_{\nu_{|\perp}}.
\end{equation}
Within our discussion, this relation holds as an equality because  $S = \perp$ when $C \neq \perp$. Moreover, note that Eve also holds a copy $\tilde{I}$ of the classical register $I$, which implies 
\begin{align}
    H_\alpha^\uparrow(S|IE)_{\nu_{|\perp}} = H_\alpha^\uparrow(S|\tilde{E})_{\nu_{|\perp}}=H_\alpha^\uparrow(S|\tilde{I}E)_{\nu_{|\perp}}, \label{eq:Final H(S|IE)}
\end{align}
where $E$ represents Eve's quantum side information, and $\tilde{E}:=\tilde{I}E$ all her quantum and classical information. The conditional entropy in \eqref{eq:Final H(S|IE)} can be further lower-bounded by following Ref. \cite[Supplementary material I]{chung2025}. First, note that our final state for a key round $\rho_{RABSI}$ comes as the combination of two actions: a coherent key map, acting on $AB$ to produce a key reference $R$ and a public discussion $I$; and a pinching operation that tells Alice and Bob how to distil the raw key at register $S$ according to the key reference $R$. Hence, we have
\begin{itemize}
    \item The quantum state shared by Alice and Bob $\omega_{AB}$.
    \item The state after the key isometry $\sigma_{RABI}$, and in particular its marginal $\g(\omega_{AB}) = \Tr_I[\sigma_{RABI}]$. 
    \item The final state after the key extraction, $\rho_{SRABI} = V(\sigma_{RABI})V^\dagger$ where $V$ acts on register $R$, as well as its marginal $\rho_{SRAB} =  V(\sigma_{RAB})V^\dagger$. 
\end{itemize}
In particular, $V = \sum_{s} \ket{s}_S \otimes Z^s_{R}$ where $Z_{R}$ is a projective operator, and $R$ is an intermediate key register between Alice and Bob, which may include postselection and/or removal of no-clicks. 

Let us recall, for $\alpha\geq 0$, the up-arrow sandwiched Rényi entropy
\begin{align}
    H^\uparrow_\alpha(A|B)_\rho := - \inf_{\psi \in \mc{D}(B)} D_\alpha (\rho_{AB}\|\mathds{1}_A \otimes \psi_B),
\end{align}
and define the down-arrow sandwiched Rényi entropy
\begin{align} \label{eq:DownArrowRenyi}
    H^\downarrow_\alpha(A|B)_\rho := - D_\alpha (\rho_{AB}\|\mathds{1}_A \otimes \rho_B),
\end{align}
as well as the up-arrow Petz-Rényi entropy
\begin{align}
    \bar{H}^\uparrow_\alpha(A|B)_\rho:= - \inf_{\psi \in \mc{D}(B)} \bar{D}_\alpha (\rho_{AB}\|\mathds{1}_A \otimes \psi_B)
\end{align}
where $\bar{D}_\alpha(\rho\|\psi)$ is the Petz-Rényi divergence. Using these definitions and considering $\rho_{SRABI\tilde{E}}$ as the purification of the state $\rho_{SRABI}$, we find for $\gamma = \alpha/(2 \alpha -1)$ when $\alpha \in (1/2,\infty)$ that \cite[Supplementary material I]{chung2025}
\begin{align}
     H^\uparrow_\alpha(S|\tilde{E})_{\nu_{|\bot}} &= - H^\uparrow_{\gamma} (S|RAB)_{\nu_{|\bot}} \nonumber \\ 
     &=  \inf_{\psi \in \mc{D}(RAB)} D_{\gamma} (\rho_{SRAB}\|\mathds{1}_S \otimes \psi_{RAB}) \nonumber\\
     & = \inf_{\psi \in \mc{D}(RAB)} D_{\gamma} (V \sigma_{RAB} V^\dagger  \| \id_S \otimes \psi_{RAB})  \nonumber\\
     & \geq \inf_{\psi \in \mc{D}(RAB)} D_{\gamma} (\sigma_{RAB}\| V^\dagger [\id_S \otimes \psi_{RAB}] V)\nonumber \\
     &= \inf_{\psi \in \mc{D}(RAB)} D_{\gamma} (\sigma_{RAB}\| \sum_s Z^s_{R} \psi_{RAB} Z_{R}^s )\nonumber \\
     &= \inf_{\psi \in \mc{D}(RAB)} D_{\gamma} (\g(\omega_{AB})\| \z (\psi_{RAB})). \label{eq:DivRenyiTrue}
\end{align}
Here, we used the fact that $\tilde{I} = I$, the duality relation shown in Ref. \cite[Theorem 3]{Tomamichel2014}, and the inequality arises due to the data processing inequality\footnote{Ref. \cite[Eqs. (35,40)]{chung2025} also uses a further reduction $\psi_{RAB} = \g(\psi_{AB})$ in the minimization. However, this assumption is not mathematically sound, as it unnecessarily restricts the optimization to a subset of the valid state space.}. We also note that Ref. \cite{chung2025} proves that the inequality can be set as an equality, which requires deriving an upper bound for the up-arrow Rényi entropy.

Alternatively, we can lower-bound the up-arrow Rényi entropy in \eqref{eq:Final H(S|IE)} using the down-arrow entropy \eqref{eq:DownArrowRenyi}. This eventually defines a new bound $h^\downarrow_\alpha$ at the finite-size analysis, which yields faster numerical results at the cost of a suboptimal secret key rate. This bound goes as follows
\begin{align}
    H^\uparrow_\alpha(S|\tilde{E})_{\nu_{|\bot}} &\geq \bar{H}^\uparrow_\alpha (S|\tilde{E})_{\nu_{|\bot}} \nonumber\\
    & = - H^\downarrow_\beta (S|RAB)_{\nu_{|\bot}} \nonumber\\
    & = D_\beta(\rho_{SRAB}\| \mathds{1}_S \otimes \rho_{RAB}) \nonumber\\
    & = D_\beta (V \sigma_{RAB} V^\dagger  \| \id_S \otimes \rho_{RAB}) \nonumber \\
    & \geq D_\beta (\sigma_{RAB}\| V^\dagger [\id_S \otimes \rho_{RAB}] V) \nonumber \\
    &= D_\beta (\sigma_{RAB}\| \sum_s Z^s_{R} \rho_{RAB} Z_{R}^s ) \nonumber \\
    &= D_\beta (\g(\omega_{AB})\| \zg (\omega_{AB})). \label{eq:DivFastRenyi}
\end{align}
For the first line, we used Ref. \cite[Corollary 4]{Tomamichel2014}, while in the second we again applied duality relations shown in Ref. \cite[Theorem 3]{Tomamichel2014} for $\beta =1/\alpha$ whenever $\alpha \in (0,\infty)$. The second inequality arises due to the data processing inequality, and the last line comes from calculating $\rho_{RAB} = \Tr_{SI}[\rho_{RABSI}]$. \\

\begin{remark}\label{remark:ClassCond}
At \eqref{eq:Final H(S|IE)}, it is possible to apply a classical conditioning on $\tilde{I}$ (e.g. to discard a key round due to a no-click). Using Ref. \cite[Proposition 9]{Mueller_Lennert_2013}, this induces 
    \begin{equation} \label{eq:ClassicalCondRenyi}
    H^\uparrow_\alpha(S| \tilde{I} E)_\rho =  \frac{\alpha}{1-\alpha} \log \left( \sum_{i \in \tilde{\mc{I}}} p(\tilde{I}= i) 2^{\frac{1-\alpha}{\alpha} H^\uparrow_\alpha(S| E)_{\rho_{|i}}} \right).
\end{equation}

This formulation constitutes an essential step in decoy-state BB84 and variable-length formulations \cite{kamin25MEATsecurity}. Within our context, and provided the aforementioned considerations, we have that the state can be decomposed as
\begin{align}
    \rho_{SRABI} = \sum_{i \in \tilde{\mc{I}} } \g^i (\omega)_{SRAB} \otimes \pro{i}_I
\end{align}
where $\g^i$ is the unnormalized CP map applied according to the value of register $I$. Let us now expand this expression by using \eqref{eq:DivRenyiTrue}. In the following, we take shorthand notation for the normalized, sandwiched Rényi divergence \eqref{eq:QRenyiDiv}
\begin{align}
    H^\uparrow_\alpha(S|E)_{\rho_{|i}} &
    \geq  \inf_{\psi \in \mc{D}(RAB)} D_{\gamma} (\g^i(\omega_{AB})\| \z (\psi_{RAB})).\nonumber \\
    &=: \frac{1}{\gamma-1}\log \left[ \frac{\bar{\Psi}_\gamma^i(\omega)}{\Tr[\g^i(\omega)]} \right],
\end{align}
where $\bar{\Psi}_\gamma^i$ already incorporates the optimization with respect to $\psi_{RAB}$, and the action of the maps $\g^i$ and $\z$. Then,
\begin{align}
     H^\uparrow_\alpha(S| \tilde{I} E)_\rho &\geq \frac{\alpha}{1-\alpha} \log \left( \sum_{i \in \tilde{\mc{I}}} \Tr[\g^i (\omega)] \left[\frac{\bar{\Psi}_\gamma^i(\omega)}{\Tr[\g^i (\omega)] } \right]^{\frac{1}{\gamma}} \right).
\end{align}
We note that each term in the sum constitutes a perspective of the function $\bar{\Psi}_\gamma^i(\omega)^\frac1\gamma$, which for $\gamma^{-1}>1$ can be modelled using the well-known power cone 
\begin{align}
    \mc{K}_p = \{(x,y,z) \in \mathbb{R} \times \mathbb{R} \times \mathbb{R}: x^p y^{1-p} \geq |z|\}.
\end{align}
This formulation provides a tighter bound than \eqref{eq:DivRenyiTrue}. Nevertheless, it does not achieve a noticeable improvement in the secret key rate, while it greatly increases the cost of the numerical implementation.

In the case of using the bound \eqref{eq:DivFastRenyi} on the exponent, we may explicitly decompose the final expression using the normalized divergence \eqref{eq:QRenyiDiv} to find
\begin{align} 
    H^\uparrow_\alpha(S| \tilde{I} E)_\rho &\geq  \frac{1}{\beta-1} \log \left( \sum_{i \in \tilde{\mc{I}}} \Tr[\g^i (\omega)]2^{\log[\Psi^i_\beta(\omega) / \Tr[\g^i (\omega)]]} \right) \nonumber  \\
    &=  \frac{1}{\beta-1} \log \left( \sum_{i \in \tilde{\mc{I}}} \Psi^i_\beta(\omega)  \right). \label{eq:Conditioning}
\end{align}
\end{remark}

With the new notation and bounds, we redefine Proposition \ref{prop:MEAT} as

\begin{proposition}
 Let $n \in \mathbb{N}$, $\rho_{[SCI]^{(n)}\hat{E}} = \tilde{\mathcal{M}}^\otimes (\bar{\omega})$ for some $\bar{\omega} \in \mathcal{D}([AB]^{(n)}\hat{E})$ and $\mc{\tilde{M}} : AB \to SCI$ a quantum channel, and $\sigma\in \mc{D}(A)$. Given a nonabortion event $\Omega$, we have 
\begin{align}\label{eq:PropositionMEAT}
    H^\uparrow_\alpha(S^{(n)}|[CI]^{(n)} 
    \hat{E})_{\rho | \Omega} 
    \geq n h^\uparrow_\alpha - 
    \frac{\alpha}{\alpha-1} \log \frac{1}{\Pr[\Omega]}
\end{align}
where $\alpha \in (1,\infty)$ and
\begin{align}
     h^\uparrow_\alpha \geq \inf_{q \in \hat{\Sigma}_\Omega } \inf_{\omega \in \mc{D}(AB)} \inf_{\psi \in \mc{D}(RAB)} \frac{\alpha}{\alpha -1} D_\mathrm{KL}({q}\|\tilde{\mc{M}}(\omega)_{C}) + q(\perp) D_\gamma (\mc{G}(\omega)\|\z{(\psi)} )
\end{align}
for $\hat{\Sigma}_\Omega$  the set in \eqref{eq:AcceptanceSet}, all $\omega$ such that $\Tr_B[ \omega] = \sigma_A$ and $\gamma=\alpha/(2 \alpha-1)$. Furthermore, the bound $h^\uparrow_\alpha \geq h^\downarrow_\alpha$ holds at \eqref{eq:PropositionMEAT} where
\begin{align}
     h^\downarrow_\alpha \geq \inf_{q \in \hat{\Sigma}_\Omega } \inf_{\omega \in \mc{D}(AB)} \frac{\alpha}{\alpha -1} D_\mathrm{KL}({q}\|\tilde{\mc{M}}(\omega)_{C}) + q(\perp) D_\beta (\mc{G}(\omega)\|\zg{(\omega)} ),
\end{align}
with $\beta = 1/\alpha$.
\label{prop:MEATv2}
\end{proposition}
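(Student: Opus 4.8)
The plan is to obtain Proposition \ref{prop:MEATv2} as a specialisation of Proposition \ref{prop:MEAT} to identical channels, followed by the simplifications of the two terms in \eqref{eq:OriginalMEAT} that were sketched in the discussion above. First I would apply Proposition \ref{prop:MEAT} with $\tilde{\mc M}_1 = \dots = \tilde{\mc M}_n = \tilde{\mc M}$ and $\sigma^1 = \dots = \sigma^n = \sigma$, which gives \eqref{eq:Renyi_h}--\eqref{eq:OriginalMEAT} with $S_\Omega$ the convex set of frequency vectors compatible with $\rho_{|\Omega}$; the prefactor $\tfrac{\alpha}{\alpha-1}\log\tfrac{1}{\Pr[\Omega]}$ of \eqref{eq:PropositionMEAT} is then inherited verbatim. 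To replace $S_\Omega$ by the explicit acceptance set $\hat\Sigma_\Omega$ of \eqref{eq:AcceptanceSet}, I would invoke the Bretagnolle--Huber--Carol inequality: in an honest run the observed frequency vector lies in $\hat\Sigma_\Omega$ with probability at least $1-\bar\varepsilon_\mathrm{PE}$, so $\hat\Sigma_\Omega$ is an admissible convex choice for $S_\Omega$, and the infimum in \eqref{eq:OriginalMEAT} may be taken over it.

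Next I would substitute the decomposition $\tilde{\mc M}(\cdot) = p(\bot)\,\mc M^{\mathrm K}(\cdot)_{SI}\otimes\proj{\perp}_C + [1-p(\bot)]\,\proj{\perp}_S\otimes\mc M^{\mathrm T}(\cdot)_{I\tilde C}$ into \eqref{eq:OriginalMEAT}. With $\nu_C = \Tr_{SI}[\tilde{\mc M}(\omega)]$ this rewrites the first term as $\tfrac{\alpha}{\alpha-1}D_\mathrm{KL}(q\|\tilde{\mc M}(\omega)_C)$, exactly as displayed before Remark \ref{remark:ClassCond}. For the entropic term, since the channel sets $S=\perp$ deterministically whenever $C\neq\perp$, each summand with $c\in\tilde{\mc C}$ has $H^\uparrow_\alpha(S|IE)_{\nu_{|c}} = 0$, so $\sum_c q(c)H^\uparrow_\alpha(S|IE)_{\nu_{|c}} = q(\perp)H^\uparrow_\alpha(S|IE)_{\nu_{|\perp}}$; using $\tilde I = I$ and that Eve holds a copy $\tilde I$ of the announcement register, this equals $q(\perp)H^\uparrow_\alpha(S|\tilde E)_{\nu_{|\perp}}$ with $\tilde E = \tilde I E$.

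It then remains to lower-bound $H^\uparrow_\alpha(S|\tilde E)_{\nu_{|\perp}}$ by the two single-round optimisations. For the up-arrow bound I would run the chain already displayed in the discussion: use the sandwiched-entropy duality of \cite[Theorem 3]{Tomamichel2014} to write $H^\uparrow_\alpha(S|\tilde E)_{\nu_{|\perp}} = -H^\uparrow_\gamma(S|RAB)_{\nu_{|\perp}} = \inf_{\psi\in\mc D(RAB)} D_\gamma(V\sigma_{RAB}V^\dagger\|\id_S\otimes\psi_{RAB})$ with $\gamma = \alpha/(2\alpha-1)$, apply the data-processing inequality for $D_\gamma$ under the isometry $V = \sum_s\ket{s}_S\otimes Z^s_R$ to pull it back to $\inf_\psi D_\gamma(\sigma_{RAB}\|V^\dagger(\id_S\otimes\psi_{RAB})V)$, and then identify $V^\dagger(\id_S\otimes\psi_{RAB})V = \sum_s Z^s_R\psi_{RAB}Z^s_R = \z(\psi_{RAB})$ and $\sigma_{RAB} = \g(\omega_{AB})$. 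For the down-arrow bound I would first weaken the sandwiched to the Petz up-arrow entropy via \cite[Corollary 4]{Tomamichel2014}, then apply the duality $\bar H^\uparrow_\alpha(S|\tilde E) = -H^\downarrow_\beta(S|RAB)$ with $\beta = 1/\alpha$, unfold $H^\downarrow_\beta$ into $D_\beta(V\sigma_{RAB}V^\dagger\|\id_S\otimes\rho_{RAB})$, and run the same isometry/data-processing step, now with the \emph{fixed} second argument $\rho_{RAB} = \Tr_{SI}[\rho_{SRABI}] = \zg(\omega_{AB})$, which is invariant under the pinching so that $\sum_s Z^s_R\rho_{RAB}Z^s_R = \zg(\omega_{AB})$.

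Substituting these two bounds back into the reduced form of \eqref{eq:OriginalMEAT}, together with the $\hat\Sigma_\Omega$ reformulation and the marginal constraint $\Tr_B[\omega] = \sigma_A$, yields the two displayed expressions for $h^\uparrow_\alpha$ and $h^\downarrow_\alpha$. The main obstacle I anticipate is making the data-processing step airtight: one must check that $X\mapsto V^\dagger XV$ is the adjoint of a channel to which monotonicity of the relevant Rényi divergence applies, that this monotonicity holds in the parameter ranges in play ($\gamma = \alpha/(2\alpha-1)$ and $\beta = 1/\alpha$), and --- the point explicitly flagged in the footnote --- that one does \emph{not} further restrict the free state $\psi_{RAB}$ to the image of $\g$. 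A secondary bookkeeping point is confirming that $\sigma_{RAB} = \g(\omega_{AB})$ and $\rho_{RAB} = \zg(\omega_{AB})$ genuinely arise as marginals of the purified single-round state produced by the coherent key map and the subsequent key-extraction isometry, so that the maps $\g$, $\z$ here coincide with those used elsewhere in the paper; obtaining equality in the up-arrow statement would additionally require an upper bound on $H^\uparrow_\gamma(S|RAB)$ as in \cite{chung2025}, but that is not needed for the inequality form stated here.
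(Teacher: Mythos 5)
Your proposal follows the paper's route step for step: specialise Proposition \ref{prop:MEAT} to identical channels, split $\tilde{\mc M}$ into key and test branches, collapse the entropy sum to the $c=\perp$ term because $S$ is deterministic otherwise, and then run the two duality/data-processing chains (via \cite[Theorem 3]{Tomamichel2014} for $h^\uparrow_\alpha$, and via \cite[Corollary 4]{Tomamichel2014} plus duality for $h^\downarrow_\alpha$) to land on the $D_\gamma(\g(\omega)\|\z(\psi))$ and $D_\beta(\g(\omega)\|\zg(\omega))$ expressions. Your observation that $\rho_{RAB}$ is pinching-invariant and your warning (matching the paper's footnote) that $\psi_{RAB}$ must \emph{not} be restricted to the image of $\g$ are both correct. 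Your worry about making the data-processing step rigorous -- that $X\mapsto V^\dagger X V$ is only CP trace non-increasing, not a channel, so one needs to extend it (e.g.\ by $X\mapsto V^\dagger XV\oplus\Tr[(\id-VV^\dagger)X]$ and then discard the orthogonal block, which vanishes in the first argument) and to check the parameter ranges $\gamma\in(1/2,1)$, $\beta\in(1/2,1)$ -- is legitimate, but it is also glossed over by the paper, which simply cites the DPI; so this is a shared, not a new, gap.

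The one genuine logical error is your justification for substituting $\hat\Sigma_\Omega$ for $S_\Omega$. The Bretagnolle--Huber--Carol inequality is a \emph{completeness} statement about honest, iid runs; it cannot certify that $\hat\Sigma_\Omega$ is an admissible choice of $S_\Omega$ in Proposition \ref{prop:MEAT}, which must cover \emph{all} strategies of Eve, not just the honest one. The correct reason is purely definitional: the protocol's acceptance test is precisely ``$\mathrm{freq}_{c^{(n)}}\in\hat\Sigma_\Omega$'', so $\Omega_\mathrm{PE}$ is the event that the observed frequency lies in $\hat\Sigma_\Omega$, and therefore any $c^{(n)}$ with nonzero weight in $\rho_{|\Omega}$ automatically has $\mathrm{freq}_{c^{(n)}}\in\hat\Sigma_\Omega$. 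That is exactly the admissibility requirement in Proposition \ref{prop:MEAT}, and it holds by construction of the acceptance set, independently of any concentration inequality; BHC only enters when proving the $\bar\varepsilon_\mathrm{PE}$-completeness of the resulting protocol, which is a separate claim.
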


\section{Proof of Theorem \ref{Th:MainTheorem}}\label{App:Proof_Th}

The security for finite-size QKD protocols is based on the leftover hash lemma \cite{tomamichel2015quantum,berta2016smooth}, which bounds the length of an extractable secret key according to the raw key given all the information accessible to Eve. Within our discussion, we use a variant of this lemma based on Rényi entropies.

\begin{proposition}
    \emph{\cite[Theorem 9]{Dupuis23RenyiHashLemma}} \label{prop:RenyiHashLemma}
    Let $\sigma_{AE} \in \mc{D}(AE)$ be a cq-state and $\{\mc{R}^h_{A \rightarrow C}(\cdot), h \in \mc{H}\}$ a family of $\lambda-$randomizing hash functions on register $A$. Then, 
    \begin{equation}
        T \left(\rho^\mc{R}_{CFE}, \frac{1}{|C|} \mathds{1}_C \otimes \rho_{FE} \right) \leq 2^{\frac{2(1-\alpha)}{\alpha}}2^{\frac{\alpha-1}{\alpha}(\log|C|-H^\uparrow_\alpha(A|E)_\sigma + 2 \log \lambda)} \label{eq:renyihashlemma}
    \end{equation}
    where $\rho^\mc{R}_{CFE} = (\sum_h p(h) \mc{R}^h \otimes \pro{h}_F \otimes \mathds{1}_E)(\sigma_{AE})$, $\alpha \in (1,2)$ and $\{\ket{h}_F\}_{h \in \mc{H}}$ constitutes a basis in the register $F$ of hash functions.
\end{proposition}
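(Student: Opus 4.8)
The plan is to follow the direct, smoothing-free route of \cite{Dupuis23RenyiHashLemma}. First I would exploit the classical--quantum structure: writing $\sigma_{AE} = \sum_a p(a)\ketbra{a}{a}_A \otimes \sigma_E^a$ and noting that each $\mc{R}^h$ merely relabels $a \mapsto h(a)$, the output $\rho^{\mc{R}}_{CFE}$ is classical on $CF$, so the generalized trace distance on the left of \eqref{eq:renyihashlemma} collapses to
\begin{equation}
    T\Big(\rho^{\mc{R}}_{CFE},\tfrac{1}{|C|}\mathds{1}_C\otimes\rho_{FE}\Big) = \tfrac12\sum_{h}p(h)\sum_{c}\Big\lVert \sigma_E^{h,c} - \tfrac{1}{|C|}\rho_E\Big\rVert_1, \qquad \sigma_E^{h,c} := \sum_{a:\,h(a)=c}p(a)\,\sigma_E^a,
\end{equation}
the $\tfrac12\lvert\Tr(\cdot)\rvert$ term vanishing because both states are normalised with matching marginals.

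Next I would bring in the operator witnessing the entropy: let $\tau_E\in\mc{D}(E)$ attain (or approach) the supremum in $H^\uparrow_\alpha(A|E)_\sigma = -\inf_\tau D_\alpha(\sigma_{AE}\|\mathds{1}_A\otimes\tau_E)$, so that by \eqref{eq:QRenyiDiv} one has $\Psi_\alpha(\sigma_{AE},\mathds{1}_A\otimes\tau_E) = 2^{-(\alpha-1)H^\uparrow_\alpha(A|E)_\sigma}$. Applying the generalized Hölder inequality with conjugate exponents $\big(\tfrac{2\alpha}{\alpha-1},\,\alpha,\,\tfrac{2\alpha}{\alpha-1}\big)$ term by term, and using $\Tr\tau_E=1$, yields
\begin{equation}
    \Big\lVert \sigma_E^{h,c} - \tfrac{1}{|C|}\rho_E\Big\rVert_1 \;\le\; \Big\lVert \tau_E^{\frac{1-\alpha}{2\alpha}}\Big(\sigma_E^{h,c} - \tfrac{1}{|C|}\rho_E\Big)\tau_E^{\frac{1-\alpha}{2\alpha}}\Big\rVert_{\alpha}.
\end{equation}
The weighting exponent here is exactly the one in \eqref{eq;PsiRenyi}, which is what makes the sandwiched Rényi quantity surface.

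The third step is to average over the hash family and reduce to a collision-type estimate. By concavity of $t\mapsto t^{1/\alpha}$ (Jensen) I would pull the double sum over $(h,c)$ inside the $\alpha$-norm, leaving the task of bounding $\sum_h p(h)\sum_c \big\lVert \tau_E^{s}(\sigma_E^{h,c}-|C|^{-1}\rho_E)\tau_E^{s}\big\rVert_\alpha^\alpha$ with $s=\tfrac{1-\alpha}{2\alpha}$ --- the Rényi-$\alpha$ analogue of the post-hashing collision term. Here one invokes the defining property of a $\lambda$-randomising family, $\sum_h p(h)\,\mc{R}^h(X)\preceq\tfrac{\lambda}{|C|}\mathds{1}_C\Tr[X]$ for $X\succeq0$, to control the dominant contribution, and uses operator convexity of $t\mapsto t^{\alpha-1}$ for $\alpha-1\in(0,1)$ together with the Araki--Lieb--Thirring inequality to cope with $\tau_E$ not commuting with the $\sigma_E^a$. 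Collecting the resulting powers of $|C|$, $\lambda$ and $2$, and recognising the leftover $\tau_E$-weighted trace as $\Psi_\alpha(\sigma_{AE},\mathds{1}_A\otimes\tau_E)=2^{-(\alpha-1)H^\uparrow_\alpha(A|E)_\sigma}$, reproduces the right-hand side of \eqref{eq:renyihashlemma}, prefactor $2^{2(1-\alpha)/\alpha}$ included.

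The hard part is this third step. For $\alpha=2$ it reduces to Renner's elementary second-moment (Cauchy--Schwarz) computation, but for non-integer $\alpha\in(1,2)$ one needs a genuine interpolation argument --- either complex interpolation (Stein--Hirschman) between the trivial $L_1$ bound as $\alpha\to1$ and the min-entropy privacy-amplification bound as $\alpha\to\infty$, or the dedicated operator inequality of \cite{Dupuis23RenyiHashLemma} --- in order to simultaneously handle the non-commutativity and extract the sharp constant rather than a cruder $\alpha$-dependent prefactor. The cq reduction, the Hölder step, and the Jensen step are all routine by comparison.
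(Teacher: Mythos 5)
First, note that the paper does not prove this proposition at all: it is imported verbatim as \cite[Theorem 9]{Dupuis23RenyiHashLemma} and used as a black box in Appendix \ref{App:Proof_Th}. So there is no in-paper argument to match; your proposal is an attempt to reconstruct the proof of the cited theorem itself.

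Your first two steps are sound and do track the strategy of the source: the cq reduction (the $\tfrac12\lvert\Tr(\cdot)\rvert$ term indeed vanishes since both states are normalised), and the weighted H\"older step with exponents $\bigl(\tfrac{2\alpha}{\alpha-1},\alpha,\tfrac{2\alpha}{\alpha-1}\bigr)$, which correctly uses $\Tr[\tau_E]=1$ and is exactly how the sandwiched divergence with the optimiser $\tau_E$ enters. The problem is that your third step is not an argument but a placeholder for the entire content of the theorem. Bounding $\sum_{h}p(h)\sum_c\bigl\lVert \tau_E^{s}\bigl(\sigma_E^{h,c}-|C|^{-1}\rho_E\bigr)\tau_E^{s}\bigr\rVert_\alpha^\alpha$ for non-integer $\alpha\in(1,2)$ \emph{is} the theorem: appealing to ``the dedicated operator inequality of \cite{Dupuis23RenyiHashLemma}'' is circular, and the alternative you offer (Stein--Hirschman interpolation between an $L_1$ endpoint at $\alpha\to1$ and a min-entropy endpoint at $\alpha\to\infty$) is not known to reproduce this statement --- interpolating those endpoints yields smoothed/min-entropy-type bounds, not the exact $H^\uparrow_\alpha$ bound with the prefactor $2^{2(1-\alpha)/\alpha}$, whose derivation is precisely the novelty of the cited work. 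Two further points would need repair even within your outline: (i) the property you quote for a $\lambda$-randomising family, $\sum_h p(h)\,\mc{R}^h(X)\preceq\tfrac{\lambda}{|C|}\mathds{1}_C\Tr[X]$, is a first-moment condition, whereas the $2\log\lambda$ in the exponent of \eqref{eq:renyihashlemma} reflects a second-moment (collision) condition; the first-moment bound alone cannot control the collision term. (ii) The Jensen step over $c$ produces the factor $|C|^{(\alpha-1)/\alpha}$ only after normalising the sum by $|C|$, and the supremum defining $H^\uparrow_\alpha$ need not be attained with $\supp(\sigma_E^a)\subseteq\supp(\tau_E)$, so the negative powers $\tau_E^{(1-\alpha)/(2\alpha)}$ require a support/limiting argument. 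As written, the proposal is a correct plan with its decisive step missing.
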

This result can be adapted to our analysis by performing the following changes.

\begin{align*}
    C &\rightarrow K_B, \\
    A &\rightarrow {S}^{(n)}, \\
    E &\rightarrow E'L , \\
    \rho^{\mathcal{R}}_{CFE} &\rightarrow \rho_{K_B F E'|\Omega}, \\
    \sigma_{AE} &\rightarrow \rho_{S^{(n)} E'|\Omega},\\
    \frac{1}{|C|} \mathds{1}_C\otimes \rho_{FE} &\rightarrow \bar{\tau}_{K_B F E'|\Omega},
\end{align*}
where $E' :=[CI]^{(n)} {E}$, $\bar{\tau}$ is defined as in \eqref{eq:IdealCCQ}, $\rho$ is the state after a concatenation of $n$ channels constituting our protocol, and more explicitly, $F$ denotes the family of universal$_2$ hash functions used for error correction and privacy amplification. Furthermore, $|K_B| = 2^\ell$ since the secret key is actually a binary register of length $\ell$; and $L$ is the register of all the public information exchanged by Alice and Bob during error correction. Since we are interested in universal$_2$ families of hash functions, we set $\lambda=1$ \cite[Lemma 7]{Dupuis23RenyiHashLemma}. Therefore, \eqref{eq:renyihashlemma} can be rewritten as
\begin{align}
    T (\rho_{K_B F E'L|\Omega}, \bar{\tau}_{K_B F E'L|\Omega}) \leq  2^{\frac{1-\alpha}{\alpha} (H^\uparrow_\alpha(S^{(n)}|E'L)_{\rho|\Omega}-\ell + 2)}  \label{eq:newrenyileftoverhashlemma}
\end{align}
Let now define a scalar coefficient $\varepsilon_\mathrm{PA} \geq 0$  verifying that 
\begin{equation}\label{eq:BoundEpsilonPrime}
    T \left(\rho_{K_B F E'L|\Omega} , \bar{\tau}_{K_B F E'L|\Omega} \right) = \frac{\varepsilon_\mathrm{PA}}{\Pr[\Omega]_\rho}
\end{equation}
Using this definition, considering that $(1-\alpha)/\alpha <0$, and using \eqref{eq:newrenyileftoverhashlemma}, we have 
\begin{equation} \label{eq:lenghtKey_v1}
    \ell 
    \geq H^\uparrow_\alpha (S^{(n)}|E'L)_{\rho|\Omega} - \frac{\alpha}{\alpha-1} \log \left( \frac{1}{\varepsilon_\mathrm{PA}}\right) +\frac{\alpha}{\alpha-1} \log \left( \frac{1}{\Pr[\Omega]}\right) + 2 .
\end{equation}
Next, we split off register $L$ using Ref. \cite[Lemma 5.14, Lemma 5.15]{tomamichel2015quantum} and the notation leak$_\text{EC} := \log(|L|)$, such that
\begin{align}
    H^\uparrow_\alpha(S^{(n)}|E' L)_{\rho|\Omega} \geq H^\uparrow_\alpha(S^{(n)}|E')_{\rho|\Omega} - \text{leak}_\text{EC}.
\end{align}
Thus, the inequality in \eqref{eq:lenghtKey_v1} can be written as 
\begin{equation}
    \ell \geq H^\uparrow_\alpha(S^{(n)}|E' )_{\rho|\Omega} - \frac{\alpha}{\alpha-1} \log \left( \frac{1}{\varepsilon_\mathrm{PA}}\right) +\frac{\alpha}{\alpha-1} \log \left( \frac{1}{\Pr[\Omega]}\right)- \text{leak}_\text{EC} +2.
\end{equation}
We can now apply the MEAT following Proposition \ref{prop:MEATv2}, such that the final expression of the key length is
\begin{align} \label{eq:FinalLength}
    \ell \geq&  \, n h_{\alpha} -  \frac{\alpha}{\alpha-1} \log \left(\frac{1}{\varepsilon_\mathrm{PA}} \right)- \mathrm{leak}_\mathrm{EC} + 2,
\end{align}
where $h_\alpha$ could be equal to $h^\uparrow_\alpha$ or $h^\downarrow_\alpha$.

\begin{proof}
    Let us consider the state after the execution of the protocol $\rho_{K_A K_B F E'L}$ and an ideal state $\bar{\tau}_{K_A K_B F E^\prime L}$. We calculate its trace distance, noting that both states are the same when the protocol aborts (albeit no key rate is produced then),
\begin{align}
    T \left(\rho_{K_A K_B F E^\prime L} , \bar{\tau}_{K_A K_B F E^\prime L} \right) = \Pr[\Omega]_\rho T \left(\rho_{K_A K_B F E'L|\Omega} , \bar{\tau}_{K_A K_B F E^\prime L|\Omega} \right).
\end{align}
Then, we bound the resulting trace distance as follows
\begin{align}
    T \left(\rho_{K_A K_B F E^\prime L|\Omega} , \bar{\tau}_{K_A K_B F E^\prime L|\Omega} \right) &\leq T \left(\rho_{K_A K_B F E^\prime L|\Omega} , \rho_{K_A K_B F E^\prime L|\Omega\wedge K_A = K_B} \right) \nonumber \\
    & \quad + T \left(\rho_{K_A K_B F E^\prime L|\Omega\wedge K_A = K_B} , \bar{\tau}_{K_A K_B F E^\prime L|\Omega} \right)  \\
    & \leq \frac{\varepsilon_\mathrm{EC}}{\Pr[\Omega]_\rho} + T \left(\rho_{K_B F E^\prime L|\Omega} , \bar{\tau}_{K_B F E^\prime L|\Omega} \right). \label{eq:NonincreasingKA}
\end{align}
We applied the triangle inequality in the first line and our hypothesis of $\varepsilon_\mathrm{EC}-$correctness in the second (which is eventually reflected in the value of $\mathrm{leak}_\mathrm{EC}$), such that we can trace out register $K_A$ in the second line. For the final trace distance, we may assume the hypothetical value for $\varepsilon_\mathrm{PA} \in (0,1)$ in \eqref{eq:BoundEpsilonPrime}. Thus, 

\begin{equation}
    \Pr[\Omega]_\rho T \left(\rho_{K_B F E^\prime L|\Omega} , \bar{\tau}_{K_B F E^\prime L|\Omega} \right) = \varepsilon_\mathrm{PA}.
\end{equation}

Using Proposition \ref{prop:RenyiHashLemma}, we  certify that the protocol is $(\varepsilon_\mathrm{EC}+ \varepsilon_\mathrm{PA})-$secret, as long as the resulting secret key has a length bounded by
\eqref{eq:FinalLength}. For any choice of parameters, the protocol verifies
\begin{align}
   \Pr[\Omega]_\rho T \left(\rho_{K_A K_B F E^\prime L|\Omega} , \bar{\tau}_{K_A K_B F E^\prime L|\Omega} \right) &< \varepsilon_\mathrm{EC} + \varepsilon_\mathrm{PA},
\end{align}
which constitutes our security statement, in accordance with \eqref{eq:secure}.
\end{proof}

\section{Variable-length secret key rates via conic optimization}\label{App:VarLength}

In this section, we derive a conic method for the variable-length secret key rate via qubit BB84 using the framework described in Ref. \cite{kamin25MEATsecurity}. To do so, we first introduce a tradeoff function $f: \mathcal{C} \to \mathbb{R}$, which allows us to express the $f$-weighted Rényi entropy \cite{arqand2025MEAT}
\begin{equation}
     H^{\uparrow,f}_\alpha (S|{C}IE)_{\mathcal{M}(\omega)}= \frac{\alpha}{1-\alpha} \log \left( \sum_{{c}\in \mc{C}} p({c}) 2^{\frac{1-\alpha}{\alpha} (-f ({c})+ H^\uparrow_\alpha (S|IE))_{{\mathcal{M}(\omega)}_{|{c}}}}  \right).
\end{equation}
This function is necessary to define $\kappa$ as an ${H}^{\uparrow,f}_{\alpha}$-normalization constant corresponding to the set of all states that can be produced via a single-round map $\mc{M}$, such that
\begin{equation}
\begin{gathered}
  \kappa:= \inf_{\omega_{AB}\in\mathcal{D}(\!AB\!)} H_{\alpha}^{\uparrow,f} (S | {C}IE )_{\mathcal{M}(\omega)}, \\
\text{s.t. }\Tr_{B}[\omega_{AB}]=\sigma_{A}.
\end{gathered}
\end{equation}
Based on this, the security proof given by Ref. \cite[Theorem 12]{kamin25MEATsecurity} establishes that a variable-length protocol is secure when the final key length $\ell_\text{var}$ is bounded by a function $\hat{f}_\text{prot}(c_1^n) = \sum_i (f(c_i) + \kappa)$. As further discussed in Ref. \cite[Section VI]{kamin25MEATsecurity}, rewriting this function in terms of the observed frequency vector and matching its expectation to a \textit{fixed} honest probability distribution $q^h$, eventually provides an average secret key rate 
\begin{align}
  \mathbb{E} \left[\ell_\text{var} \right] \ge n\bigl( f\cdot q^h +\kappa \bigr) -\mathrm{leak}_{\mathrm{EC}} -\frac{\alpha}{\alpha-1}\log\frac{1}{\varepsilon_{\mathrm{PA}}}+2,
\end{align}
where the leakage is provided by Inequality \eqref{eq:EC_P2P}, and is evaluated according to $q^h$. 
To find the optimal tradeoff function $f$, we introduce an auxiliary variable $\lambda \in \mathds{P}_\mathcal{C}$. Then, similarly to Ref. \cite{kamin25MEATsecurity}, we formulate the following fixed-length optimization program
\begin{equation}\label{eq:tradeoffMEAT}
\begin{gathered}
    \inf_{\omega, \lambda}   \frac{\alpha}{\alpha - 1} D_\mathrm{KL}(\lambda|{\mc{M}}(\omega)_{C}) +  q^h(\perp) D_{\beta} (\g(\omega)\|\zg(\omega)) \\
    \mathrm{s.t.} \quad \omega_{AB} \succeq 0, \, \Tr_B[\omega_{AB}] = \sigma_{A}\\
    \sum_{c \in {\mc{C}}} \lambda(c) = 1,\; \lambda \ge 0, \\
    \;  \lambda - q^h = 0.
\end{gathered}
\end{equation}
Using the techniques described in Section \ref{sec:ComputeSKR}, we can reformulate this optimization as the conic program
\begin{equation}\label{eq:tradeoffConic}
\begin{gathered}
    \inf_{h_\mathrm{KL},h_\mathrm{QKD}, u, \lambda, \omega}  \frac{\alpha}{\alpha - 1} [h_\mathrm{KL} - p^\mathrm{K} h_\mathrm{QKD}] \\ %
    \mathrm{s.t.} \quad \Tr_B[\omega_{AB}] = \sigma_{A}, \; \sum_{c \in {\mc{C}}} \lambda(c) = 1,\\
    (u,\omega_{AB}) \in \mc{K}^{\beta, \gmap, \zgmap, S}_\text{FastRényiQKD} \\
    \left(h_\mathrm{KL}, \lambda, \mc{M}(\omega)_{C} \right) \in \mc{K}_\mathrm{KL} \\
    \left(h_\mathrm{QKD},1,s_\beta u + 1 - \Tr[\g^\top (\omega)]\right) \in \mathcal{K}_{\log} \\
    \lambda - q^h = 0.
\end{gathered}
\end{equation}
As stated in Ref. \cite[Lemma 4.12]{arqand2025MEAT}, the dual value of the last constraint provides an optimal choice for $f$, which can be readily introduced in the optimization for $\kappa$. Before continuing, note that $f-$weighted Rényi entropy can be lower-bounded as
\begin{align}
    H^{\uparrow,f}_\alpha (S|{C}IE)_\rho &= \frac{\alpha}{1-\alpha} \log \left( \sum_{c \in \tilde{\mc{C}} }p(c) 2^{\frac{\alpha-1}{\alpha}f (c)} + \pkey 2^{\frac{1-\alpha}{\alpha} (-f ({\perp})+ H^\uparrow_\alpha (S|IE)_{\mathcal{M}(\omega)_{|{\perp}}}})  \right) \nonumber\\
    &\geq  \frac{\alpha}{1-\alpha} \log \left( \sum_{c \in \tilde{\mc{C}} }p(c) 2^{\frac{\alpha-1}{\alpha}f (c)} + \pkey 2^{\frac{\alpha-1}{\alpha} f ({\perp})} \left[ \sum_{i \in \tilde{\mc{I}}} \Psi^i_\beta(\omega) \right] \right),
\end{align}
where in the last line we used Equation \eqref{eq:Conditioning} from Remark \ref{remark:ClassCond}. In the particular case of qubit BB84, we have $\tilde{\mathcal{I}} = \{\top, \perp\}$. Using then the description in Section \ref{sec:tracepreserving}, we can explicitly write the optimization of $\kappa$ in its conic form via the FastRényiQKD cone
\begin{equation}\label{eq:Conic_kappaFast}
\begin{gathered}
    \kappa \geq  \inf_{u, \omega} \frac{\alpha}{1-\alpha}  \log  \left( (1-p^{\mathrm{K}})\sum_{{c} \in \tilde{\mathcal{C}}} \Tr[\Gamma^c \omega_{AB}] 2^{\frac{\alpha-1}{\alpha}f ({c})} + p^\mathrm{K} 2^{\frac{\alpha-1}{\alpha} f ({\perp})}\left(s_\beta u + 1-\Tr[\g^\top(\omega)] \right)  \right) \\ %
        \mathrm{s.t.} \quad  \, \Tr_B[\omega_{AB}] = \sigma_{A}\\
    (u,\omega_{AB}) \in \mc{K}^{\beta, \gmap, \zgmap, S}_\text{FastRényiQKD},
\end{gathered}
\end{equation}
where $\{\Gamma^c\}_{c \in \mc{C}}$ succinctly denotes all the POVM elements used by Alice and Bob in parameter estimation rounds, weighted according to the probabilities $(\pkey,1-\pkey)$ of Bob's choices.

\section{Properties of \texorpdfstring{$S$}{S}}\label{sec:isometry}

As noted in Section \ref{Sec:Prolegomena}, we require $\mathrm{supp}(\rho) \subseteq \mathrm{supp}(\sigma)$ in our definition of the sandwiched Rényi relative entropy. This is not necessary for the entropy itself to be well-defined in our range of interest, $\alpha \in [1/2,1)$, but we require it in order to prove that $S$ is an isometry, which is needed for the derivatives to be well-defined and the proof of self-concordance.

In terms of our maps $\g, \z$, this means that $\mathrm{supp}(\g) \subseteq \mathrm{supp}(\z)$, where
\begin{subequations}
    \begin{gather}
    \supp(\g) = \cl\left\{ \ket{\phi};\; \left\langle \phi \middle| \g(V\omega'V^\dagger) \middle| \phi \right\rangle > 0\; \forall\, \omega' \succ 0 \right\} 
	= \left\{ W_\g\mathbf{x};\; \mathbf{x}\in \C^k\right\}    \\
    \supp(\z) = \cl\left\{ \ket{\phi} ;\; \left\langle \phi \middle| \z(\psi) \middle| \phi \right\rangle > 0\; \forall\, \psi \succ 0 \right\} 
	= \left\{ W_\z\mathbf{y};\; \mathbf{y}\in \C^n\right\}
	,
    \end{gather}
\end{subequations}
\begin{proposition}\label{Sisometry}
	$S = W_\z^\dagger W_\g$ is an isometry.
\end{proposition}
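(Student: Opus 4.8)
The plan is to show directly that $S^\dagger S = \id_k$, which is exactly the statement that $S = W_\z^\dagger W_\g$ is an isometry on $\C^k$. Expanding, $S^\dagger S = W_\g^\dagger W_\z W_\z^\dagger W_\g$, so everything hinges on understanding the operator $P_\z := W_\z W_\z^\dagger$ sandwiched between $W_\g^\dagger$ and $W_\g$.

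First I would record that $P_\z$ is the orthogonal projector onto $\supp(\z)$. Indeed, since $W_\z$ is an isometry we have $W_\z^\dagger W_\z = \id_n$, hence $P_\z^2 = W_\z(W_\z^\dagger W_\z)W_\z^\dagger = P_\z$ and $P_\z^\dagger = P_\z$, so $P_\z$ is an orthogonal projector; its range is the column space of $W_\z$, which by the explicit description $\supp(\z) = \{W_\z\mathbf y : \mathbf y \in \C^n\}$ given above is precisely $\supp(\z)$. Exactly the same argument shows $W_\g W_\g^\dagger$ is the orthogonal projector onto $\supp(\g)$, and that the columns of $W_\g$ span $\supp(\g)$.

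Next I would invoke the support inclusion. As noted in Section~\ref{Sec:Prolegomena}, the definition of the sandwiched Rényi relative entropy used here requires $\supp(\rho)\subseteq\supp(\sigma)$, which in terms of the maps means $\supp(\g)\subseteq\supp(\z)$. Since every column of $W_\g$ lies in $\supp(\g)\subseteq\supp(\z) = \operatorname{ran}(P_\z)$, the projector $P_\z$ acts as the identity on each such column, i.e.\ $P_\z W_\g = W_\g$. Substituting this into the expression above gives
\begin{equation}
    S^\dagger S = W_\g^\dagger P_\z W_\g = W_\g^\dagger W_\g = \id_k,
\end{equation}
using once more that $W_\g$ is an isometry. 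Hence $S$ is an isometry.

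There is no real obstacle here beyond bookkeeping: the only substantive input is the support condition $\supp(\g)\subseteq\supp(\z)$, which is guaranteed by the hypothesis $\supp(\rho)\subseteq\supp(\sigma)$ in the definition of $D_\alpha$; everything else is the elementary fact that an isometry's range is fixed pointwise by the orthogonal projector onto any larger subspace. If one additionally wants to emphasise that $S$ need not be unitary (its range inside $\C^n$ may be a proper subspace, namely $W_\z^\dagger(\supp(\g))$), that can be noted, but it is not needed for the claim.
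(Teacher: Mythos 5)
Your proof is correct and uses the same essential ingredient as the paper's, namely the support inclusion $\supp(\g)\subseteq\supp(\z)$, which forces $W_\z W_\z^\dagger$ to act as the identity on the range of $W_\g$. The only difference is presentational: the paper runs the argument pointwise (pick $\mathbf{x}\in\C^k$, write $W_\g\mathbf{x}=W_\z\mathbf{y}$, and cancel isometries from both sides), whereas you package the same fact as $P_\z W_\g = W_\g$ for the orthogonal projector $P_\z = W_\z W_\z^\dagger$ onto $\supp(\z)$, which is arguably a bit cleaner.
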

\begin{proof}
	Let $\mathbf{x}\in \C^k$. $W_\g \mathbf{x}\in \supp(\g)$; since $\mathrm{supp}(\g) \subseteq \mathrm{supp}(\z)$, there exists $\mathbf{y}\in \C^n$ such that
	\begin{align}
		W_\g\mathbf{x} &= W_\z\mathbf{y}. \label{suppsubset0}
	\end{align}
	On the one hand, multiplying both sides of \eqref{suppsubset0} by $W_\g^\dagger$,
	\begin{equation}\label{suppsubset1}
		\mathbf{x} = W_\g^\dagger W_\z\mathbf{y}.
	\end{equation}
	On the other hand, multiplying both sides of \eqref{suppsubset0} by $W_\z^\dagger$,
	\begin{equation}\label{suppsubset2}
		W_\z^\dagger W_\g\mathbf{x} = \mathbf{y}.
	\end{equation}
	Then, replacing \eqref{suppsubset2} in \eqref{suppsubset1},
	\begin{equation}\label{suppsubset3}
		\mathbf{x} = W_\g^\dagger W_\z W_\z^\dagger W_\g\mathbf{x} = S^\dagger S\mathbf{x}.
	\end{equation}
	As it is true for all $\mathbf{x}\in \C^k$, it must be $S^\dagger S = \mathds{1}_k$.
\end{proof}
The proof for the $\g, \z \circ \g$ case is follows by replacing $W_\z$ with $W_{\z\g}$. We will also check that the map $X \mapsto S^\dagger X S$ is strictly positive, as we will need it to prove that the barrier function is self-concordant:
\begin{proposition}\label{mapS_++}
	The map $X \mapsto S^\dagger X S$ is strictly positive.
\end{proposition}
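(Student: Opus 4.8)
The statement to prove is that $X \mapsto S^\dagger X S$ is a strictly positive map, where $S = W_\z^\dagger W_\g$ (or $S = W_{\z\g}^\dagger W_\g$ in the fast case) and we have just shown $S$ is an isometry, i.e.\ $S^\dagger S = \mathds 1_k$. Recall that a linear map $\Phi$ on Hermitian matrices is strictly positive if $\Phi(X) \succ 0$ whenever $X \succ 0$. So the plan is simply: take $X \in \HH^n_\succ$ arbitrary and show $S^\dagger X S \succ 0$, exploiting the fact that $S$ has trivial kernel.

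**Key steps.** First I would fix $X \succ 0$ and an arbitrary nonzero vector $\mathbf v \in \C^k$, and compute $\langle \mathbf v | S^\dagger X S | \mathbf v\rangle = \langle S\mathbf v | X | S\mathbf v\rangle$. Second, since $S^\dagger S = \mathds 1_k$, the vector $S\mathbf v$ satisfies $\|S\mathbf v\|^2 = \langle \mathbf v | S^\dagger S | \mathbf v\rangle = \|\mathbf v\|^2 > 0$, so $S\mathbf v \neq 0$. Third, because $X \succ 0$ we have $\langle w | X | w\rangle > 0$ for every nonzero $w$, in particular for $w = S\mathbf v$. Hence $\langle \mathbf v | S^\dagger X S | \mathbf v\rangle > 0$ for all nonzero $\mathbf v$, which is exactly $S^\dagger X S \succ 0$. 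Finally I would note $S^\dagger X S$ is Hermitian whenever $X$ is, so the map does send Hermitian matrices to Hermitian matrices, and the same argument with $\succ$ replaced by $\succeq$ shows it is also positive (not just strictly positive), which is all that is needed. The identical argument applies verbatim to $S = W_{\z\g}^\dagger W_\g$.

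**Main obstacle.** There is essentially no obstacle here — the entire content is that an isometry has trivial kernel, so it cannot collapse a strictly positive form to a degenerate one. The only thing worth being careful about is that "strictly positive map" in this paper means strict positivity preservation ($X \succ 0 \Rightarrow \Phi(X) \succ 0$), and that the relevant input dimension is $n$ (the codomain of $W_\z$) while the output dimension is $k$; since $S \in \C^{n \times k}$ with $n \geq k$, the isometry property $S^\dagger S = \mathds 1_k$ is the correct one to invoke, and this is precisely what Proposition~\ref{Sisometry} gives.
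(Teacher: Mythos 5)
Your proof is correct, and it reaches the same conclusion from the same essential ingredient (the isometry $S^\dagger S = \id_k$ from Proposition~\ref{Sisometry}), but via a slightly different route. You argue directly on quadratic forms: for $\mathbf v \neq 0$, $S\mathbf v \neq 0$ because $\|S\mathbf v\| = \|\mathbf v\|$, and then $\langle \mathbf v|S^\dagger X S|\mathbf v\rangle = \langle S\mathbf v|X|S\mathbf v\rangle > 0$. The paper instead splits the claim into two steps: it first shows positivity via the factorization $Y = K^\dagger K \Rightarrow S^\dagger Y S = (KS)^\dagger(KS) \succeq 0$, and then upgrades to strict positivity via the characterization $Z \succ 0 \iff Z - \varepsilon\id \succeq 0$ for some $\varepsilon > 0$, giving $S^\dagger Z S = S^\dagger(Z - \varepsilon\id)S + \varepsilon\id \succeq \varepsilon\id \succ 0$. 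Both arguments are one-liners once the isometry property is in hand; yours is marginally more elementary in that it avoids the $\varepsilon$-shift characterization and works straight from the definition of positive definiteness, while the paper's version makes the role of positivity preservation (not just strict positivity) explicit as an intermediate step. Either is perfectly adequate here.
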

\begin{proof}
    First note that this map is positive, as $Y \succeq 0$ iff $Y = K^\dagger K$, and therefore $S^\dagger Y S = S^\dagger K^\dagger K S \succeq 0$. Now note that $Z \succ 0$ iff there exists $\varepsilon > 0$ such that $Z - \varepsilon \id \succeq 0$. Then
	\begin{equation}
	    0 \preceq S^\dagger(Z-\varepsilon\id)S = S^\dagger Z S - \varepsilon\id,
	\end{equation}
    which implies that $S^\dagger Z S \succ 0$.
\end{proof}

\section{Self-concordance}\label{sec:selfconcordance}

In this section, we adapt the proof given in Ref. \cite{he2025} to prove that the barrier functions \eqref{eq:truebarrier} and \eqref{eq:fastbarrier} are \emph{logarithmically homogeneous self-concordant barriers} (see the definition below) for their respective cones, \eqref{eq: qkd cone} and \eqref{eq: fast qkd cone}. In particular, we are interested in $\alpha \in \left[ \frac12, 1 \right)$, and therefore, $s_\alpha = \operatorname{sign}(\alpha-1) = -1$.

\begin{definition}\label{Def: LHSCB}\cite[definitions 1.3.2, 5.1.1]{Nesterov2018}, \cite[definition 2.3.2]{Nesterov1994}
 	Let $\cal V$ be a finite-dimensional real vector space and let $\cal K\subset V$ a closed convex cone. A $\nu$-\emph{logarithmically homogeneous self-concordant barrier} for $\cal K$ is a $\mathscr{C}^3$ function $f: \interior{\cal K} \to \R$ such that
    \begin{itemize}
        \item For all $x_0\in \partial \mc{K}.$
        \begin{equation}\label{eq: Def barrera}
	   	   \lim_{
		          x\rightarrow x_0 
            } f(x) = +\infty.
	    \end{equation}
        \item $f$ is $\nu$-logarithmically homogeneous:
        \begin{equation}\label{eq: Def LH}
	   	   f(\lambda x) = f(x) - \nu\log(\lambda)
	    \end{equation}
        for all $x\in\interior{\cal K}$, for all $\lambda>0$.
        \item $f$ is self-concordant:
        \begin{equation}\label{eq: Def SC}
	   	   \big| D^3 f(x)[h,h,h] \big| \leq 2\big( D^2 f(x)[h,h] \big)^{3/2}
	    \end{equation}
        for all $x\in\interior{\cal K}$, for all $h\in\cal V$.
    \end{itemize}
\end{definition}

In the proofs of self-concordance of our barriers, we use the following concepts:

\subsection{Conic monotony and conic convexity. Compatibility}

Remember that a closed convex cone $\cal K\subset V$ in a real vector space defines a partial ordering in $\cal V$ given by:
$$x\preccurlyeq_{\cal K} y \Leftrightarrow y-x\in \cal K.$$

\begin{definition}\cite[\S 2.1]{he2025}\label{KK-Mon}
	Let $\cal V, V'$ be finite-dimensional real vector spaces and let $\cal K\subset V,\, K'\subset V'$ closed convex cones. A map $f: \cal V \to V'$ is $(\cal K, K')$-\emph{monotone} if for all $x,y\in \cal V$ such that $$x\preccurlyeq_{\cal K} y$$
	then, $$f(x)\preccurlyeq_{\cal K'} f(y).$$
\end{definition}

\begin{definition}\label{KK-Conv}\cite[\S 2.1]{he2025}
	Let $\cal U, V$ be finite-dimensional real vector spaces and let $\cal K\subset V$ a closed convex cone. A map $f: \cal U \to V$ is $\cal K$-\emph{convex} if for all $x,y\in \cal U$, $\lambda\in [0,1]$, 
    \begin{equation}
    f\big( \lambda x + (1-\lambda)y \big) \preccurlyeq_{\cal K} \lambda f(x) + (1-\lambda) f(y).
    \end{equation}
	A map $f: \cal U \to V$ is $\cal K$-\emph{concave} if $-f$ is $\cal K$-convex. 
\end{definition}

To define operator monotony and operator convexity, we establish the notation
\begin{equation}
	\HH^d_{\sigma\subset I} := \{ X\in \HH^d; \; \sigma(X)\subset I \}
\end{equation}
for an interval $I\subset \R$, where $\sigma(X)$ denotes the set of eigenvalues of $X$.

\begin{definition}\label{OpMon}
	A real function $f:(a,b)\subset \R\to \R$ is \emph{operator-monotone} if for all $d\in \N$, for all $X,Y\in \HH^d_{\sigma\subset (a,b)}$ with $X\preceq Y$, $$f(X)\preceq f(Y).$$\end{definition}

\begin{lemma}
	Let $I\subset \R$ be an interval. The set $\HH^d_{\sigma\subset I}$ is convex.
\end{lemma}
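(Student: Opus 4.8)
The plan is to reduce the statement to the convexity of $I$ itself by characterizing $\HH^d_{\sigma\subset I}$ through quadratic forms. The key observation is that for a Hermitian matrix $X\in\HH^d$, the numerical range $\{\langle v, Xv\rangle : v\in\C^d,\ \|v\|=1\}$ coincides with the closed interval $[\lambda_{\min}(X),\lambda_{\max}(X)]$ spanned by the extreme eigenvalues. Consequently, since $I$ is an interval — hence convex, and containing $[\lambda_{\min}(X),\lambda_{\max}(X)]$ precisely when it contains both endpoints — one has the equivalence $\sigma(X)\subset I$ if and only if $\langle v, Xv\rangle\in I$ for every unit vector $v$. I would state and justify this equivalence first; the only mild care needed is to check it for all four types of intervals (open, closed, half-open), which works because $\lambda_{\min}(X)$ and $\lambda_{\max}(X)$ are themselves eigenvalues of $X$, so membership of the endpoints in $I$ is equivalent to membership of the whole spectrum.

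With this characterization in hand, convexity is immediate. Given $X, Y\in\HH^d_{\sigma\subset I}$ and $\lambda\in[0,1]$, for any unit vector $v\in\C^d$ we have
\begin{equation}
\langle v, (\lambda X+(1-\lambda)Y)v\rangle = \lambda\langle v, Xv\rangle + (1-\lambda)\langle v, Yv\rangle,
\end{equation}
and both $\langle v, Xv\rangle$ and $\langle v, Yv\rangle$ lie in $I$. Since $I$ is convex, the right-hand side lies in $I$, so $\langle v, (\lambda X+(1-\lambda)Y)v\rangle\in I$ for all unit $v$; applying the equivalence in the other direction gives $\sigma(\lambda X+(1-\lambda)Y)\subset I$. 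Hence $\lambda X+(1-\lambda)Y\in\HH^d_{\sigma\subset I}$.

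There is essentially no hard step here: the argument is routine once the quadratic-form characterization is isolated. If one prefers to avoid numerical-range language altogether, an equally short alternative is to write $I$ as an intersection of at most two half-lines (e.g.\ $I=\{t: t> a\}\cap\{t: t< b\}$ for $I=(a,b)$) and note that each condition $\lambda_{\min}(X)> a$, i.e.\ $X\succ a\id$, and $\lambda_{\max}(X)< b$, i.e.\ $X\prec b\id$, defines a convex set (a translate of the positive semidefinite cone or of its negative), so $\HH^d_{\sigma\subset I}$ is a finite intersection of convex sets. Either route yields the lemma in a few lines.
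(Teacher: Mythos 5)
Your proof is correct, and it is worth comparing the three routes. The paper's version invokes Weyl's theorem to sandwich the spectrum of $Z = \lambda X + (1-\lambda)Y$: from $\lambda_{\min}(Z) \geq \lambda\,\lambda_{\min}(X) + (1-\lambda)\,\lambda_{\min}(Y)$ and the matching upper bound for $\lambda_{\max}$, both extremal eigenvalues of $Z$ land between two points already in $I$, and the interval property finishes. Your first route is in substance the same argument with the citation unpacked: identifying $\lambda_{\min}(X)$ and $\lambda_{\max}(X)$ as the extrema of $\langle v, Xv\rangle$ over unit $v$, and then using that $X \mapsto \langle v, Xv\rangle$ is linear, is exactly how the extremal case of Weyl's inequality is usually derived via the Rayleigh quotient. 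The care you flag about the four interval types is correct and is handled the same way in both versions, since $\lambda_{\min}(X), \lambda_{\max}(X)$ are themselves eigenvalues, hence in $I$, and $I$ being an interval then absorbs the whole closed interval $[\lambda_{\min}(X), \lambda_{\max}(X)]$. Your second route is the one that is genuinely different from the paper: writing $I$ as an intersection of at most two half-lines and noting that each of $\{X : X \succ a\id\}$, $\{X : X \succeq a\id\}$, $\{X : X \prec b\id\}$, $\{X : X \preceq b\id\}$ is a translate of $\pm\HH^d_\succ$ or $\pm\HH^d_\succeq$, hence convex, exhibits $\HH^d_{\sigma\subset I}$ as a finite intersection of convex sets with no eigenvalue inequality at all. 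That version is the cleanest and most self-contained of the three; the paper's is shorter on the page because it leans on an external theorem.
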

\begin{proof}
	Let $X,Y\in \HH^d_{\sigma\subset I}$ and $\lambda\in [0,1]$. Let $Z = \lambda X + (1-\lambda) Y.$
	As a consequence of Weyl's theorem (\cite[Theorem 4.3.1]{HornJohnson}),
	\begin{equation}
		\left.\begin{aligned}
			\min \sigma(Z) &\geq \lambda\min \sigma(X) + (1-\lambda)\min \sigma(Y) \\
			\max \sigma(Z) &\leq \lambda\max \sigma(X) + (1-\lambda)\max \sigma(Y)
		\end{aligned}\right\},
	\end{equation}
	where 
    \begin{equation}
        \min \sigma(X), \min \sigma(Y), \max \sigma(X), \max \sigma(Y)\in I.
    \end{equation}
	We conclude that $\big[\! \min \sigma(Z) , \max \sigma(Z) \big]\subset I$, so $\sigma(Z)\subset I$.
\end{proof}

\begin{definition}\label{OpConv}A real function $f:(a,b)\subset \R$ is \emph{operator-convex} if for all $d\in \N$, for all $X,Y\in \HH^d_{\subset (a,b)}$, $\lambda\in [0,1]$, 
\begin{equation}
f\big( \lambda X + (1-\lambda)Y \big) \preceq \lambda f(X) + (1-\lambda) f(Y)
\end{equation}A function $f:(a,b)\subset \R$ is \emph{operator-concave} if $-f$ is operator-convex.\end{definition}

\begin{definition}\label{Kb-Compat}\cite[Definition 5.1.1]{Nesterov1994}, \cite[Definition 2.3]{he2025}.
	Let $\cal U, V$ be finite-dimensional real vector spaces and let $\cal K\subset V$ a closed convex cone. Let $C\subset \cal U$ be a closed convex subset and $f:\interior{C} \to \cal V$ a $\claseC^3$ function.
	Given a real number $\beta>0$, it is said that $f$ is $(\cal K, \beta)$-\emph{compatible} with $C$ if
	\begin{itemize}
		\item $f$ is $\cal K$-concave. 
		\item For all $x\in \interior{C}$, for all $h\in \cal U$ such that $x\pm h\in C,$
		\begin{equation}\label{kb-Compat}
			D^3 f(x)[h,h,h] \preccurlyeq_{\cal K} -3\beta D^2 f(x)[h,h].
		\end{equation}
	\end{itemize}
\end{definition}

\begin{definition}\label{Kb-CompatG}\cite[Definition 5.1.2]{Nesterov1994}.
	Let $f:\interior{C} \to \cal V$ verifying hypothesis of definition \ref{Kb-Compat} and let $G: \interior{C} \to \R$ a self-concordant barrier for $C$.
	Given a real number $\beta>0$, it is said that $f$ is $(\cal K, \beta)$-\emph{compatible} with the barrier $G$ if
	\begin{itemize}
		\item $f$ is $\cal K$-concave. 
		\item For all $x\in \interior{C}$, for all $h\in \cal U$ such that $x\pm h\in C,$
		\begin{equation}\label{kb-CompatG}
			D^3 f(x)[h,h,h] \preccurlyeq_{\cal K} -3\beta D^2 f(x)[h,h] \big( D^2 G(x)[h,h] \big)^\frac{1}{2}.
		\end{equation}
	\end{itemize}
\end{definition}

Concepts in definitions \ref{Kb-Compat} and \ref{Kb-CompatG} are closely related, as shown in Ref. \cite[Remark 5.1.2]{Nesterov1994}:

\begin{proposition}\label{Kb-Compat<=>G}
	Let $f:\interior{C} \to \cal V$ verifying hypothesis of definitions \ref{Kb-Compat} and \ref{Kb-CompatG}, and let 
	$\beta>0$.
	\begin{itemize}
	    \item[(i)] If $f$ is $(\cal K, \beta)$-compatible with $C$, it is $(\cal K, \beta)$-compatible with any barrier of $C$.
		\item[(ii)] Conversely, if $f$ is $(\cal K, \beta)$-compatible with a $\theta$-self-concordant barrier $G: \interior{C} \to \R$ for $C$, it is $\big( {\cal K}, (3\theta+1)\beta \big)$-compatible with $C$.
	\end{itemize}
\end{proposition}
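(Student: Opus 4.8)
The plan is to exploit the one structural difference between Definitions~\ref{Kb-Compat} and~\ref{Kb-CompatG}: the inequality~\eqref{kb-CompatG} is positively homogeneous of degree three in $h$ (degree three on the left, degree $2+1$ on the right), whereas~\eqref{kb-Compat} is not. Throughout I write $\|h\|_x:=(D^2 G(x)[h,h])^{1/2}$ and use that, for a self-concordant barrier $G$ of $C$, the Hessian $D^2 G(x)$ is positive definite on $\interior C$ and the open unit Dikin ellipsoid $\{y:\|y-x\|_x<1\}$ lies in $\interior C$. Since $\mathcal{K}$-concavity of $f$ is common to both notions of compatibility and carries over unchanged, it suffices in each direction to treat the third-derivative inequality; I will also use that $\mathcal{K}$-concavity means $-D^2 f(x)[h,h]\in\mathcal{K}$ for every $x,h$.

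For part (i), I would fix $x\in\interior C$ and $h\neq0$ with $x\pm h\in C$, set $r=\|h\|_x>0$ and $h_0=h/r$. For each $\varepsilon\in(0,1)$ the vector $(1-\varepsilon)h_0$ has $x$-norm $1-\varepsilon<1$, so $x\pm(1-\varepsilon)h_0\in\interior C$; applying the hypothesis~\eqref{kb-Compat} to $(1-\varepsilon)h_0$ and using the degree-two and degree-three homogeneity of $D^2 f(x)[\cdot,\cdot]$ and $D^3 f(x)[\cdot,\cdot,\cdot]$ yields $D^3 f(x)[h_0,h_0,h_0]\preccurlyeq_{\mathcal{K}}-\tfrac{3\beta}{1-\varepsilon}D^2 f(x)[h_0,h_0]$. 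Letting $\varepsilon\to0^+$ (closedness of $\mathcal{K}$) and then multiplying by $r^3>0$ gives back~\eqref{kb-CompatG} for $h$, which is exactly what part (i) requires.

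For part (ii), the crux — and what I expect to be the main obstacle — is the geometric lemma: \emph{if $G$ is a $\theta$-self-concordant barrier of $C$, $x\in\interior C$ and $x\pm h\in C$, then $\|h\|_x\le 2\theta+1\le 3\theta+1$}. Granting it, fix such $x,h$; the hypothesis~\eqref{kb-CompatG} reads $D^3 f(x)[h,h,h]\preccurlyeq_{\mathcal{K}}3\beta\|h\|_x\bigl(-D^2 f(x)[h,h]\bigr)$, and since $-D^2 f(x)[h,h]\in\mathcal{K}$ and $0\le\|h\|_x\le 3\theta+1$, enlarging the nonnegative scalar gives $3\beta\|h\|_x\bigl(-D^2 f(x)[h,h]\bigr)\preccurlyeq_{\mathcal{K}}-3(3\theta+1)\beta D^2 f(x)[h,h]$, i.e.\ \eqref{kb-Compat} with constant $(3\theta+1)\beta$.

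It remains to prove the lemma, which I would do by restricting to the line through $x$ in direction $h$: $\phi(t):=G(x+th)$ is a $\theta$-self-concordant barrier on an open interval $(a,b)\ni 0$ with $\phi''(0)=\|h\|_x^2=:r^2$, and, since the half-open segments $\{x\pm th:0\le t<1\}$ lie in $\interior C$, one has $a\le-1$ and $b\ge1$. Two standard facts then finish it. First, the defining barrier inequality $\langle\nabla G(z),y-z\rangle\le\theta$ (valid for $z\in\interior C$, $y\in C$), applied at $z=x+th$ towards the two ends of $(a,b)$, gives $(b-t)\phi'(t)\le\theta$ and $(t-a)\bigl(-\phi'(t)\bigr)\le\theta$ (with the obvious limiting form when an end is at infinity), hence $|\phi'(t)|\le\theta/(1-c)$ on any $[-c,c]$ with $c\in(0,1)$. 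Second, self-concordance makes $t\mapsto\phi''(t)^{-1/2}$ positive and $1$-Lipschitz, so $\phi''(t)\ge(r^{-1}+|t|)^{-2}$ and therefore $\int_{-c}^{c}\phi''(t)\,dt\ge 2r^2c/(1+rc)$. Combining these, $2r^2c/(1+rc)\le\phi'(c)-\phi'(-c)\le 2\theta/(1-c)$, i.e.\ $rc\bigl(r(1-c)-\theta\bigr)\le\theta$ for every $c\in(0,1)$; choosing $c=1-2\theta/r$ when $r>2\theta$ gives $r-2\theta\le1$, while $r\le2\theta$ holds automatically otherwise. The only truly degenerate case is a line lying entirely inside $\interior C$, which forces $\phi$ affine and $r=0$. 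I expect the bookkeeping in this one-dimensional estimate — in particular the choice of $c$ and the infinite-end variants — to be the only genuinely delicate point; the rest is formal manipulation of the cone ordering.
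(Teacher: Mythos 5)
The paper does not actually prove this proposition: it cites Nesterov's Remark~5.1.2 and leaves the argument to that reference, so there is no in-paper proof to compare against. Your reconstruction is correct and supplies what the paper omits. Part~(i) is exactly right: the Dikin ellipsoid makes $(1-\varepsilon)h_0$ an admissible direction for Definition~\ref{Kb-Compat}, positive homogeneity of the multilinear derivatives gives the inequality for $h_0$ with constant $3\beta/(1-\varepsilon)$, closedness of $\mathcal{K}$ handles the limit $\varepsilon\to 0^+$, and scaling by $r^3$ yields~\eqref{kb-CompatG} (the degenerate $h=0$ case being trivial). Part~(ii) correctly reduces to the geometric lemma $\|h\|_x\le 3\theta+1$ for $x\pm h\in C$, and your one-dimensional self-concordance estimate in fact delivers the slightly stronger bound $\|h\|_x\le 2\theta+1$; Nesterov uses the still sharper $\theta+2\sqrt{\theta}$, but any of these suffices to conclude $\big(\mathcal{K},(3\theta+1)\beta\big)$-compatibility. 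The one-dimensional bookkeeping — the $1$-Lipschitz bound on $\phi''(t)^{-1/2}$, the resulting lower bound $\phi''(t)\geq (r^{-1}+|t|)^{-2}$, the integral estimate, and the choice $c=1-2\theta/r$ when $r>2\theta$ — all check out, as does the treatment of the full-line degenerate case. The only blemish is cosmetic: you describe $\langle\nabla G(z),y-z\rangle\le\theta$ as the ``defining barrier inequality,'' when in Nesterov's framework it is a theorem about $\theta$-self-concordant barriers rather than part of their definition; the inequality itself is genuine, so this does not affect the argument.
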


What we want now is to prove that the function
\begin{equation}\label{eq: psi pre true-fast renyi}
	\Psi_{\alpha,S}(G,Z) := \big\|Z^\frac{1-\alpha}{2\alpha}S G^\frac12\big\|^{2\alpha}_{2\alpha}
	=
	\Tr\left[\big( G^\frac12 S^\dagger Z^\frac{1-\alpha}{\alpha} S G^\frac12 \big)^\alpha\right]
\end{equation} 
is $(\R_\geq,1)$-compatible so that we can prove that the functions 
\eqref{eq:truepsi} and \eqref{eq:fastpsi}
also are. We will do that by doing the composition with the appropriate linear map applying the following lemma:
\begin{lemma}\label{LinCompatibility}\cite[Lemma 2.5]{he2025}, \cite[Lemma 5.1.3 (iii)]{Nesterov1994}.
	Let $\cal U, V$ be finite-dimensional real vector spaces and let $C\subset \cal V$ be a closed convex subset.
	Let $f:\interior{C} \to \HH^n$ be a $\claseCk{3}$ $\HH^n_\succeq$-concave function. Suppose $f$ is $(\HH^n_\succeq, \beta)$-compatible with $C$, $\beta>0$.
	\begin{itemize}
		\item[(i)] Let ${\cal L}: \HH^n \to \HH^m$ a positive linear map. 
		Then, ${\cal L}\circ f: \interior{C} \to \HH^n \to \HH^m$ is $(\HH^m_\succeq, \beta)$-compatible with $C$.
		\item[(ii)] Let $\varphi: \cal U \to V$ be an affine map such that $\interior{C}\cap \im\varphi\neq \emptyset$. Then, $f\circ \varphi: \inv{\varphi}(\interior{C}) \to \interior{C} \to \HH^n$ is $(\HH^n_\succeq, \beta)$-compatible with $\inv{\varphi}(C)$.
	\end{itemize}
\end{lemma}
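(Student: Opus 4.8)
The plan is to read off both parts directly from Definition~\ref{Kb-Compat}, using that its three ingredients---concavity with respect to the positive semidefinite cone, the cone ordering, and the second- and third-order directional derivatives---all interact transparently with composition by a positive linear map on the codomain (part (i)) and by an affine map on the domain (part (ii)). In neither case does the constant $\beta$ change; the work is entirely bookkeeping.

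For part (i) I would first check $\HH^m_\succeq$-concavity of $\mathcal L\circ f$: since $f$ is $\HH^n_\succeq$-concave, $f(\lambda x+(1-\lambda)y)-\lambda f(x)-(1-\lambda)f(y)\in\HH^n_\succeq$, and applying the linear map $\mathcal L$, together with $\mathcal L(\HH^n_\succeq)\subseteq\HH^m_\succeq$, gives the same relation for $\mathcal L\circ f$. For the third-order estimate I would use that linearity lets $\mathcal L$ pass through derivatives, $D^k(\mathcal L\circ f)(x)[h,\dots,h]=\mathcal L(D^k f(x)[h,\dots,h])$ for $k=2,3$; then applying $\mathcal L$ to the statement \eqref{kb-Compat} for $f$, that is to $-3\beta D^2 f(x)[h,h]-D^3 f(x)[h,h,h]\in\HH^n_\succeq$, and using positivity once more yields \eqref{kb-Compat} for $\mathcal L\circ f$ with the same $\beta$ and the same convex set $C$.

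For part (ii) I would write $\varphi(x)=\varphi_0+Ax$ with $A$ its linear part. Since $\varphi$ is affine and $C$ is closed and convex, $\inv{\varphi}(C)$ is closed and convex, and $\inv{\varphi}(\interior C)$---nonempty by the hypothesis $\interior C\cap\im\varphi\neq\emptyset$---is open and, by continuity of $\varphi$, contained in $\interior(\inv{\varphi}(C))$; this is the domain of $g:=f\circ\varphi$. Concavity of $g$ transfers because $\varphi$ carries convex combinations in $\inv{\varphi}(C)$ to convex combinations in $C$. The key point is that $\varphi$ being affine forces $D^2\varphi=D^3\varphi=0$, so the chain rule collapses to $D^k(f\circ\varphi)(x)[h,\dots,h]=D^k f(\varphi(x))[Ah,\dots,Ah]$ for $k=1,2,3$, with no correction terms. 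Writing $h':=Ah$, the hypotheses $x\in\inv{\varphi}(\interior C)$ and $x\pm h\in\inv{\varphi}(C)$ translate into $\varphi(x)\in\interior C$ and $\varphi(x)\pm h'\in C$, so the compatibility inequality for $f$ at $\varphi(x)$ in direction $h'$ is literally the one for $g$ at $x$ in direction $h$, again with the same $\beta$.

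I do not expect a genuine obstacle: the lemma merely records that $(\HH_\succeq,\beta)$-compatibility is stable under the two structural operations used to build the barriers in Section~\ref{sec:selfconcordance}. The one place to be careful is the domain issue in part (ii)---the natural domain of $f\circ\varphi$ is $\inv{\varphi}(\interior C)$, which when $\varphi$ is not surjective may be strictly smaller than $\interior(\inv{\varphi}(C))$, so ``compatible with $\inv{\varphi}(C)$'' should be read, as in \cite{Nesterov1994,he2025}, relative to the affine image $\im\varphi$---and in part (i) one must use ``positive'' only in the weak sense $\mathcal L(\HH^n_\succeq)\subseteq\HH^m_\succeq$. For the maps actually occurring in Section~\ref{sec:selfconcordance}, namely compositions with $X\mapsto S^\dagger X S$ and with linear substitutions, both points are immediate.
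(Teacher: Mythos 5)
The paper does not prove this lemma: it is stated as a cited result from He~\cite{he2025} and Nesterov~\cite{Nesterov1994}, and then applied as a black box in Appendix~\ref{sec:selfconcordance}. So there is no internal proof to compare against. That said, your proposed argument is correct and is the standard one. Part (i) follows because a positive linear map $\mathcal L$ preserves the cone order $\preccurlyeq_{\HH^n_\succeq}$ and commutes with Fr\'echet derivatives by linearity, so both the concavity statement and the inequality \eqref{kb-Compat} transfer with the same $\beta$. Part (ii) follows because an affine $\varphi$ preserves convex combinations, has vanishing second- and third-order derivatives, and hence the chain rule gives $D^k(f\circ\varphi)(x)[h,\dots,h]=D^k f(\varphi(x))[Ah,\dots,Ah]$ with no correction terms, again preserving $\beta$. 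You have also correctly flagged the one subtle point: in (ii) the natural domain of $f\circ\varphi$ is $\varphi^{-1}(\interior C)$, which for non-surjective $\varphi$ should be read relative to $\im\varphi$, as Nesterov and He do; the nonemptiness hypothesis $\interior C\cap\im\varphi\neq\emptyset$ is exactly what makes this consistent. The applications in Appendix~\ref{sec:selfconcordance} (composition with $\lmap_{\gmap,\zmap}$, $\lmap_{\gmap,\zgmap}$, and the map $X\mapsto S^\dagger X S$) all fall comfortably within these hypotheses.
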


\subsection{\texorpdfstring{$(\R_\geq,1)$}{(R\_>,1)}-compatibility}

We are proving the $(\R_\geq,1)$-compatibility of \eqref{eq: psi pre true-fast renyi} this way:

\begin{theorem}\cite[Theorem 3.1]{he2025}\label{Kerry3.1}
	Let $\cal U, V$ be finite-dimensional real vector spaces. Let $C\subset \cal U$ be a closed convex set and $\cal K\subset V$ a closed convex cone. Let $\Phi:\interior{C} \to\cal V$ be a $\claseCk{3}$ function. Let $\cal K^*$ be the dual cone, and suppose that for all $z\in\cal K^*$ and for all $x\in \interior{C}$, $h\in\cal U$ such that $x\pm h\in \interior{C}$, the function
	\begin{equation}\label{kerry3.1}
		\begin{array}{ccccl}
			\phi & : & (-1,1) & \rightarrow & \R \\
			&& t & \mapsto & \inprod{z}{\Phi(x+th)}
		\end{array}
	\end{equation}
	is operator-concave in $(-1,1)$. Then $\Phi:\interior{C} \to\cal V$ is $({\cal K},1)$-compatible with $C$.
\end{theorem}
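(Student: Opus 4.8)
The plan is to verify directly the two conditions defining $(\mc{K},1)$-compatibility in Definition~\ref{Kb-Compat} with $\beta=1$: that $\Phi$ is $\mc{K}$-concave on $\interior C$, and that $D^3\Phi(x)[h,h,h]\preccurlyeq_{\mc{K}}-3\,D^2\Phi(x)[h,h]$ for every $x\in\interior C$ and every $h$ with $x\pm h\in C$. The first move is to reduce both conditions to the scalar level. Since $\mc{K}$ is a closed convex cone in a finite-dimensional space, the bipolar theorem gives $\mc{K}=\mc{K}^{**}$, so a vector $v\in\mc V$ lies in $\mc{K}$ if and only if $\inprod{z}{v}\ge 0$ for all $z\in\mc{K}^*$. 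Pairing the derivatives of $\Phi$ with a fixed $z$ and differentiating $\phi(t):=\inprod{z}{\Phi(x+th)}$ (which inherits $\claseCk{3}$-smoothness from $\Phi$) gives $\phi''(0)=\inprod{z}{D^2\Phi(x)[h,h]}$ and $\phi'''(0)=\inprod{z}{D^3\Phi(x)[h,h,h]}$, so both conditions turn into statements about the scalar functions $\phi$.

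The heart of the argument is the following scalar claim: if $\phi\colon(-1,1)\to\R$ is operator-concave, then $\phi''(0)\le 0$ and $\phi'''(0)\le-3\phi''(0)$. The first inequality is immediate, since operator-concavity specialised to $1\times1$ matrices is ordinary concavity. For the second, set $g:=-\phi$, which is operator convex on $(-1,1)$, and invoke the classical Hansen--Pedersen integral representation of operator convex functions on $(-1,1)$: there is a finite positive measure $\mu$ on $[-1,1]$ with $g(t)=g(0)+g'(0)\,t+\int_{[-1,1]}k_\lambda(t)\,d\mu(\lambda)$, where $k_\lambda(t):=\frac{t^2}{1-\lambda t}$. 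From the expansion $k_\lambda(t)=t^2+\lambda t^3+O(t^4)$ one reads off $k_\lambda(0)=k_\lambda'(0)=0$, $k_\lambda''(0)=2$, $k_\lambda'''(0)=6\lambda$; since $k_\lambda$ and its $t$-derivatives are bounded uniformly in $\lambda\in[-1,1]$ for $t$ near $0$, differentiation under the integral sign is valid and yields $g''(0)=2\mu([-1,1])$ and $g'''(0)=6\int_{[-1,1]}\lambda\,d\mu(\lambda)$. Hence $g'''(0)+3g''(0)=6\int_{[-1,1]}(\lambda+1)\,d\mu(\lambda)\ge 0$, because $\lambda+1\ge0$ on $[-1,1]$ and $\mu$ is nonnegative; translating back through $g=-\phi$ gives exactly $\phi'''(0)\le-3\phi''(0)$.

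With the scalar claim in hand the two conditions follow. For $\mc{K}$-concavity, take $x,y\in\interior C$, $\lambda\in[0,1]$, put $m:=(x+y)/2$ and $h:=(y-x)/2$ so that $m\pm h\in\interior C$, and for each $z\in\mc{K}^*$ use the hypothesis to see that $\phi(t)=\inprod{z}{\Phi(m+th)}$ is operator-concave, hence concave, on $(-1,1)$, and by continuity on $[-1,1]$; evaluating concavity at $t=1-2\lambda$ and ranging over $z$ (again $\mc{K}=\mc{K}^{**}$) gives $\Phi(\lambda x+(1-\lambda)y)\succcurlyeq_{\mc{K}}\lambda\Phi(x)+(1-\lambda)\Phi(y)$. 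For the third-order inequality, first suppose $x\pm h\in\interior C$; for each $z\in\mc{K}^*$, $\phi(t)=\inprod{z}{\Phi(x+th)}$ is operator-concave on $(-1,1)$, so the scalar claim gives $\inprod{z}{-3D^2\Phi(x)[h,h]-D^3\Phi(x)[h,h,h]}=-3\phi''(0)-\phi'''(0)\ge0$, and ranging over $z$ gives $D^3\Phi(x)[h,h,h]\preccurlyeq_{\mc{K}}-3\,D^2\Phi(x)[h,h]$. To upgrade this to $x\pm h\in C$, note that $x\pm sh\in\interior C$ for all $s\in[0,1)$ (a segment from an interior point to a point of the closure lies in the interior), apply the inequality with $sh$ in place of $h$, divide by $s^2$ to obtain $s\,D^3\Phi(x)[h,h,h]\preccurlyeq_{\mc{K}}-3\,D^2\Phi(x)[h,h]$, and let $s\to1^-$, using that $\mc{K}$ is closed.

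I expect the main obstacle to be the scalar claim, and specifically having the Hansen--Pedersen representation in precisely the normalisation attached to the interval $(-1,1)$ (so that $\mu$ is supported on $[-1,1]$ and the elementary bound $\lambda+1\ge0$ applies) together with a clean justification of term-by-term differentiation at $t=0$. Once that is secured, the constant $3$ is forced by the second and third derivatives of $k_\lambda$, and everything else is routine: the bipolar theorem to move between the cone and its dual, and the short density argument passing from $\interior C$ to $C$.
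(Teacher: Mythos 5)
Your proof is correct. The paper does not prove Theorem~\ref{Kerry3.1} itself (it is a direct citation of \cite[Theorem 3.1]{he2025}), but your argument --- dualising via $\mc{K}=\mc{K}^{**}$, deriving $\phi'''(0)\le -3\phi''(0)$ from the Kraus/Hansen--Pedersen integral representation with kernels $k_\lambda(t)=t^2/(1-\lambda t)$, and extending from $\interior C$ to $C$ by the scaling-and-closedness limit --- is the standard Nesterov-style route to such compatibility statements and, as far as I can tell, matches the approach in that reference. Incidentally, the extremal case $\mu=\delta_{-1}$ shows the constant $3$ (hence $\beta=1$) is sharp, a useful sanity check.
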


Let 
\begin{equation}\label{kerry3.1psi}
	\begin{array}{ccccl}
		\psi & : & (-1,1) & \rightarrow & \R \\
		&& t & \mapsto & \Psi_{\alpha,S}(G+tH,Z+tK)
	\end{array}
\end{equation}
where $(G,Z)\in \HH^k_\succ\times \HH^n_\succ$, $(H,K)\in \HH^k\times \HH^n$ are such that $(G\pm H,Z\pm K)\in \HH^k_\succ\times \HH^n_\succ$. The idea is to prove that $\psi$ is operator-concave in $(-1,1)$ so that we can apply Theorem \ref{Kerry3.1}.\\

For the following lemma, we define the the upper half-plane
$
\C^+ := \{ z\in \C; \; \operatorname{Im}(z) > 0 \}.
$

\begin{lemma}\label{Psihat}
	Let 
	\begin{equation}
		\begin{array}{ccccl}
			\hat\psi & : & (1,+\infty) & \rightarrow & \R \\
			&& t & \mapsto & t \psi\left(\frac1t\right)
		\end{array}
	\end{equation}
	$\hat\psi$ has an analytic continuation to $\C^+$ such that $\hat\psi(\C^+)\subset \C^+$.
\end{lemma}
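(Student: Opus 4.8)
The plan is to follow Epstein's analytic-continuation method in the form used in \cite[Theorem 1.2]{he2025} and \cite[Theorem 2.1]{HiaiTr2}, the one new ingredient being the insertion of the isometric compression $X\mapsto S^\dagger XS$. As a preliminary reduction, note that with $p:=(1-\alpha)/\alpha$ one has $(1+p)\alpha=1$, so $\Psi_{\alpha,S}$ is positively homogeneous of degree one, $\Psi_{\alpha,S}(\lambda A,\lambda B)=\lambda\,\Psi_{\alpha,S}(A,B)$ for $\lambda>0$. Writing $G+\tfrac1tH=\tfrac1t(tG+H)$ and likewise for $Z$, this turns $\hat\psi$ into
\begin{equation*}
	\hat\psi(t) \;=\; t\,\Psi_{\alpha,S}\!\de{G+\tfrac1t H,\,Z+\tfrac1t K}\;=\;\Psi_{\alpha,S}(tG+H,\,tZ+K)\;=\;\Tr\!\De{\de{\de{tG+H}^{1/2}S^\dagger\de{tZ+K}^{p}S\de{tG+H}^{1/2}}^{\alpha}}
\end{equation*}
for $t\in(1,\infty)$; equivalently, by the identities \eqref{eq:equivalence}, $\hat\psi(t)=\Tr\big[\big((S^\dagger(tZ+K)^{p}S)^{1/2}(tG+H)(S^\dagger(tZ+K)^{p}S)^{1/2}\big)^{\alpha}\big]$.

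Near $(1,\infty)$ the matrices $tG+H$ and $tZ+K$ are close to positive definite, so every power in this formula is an ordinary principal branch, all of them depend holomorphically on $z$ once $t$ is replaced by $z$ in a complex neighbourhood of $(1,\infty)$, and $\hat\psi$ extends holomorphically there. To push this continuation to all of $\C^+$ and to see that it maps into $\C^+$, I would organise the expression as a composition of maps each preserving a suitable class of operator-valued Pick functions and invoke the corresponding step of \cite[Theorem 2.1]{HiaiTr2} and \cite[Theorem 1.2]{he2025}: for $z\in\C^+$ the map $z\mapsto zG+H$ has positive-definite imaginary part $\operatorname{Im}(z)\,G\succ0$ (and likewise $zZ+K$); the fractional powers $(\cdot)^{p}$ and $(\cdot)^{1/2}$, with exponents in $(0,1]$, preserve operator monotonicity and positive-definiteness of the imaginary part; the compression $X\mapsto S^\dagger XS$ is a linear map that is positive and, by Proposition \ref{mapS_++}, strictly positive, so it again preserves these properties (Lemma \ref{LinCompatibility} being the analogous compatibility statement used afterwards); and $\alpha(1+p)=1$ fixes the ``total degree'' of the fractional powers at one, so the assembled scalar function $z\mapsto\hat\psi(z)$ behaves like an honest Pick function and extends to $\C^+$ with imaginary part of the right sign. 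Since $G,Z\succ0$ and the compression is strictly positive, the imaginary parts stay strictly positive throughout, which is what upgrades the conclusion from $\overline{\C^+}$ to the open half-plane $\C^+$ demanded by the statement.

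The main obstacle is that the assembled matrix $\de{zG+H}^{1/2}S^\dagger\de{zZ+K}^{p}S\de{zG+H}^{1/2}$ is \emph{not} itself an operator Pick function of $z$ --- a product of operators with positive-definite imaginary part need not have positive-definite imaginary part --- and the principal fractional powers can move spectra across the branch cut $(-\infty,0]$, so the continuation cannot simply be obtained by substituting $z$ into the closed form and reading powers through principal branches. This is precisely why one must work with the structured expression, as in \cite[Theorem 2.1]{HiaiTr2}: pass to the spectrally equivalent product $S^\dagger\de{zZ+K}^{p}S\cdot\de{zG+H}$, peel off the genuine positive-definite congruence by writing $zG+H=G^{1/2}\big(z\,\id+G^{-1/2}HG^{-1/2}\big)G^{1/2}$ and cycling, and control the remaining product against the normal matrix $z\,\id+G^{-1/2}HG^{-1/2}$, using throughout that $X\mapsto S^\dagger XS$ is strictly positive (Proposition \ref{mapS_++}) --- the single change from \cite{he2025}, which handles the case $S=\id$. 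Carrying this analysis through yields $\hat\psi(\C^+)\subset\C^+$.
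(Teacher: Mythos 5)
Your proposal is correct and follows essentially the same route as the paper: rewrite $\hat\psi(t) = \Psi_{\alpha,S}(tG+H, tZ+K)$ via degree-one homogeneity, and then run the analytic-continuation argument of \cite[Theorem 2.1]{HiaiTr2}, with the observation that the one new ingredient over \cite{he2025} is the insertion of the strictly positive compression $X\mapsto S^\dagger X S$ (Proposition~\ref{mapS_++}). The paper is a bit more economical than you are at the end — rather than re-sketching the sector analysis, it notes outright that $\hat\psi$ is literally a special case of \cite[Theorem 2.1(1)]{HiaiTr2} with $p=\tfrac{1-\alpha}{\alpha}$, $q=1$, $s=\alpha$, $\Psi(X)=S^\dagger XS$, $\Phi=\id$, and then just quotes the relevant intermediate facts (\cite[Lemma 1.2]{HiaiTr1} for fractional powers preserving $\mathcal I^+_d$, and \cite[eq.\ (2.5)]{HiaiTr2} for the spectral sector containment $\sigma(F(c))\subset\Gamma_{\gamma\pi}$ with $\gamma=1/\alpha$, so raising to the $\alpha$-th power lands the spectrum in $\Gamma_\pi$ and hence the trace in $\C^+$); your closing paragraph correctly identifies why a naive substitution fails and gestures at the same congruence-peeling mechanism, but leaves the last step less explicit than the paper does.
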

\begin{proof}
	Note that this is a particular case of Ref. \cite[Theorem 2.1 (1)]{HiaiTr2} with $p=\frac{1-\alpha}{\alpha}$, $q=1$, $s=\alpha$, $\Psi(X) = S^\dagger X S$, and $\Phi$ the identity map.
	First of all, let us write $\hat\psi$:
	\begin{align}
		\hat\psi(t) &
		= \Tr\Big[ \big( ( tG + H )^\frac12 S^\dagger ( tZ + K )^\frac{1-\alpha}{\alpha} S ( tG + H )^\frac12 \big)^\alpha \Big].
	\end{align}
	
	Let us remember the \emph{cartesian decomposition} (\cite[Section 4.4]{FZhangMatTheory}) of a matrix $C\in \C^{d\times d}$:
	the Hermitian matrices
	\begin{equation}
		\begin{aligned}
			\operatorname{Re}(C) &= \frac{C+C^\dagger}{2}, \\
			\operatorname{Im}(C) &= \frac{C-C^\dagger}{2i},
		\end{aligned}
	\end{equation}
	are called \emph{real part} and \emph{imaginary part} of $C$, respectively.
	In this context, $C = \operatorname{Re}(C) + i\operatorname{Im}(C)$ and they are the unique decomposition of $C = A + iB$ with $A,B$ Hermitian matrices.
    
	We define the set (\cite[section 1]{HiaiTr1})
	\begin{equation}
		{\cal I}_d^{+}
		:= \big\{ A+iB; \; A,B\in \HH^d, B\succ 0 \big\}\subset \C^{d\times d}.
	\end{equation}
	
	If $c\in \C^+$, $cG+H\in {\cal I}_k^{+}$ and $cZ+K\in {\cal I}_n^{+}$. By Ref. \cite[Lemma 1.2]{HiaiTr1}, then 
	$( cG + H )^\frac12\in {\cal I}_k^{+}$ and $( cZ + K )^\frac{1-\alpha}{\alpha}\in {\cal I}_n^{+}$, because $0 < \frac12, \frac{1-\alpha}{\alpha} \leq 1$. As the map $X \mapsto S^\dagger X S$ is strictly positive (see proposition \ref{mapS_++}), $S^\dagger ( tZ + K )^\frac{1-\alpha}{\alpha} S\in {\cal I}_n^{+}$.
    \\

	Let us call
	\begin{equation}
		F(t) = ( tG + H )^\frac12 S^\dagger ( tZ + K )^\frac{1-\alpha}{\alpha} S ( tG + H )^\frac12
	\end{equation}
	and $\gamma = \frac{1-\alpha}{\alpha} + 1 = \frac{1}{\alpha}\in \left(1,2\right]\subset \left(0,2\right]$. By Ref. \cite[proof of Theorem 2.1, eq (2.5)]{HiaiTr2}, if $\sigma\big(F(c)\big)$ is the set of the eigenvalues of $F(c)$,
	$
	\sigma\big(F(c)\big)\subset \Gamma_{\gamma\pi}
	$
	for $c\in \C^+$, where
	\begin{equation}
		\Gamma_{\Theta} = \{ re^{i\theta}; \; r>0, 0<\theta<\Theta \},\; \Theta\in (0,2\pi).
	\end{equation}
	Therefore,  
	$
	\sigma\big(F(c)^\alpha\big)\subset \Gamma_{\pi} = {\cal I}_d^{+}
	$
	(remember that $\gamma = \frac1\alpha$). Then,
	$
	\Tr[(F(c)^\alpha]\in {\cal I}_d^{+}.
	$
\end{proof}

Once we have lemma \ref{Psihat}, by Lowner's theorem (\cite[Lemma 2.1]{he2025}), we conclude that $\hat\psi
$ is operator-monotone in $(1,+\infty)$. Now we can get that $\hat\psi
$ is operator concave in $(0,1)$ by Ref. \cite[Lemma 4.3]{he2025}, since
\begin{align}
	\hat{\hat{f}}(t) &= t\hat f\left(\frac1t\right) = t\frac1t f\left(\frac{1}{\frac1t}\right) = f(t).
\end{align}

\begin{lemma}\label{SwapOpConcave}
	Let $\phi$ the function \eqref{kerry3.1} in the conditions of Theorem \ref{Kerry3.1}. If $\phi$ is operator-concave in $(0,1)$, it is also operator-concave in $(-1,0)$.
\end{lemma}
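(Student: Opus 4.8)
The plan is to obtain operator-concavity on $(-1,0)$ from the operator-concavity on $(0,1)$ already established, by combining the reflection $t\mapsto -t$ with the invariance of the set-up of Theorem \ref{Kerry3.1} under $h\mapsto -h$. Concretely, I would introduce $\tilde\phi(t):=\phi(-t)$, which is again defined (and $\mathscr{C}^3$) on $(-1,1)$, and observe that
\begin{equation}
    \tilde\phi(t)=\inprod{z}{\Phi\big(x+(-t)h\big)}=\inprod{z}{\Phi\big(x+t(-h)\big)},
\end{equation}
so that $\tilde\phi$ is itself a function of the form \eqref{kerry3.1}, obtained from the admissible data $(z,x,h)$ by replacing the direction $h$ with $-h$; and since the admissibility condition $x\pm h\in\interior C$ is symmetric in $h$, the data $(z,x,-h)$ is admissible as well.

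The next step is to note that the operator-concavity on $(0,1)$ at our disposal was not established for one distinguished direction but uniformly over all admissible directions: the chain of arguments leading to it (Lemma \ref{Psihat} and the subsequent application of Löwner's theorem and \cite[Lemma 4.3]{he2025}) treats the perturbation $(H,K)$ only through the sign-symmetric hypothesis $(G\pm H,Z\pm K)\succ 0$, hence is insensitive to the replacement $(H,K)\mapsto(-H,-K)$. Applying it to the admissible data $(z,x,-h)$ instead of $(z,x,h)$ therefore yields that $\tilde\phi$ is operator-concave on $(0,1)$. It then remains only to translate this back via functional calculus: for Hermitian $M$ with spectrum in $(0,1)$ one has $\tilde\phi(M)=\phi(-M)$, so the defining inequality for operator-concavity of $\tilde\phi$ on $(0,1)$, after the substitution $M'=-M$ (whose spectrum then lies in $(-1,0)$), is exactly the defining inequality for operator-concavity of $\phi$ on $(-1,0)$; as $M'$ sweeps out all Hermitian matrices with spectrum in $(-1,0)$, we are done.

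I do not expect a genuine obstacle in carrying this out. The only point that must not be glossed over is that the argument is inherently a statement about the whole family of admissible directions rather than about a single function $\phi$: a generic $\mathscr{C}^3$ function can be operator-concave on $(0,1)$ yet fail to be so (even to be concave) on $(-1,0)$, so it is essential that the concavity on $(0,1)$ was proved in the uniform form just recalled. Once that is kept in mind, the proof is a short bookkeeping of intervals, signs, and functional calculus, with no analytic content beyond what is already contained in Lemma \ref{Psihat}.
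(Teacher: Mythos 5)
Your proposal is correct and rests on the same core identity the paper uses: replacing $t$ with $-t$ in $\phi$ is the same as replacing the direction $h$ with $-h$, and the admissibility condition $x\pm h\in C$ is invariant under this swap. You are, however, noticeably more careful than the paper's own proof about what is actually being established. Read literally, the paper defines $\phi_-=\phi_+(-\,\cdot\,)$ (i.e.\ $\phi$ with $h$ replaced by $-h$) and shows, via $\lambda(-X)+(1-\lambda)(-Y)=-\big(\lambda X+(1-\lambda)Y\big)$ and functional calculus, that $\phi_-$ is operator-concave on $(-1,0)$ --- not that $\phi$ itself is, which is what the lemma statement asserts. That gap is exactly the issue you flag: for a single fixed $h$, operator-concavity of $\phi^h$ on $(0,1)$ does not by itself give operator-concavity of $\phi^h$ on $(-1,0)$. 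What makes either version work is that the $(0,1)$-concavity established earlier (via Lemma \ref{Psihat}, Lowner's theorem and \cite[Lemma 4.3]{he2025}) holds uniformly over all admissible $(z,x,h)$, and $h\mapsto -h$ is a bijection of that family. You make this invariance explicit, apply the $(0,1)$-concavity to the data $(z,x,-h)$, and then return to $\phi$ by the substitution $M'=-M$; this yields a self-contained derivation of the lemma exactly as stated. The paper's proof is shorter but leaves the reader to supply the uniformity argument when the lemma is subsequently applied to $\psi$. Both routes are sound once this is understood, and they buy the same thing: the bridge from $(0,1)$ to $(-1,0)$ needed before invoking \cite[Lemma 4.4]{he2025} to obtain operator-concavity on all of $(-1,1)$.
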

\begin{proof}
	Let us recall $\phi_+$ the function \eqref{kerry3.1} $\phi$.
	Because of the conditions of Theorem \ref{Kerry3.1}, the function \eqref{kerry3.1} is well defined in $(-1,1)$ swapping $h$ for $-h$. We call $\phi_-$ this new function.
	\begin{equation*}\label{}		
		\begin{array}{ccccccl}			
			\phi_- & : & (-1,0) & \rightarrow & (0,1) & \overset{\phi_+}{\longrightarrow} & \R \\			
			&& t & \mapsto & -t & \mapsto & \phi_+(-t) 
		\end{array}	
	\end{equation*}
	Let $X,Y\in \HH^d$ whose eigenvalues are in $(-1,0)$. 
	By hypothesis, $\phi_+$ is operator-concave in $(0,1)$. 
	Then, given $\lambda\in [0,1]$
	\begin{align*}
		\phi_-\big( \lambda X + (1-\lambda) Y \big) &
		= \phi_+\big( -(\lambda X + (1-\lambda) Y) \big) 
		= \phi_+\big( \lambda (-X) + (1-\lambda) (-Y) \big) \\ &
		\succeq \lambda \phi_+(-X) + (1-\lambda) \phi_+(-Y) = \lambda \phi_-(X) + (1-\lambda) \phi_-(Y).
	\end{align*}
\end{proof}

By lemma \ref{SwapOpConcave}, the function \eqref{kerry3.1psi} $\psi$ is also operator-concave in $(-1,0)$. Using Ref. \cite[Lemma 4.4]{he2025} for all $d\in \N$, we get that $\psi$ is operator-concave in $(-1,1)$. Finally, applying Theorem \ref{Kerry3.1}, we conclude that the function
\eqref{eq: psi pre true-fast renyi}
is $(\R_\geq,1)$-compatible with $\HH^k_\succeq\times \HH^n_\succeq$ for $\alpha\in\left[\frac12, 1\right)$.

\subsection{Proof of self-concordance for the RényiQKD cone}

Let us rewrite the function \eqref{eq:truepsi} in terms of \eqref{eq: psi pre true-fast renyi}:
\begin{equation}
	\Psihat(\rho, \sigma) = \Psi_{\alpha,S}\circ \lmap_{\gmap,\zmap}(\rho, \sigma),
\end{equation}
where $\lmap_{\gmap,\zmap}(\rho,\sigma) = \big( \gmap(\rho),\zmap(\sigma) \big)$. Applying Lemma \ref{LinCompatibility} (ii), $\Psihat(\rho, \sigma)$ is $(\R_\geq,1)$-compatible with $\HH^q_\succeq\times \HH^m_\succeq$.
Once we know that, 
we can prove that the function
\begin{equation}
	f(u, \rho, \sigma) = -\log(\Psihat(\rho, \sigma) - u) - \logdet(\rho) - \logdet(\sigma) \label{eq:truebarrier hypo}
\end{equation}
is a self-concordant barrier for $\cl\hypo\big( \Psihat(\rho, \sigma) \big)$ by Ref. \cite[Lemma 2.4]{he2025}:

\begin{theorem}\label{SCKerry}
	Let $\cal U, V$ be finite-dimensional real vector spaces. 
	Let $C\subset \cal U$ be a closed convex set and $\cal K\subset V$ a closed convex cone. 
	Let $G : \interior{C}\subset\cal U \to \R$ be a $\claseCk{2}$ $\nu$-SCB for $C$ and $F : \interior{C}\subset \cal U \to V$ $(\cal K, \beta)$-compatible with $C$, $\beta\geq 0$. 
	Let $H : \interior{\cal K} \to \R$ be a $\eta$-SCB for $\cal K$. 
	Then, the function
	\begin{displaymath}
		\begin{array}{ccccl}
			f &:& \interior{C} \times {\cal V} & \rightarrow & \R\cup \{+\infty\} \\
			&& (x,y) & \mapsto & H\big( F(x) - y \big) + \beta^3 G(x)
		\end{array}
	\end{displaymath}
	is a $(\eta + \beta^3 \nu)$-SCB for $\cl\hypo F$, where
	$$\hypo F = \left\{ (y,x)\in {\cal W}\times \interior{C}\; ;\, y\preccurlyeq_{\cal K} F(x) \right\}.$$
\end{theorem}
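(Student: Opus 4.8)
Theorem~\ref{SCKerry} is the Nesterov--Nemirovski barrier-composition theorem for hypographs of $\mathcal K$-concave maps, quoted here as \cite[Lemma~2.4]{he2025}, so the plan is to follow that classical template rather than invent anything new. First I would verify that $\interior(\cl\hypo F)=\{(x,y):x\in\interior C,\ F(x)-y\in\interior\mathcal K\}$ is exactly the set on which $f$ is finite. The barrier property \eqref{eq: Def barrera} and the parameter count $\eta+\beta^3\nu$ then follow from the barrier properties of $G$ and $H$ combined with a compactness argument, as in \cite{Nesterov1994}: if $x\to\partial C$ then $\beta^3 G(x)\to+\infty$ dominates, and if $x$ remains in a compact subset of $\interior C$ with $F(x)-y\to\partial\mathcal K$ then $H(F(x)-y)\to+\infty$; the parameter is obtained by adding $\eta$, unchanged under composition with the affine map $x\mapsto F(x)-y$, and $\beta^3\nu$, the parameter of the scaled barrier $\beta^3 G$.

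The technical core, and the step I expect to be the main obstacle, is the self-concordance inequality \eqref{eq: Def SC}. Fixing $(x,y)$ in the interior and a direction $(h,k)$, I would set $g(t)=F(x+th)-y-tk$, expand $t\mapsto H(g(t))+\beta^3 G(x+th)$ to third order at $t=0$ by the chain rule, and bound each resulting term: $|D^3H(g)[g',g',g']|\le 2(D^2H(g)[g',g'])^{3/2}$ by self-concordance of $H$; the mixed term $D^2H(g)[g',g'']$ by Cauchy--Schwarz for the positive-semidefinite form $D^2H(g)$; the term $DH(g)[g'']$ using that $\mathcal K$-concavity of $F$ makes $-g''(0)=-D^2F(x)[h,h]\in\mathcal K$, so that $DH(g)[g'']\ge 0$ and, via a one-variable self-concordance argument along the ray $s\mapsto H\bigl(g(0)-s\,g''(0)\bigr)$, also $D^2H(g)[g'',g'']\le\bigl(DH(g)[g'']\bigr)^2$; and $DH(g)[g''']$ from the $(\mathcal K,\beta)$-compatibility of $F$ with the barrier $G$ (Proposition~\ref{Kb-Compat<=>G}(i)), which dominates $D^3F(x)[h,h,h]$ in the $\preccurlyeq_{\mathcal K}$ order by $-3\beta\,D^2F(x)[h,h]\,\bigl(D^2G(x)[h,h]\bigr)^{1/2}$. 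The delicate point is to reorganise these estimates into a single quadratic form in $\bigl(\sqrt{D^2H(g)[g',g']},\ |DH(g)[g'']|,\ \beta^{3/2}\sqrt{D^2G(x)[h,h]}\bigr)$ and check that the $3/2$-power inequality closes with constant exactly $2$; this bookkeeping, where the cube in $\beta^3$ is forced, is the heart of the argument.

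With Theorem~\ref{SCKerry} established, I would close Section~\ref{sec:ComputeSKR} by applying it to the cones at hand. For the RényiQKD cone, take $C=\HH^q_\succeq\times\HH^m_\succeq$, $G(\rho,\sigma)=-\logdet\rho-\logdet\sigma$ (a $(q+m)$-SCB for $C$), $\mathcal K=\R_\ge$, $H(t)=-\log t$ (a $1$-SCB for $\R_\ge$), $\beta=1$, and $F=\Psihat$ as in \eqref{eq:truepsi}, which is $(\R_\ge,1)$-compatible with $C$ by the preceding subsection together with Lemma~\ref{LinCompatibility}(ii); the theorem then yields that \eqref{eq:truebarrier hypo} is a $(q+m+1)$-SCB for $\cl\hypo\Psihat$. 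Since $\gmap$ and $\zmap$ are linear and the exponents $\tfrac{1-\alpha}{2\alpha}$ and $\tfrac12$ in \eqref{eq:truepsi} sum to $\tfrac{1}{2\alpha}$, $\Psihat$ is positively homogeneous of degree one, so this barrier is in fact logarithmically homogeneous; composing with the linear bijection $u\mapsto s_\alpha u=-u$, which carries $\cl\hypo\Psihat$ onto the cone \eqref{eq: qkd cone}, turns it into the LHSCB \eqref{eq:truebarrier}. The FastRényiQKD case is identical, with $C=\HH^q_\succeq$, $G=-\logdet\rho$, $H=-\log$, $\beta=1$, and $F$ the one-argument function \eqref{eq:fastpsi} (compatible by the same reasoning), giving \eqref{eq:fastbarrier} as a $(q+1)$-LHSCB for the cone \eqref{eq: fast qkd cone}.
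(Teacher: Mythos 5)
Your proposal and the paper's proof diverge at the very first step: the paper does not prove Theorem~\ref{SCKerry} at all, but dismisses it with a one-sentence citation, ``This is a particular case of \cite[Theorem 5.4.4]{Nesterov2018}, noticing proposition \ref{Kb-Compat<=>G} (i).'' That is, the paper's entire argument consists of (a) identifying the theorem as Nesterov's composition rule for barriers of epigraphs/hypographs, and (b) invoking Proposition~\ref{Kb-Compat<=>G}(i) to convert the hypothesis ``$F$ is $(\mathcal K,\beta)$-compatible with $C$'' into the form ``$F$ is $(\mathcal K,\beta)$-compatible with any barrier of $C$'', which is what \cite[Theorem 5.4.4]{Nesterov2018} actually takes as input. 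You instead propose to reprove the Nesterov--Nemirovski composition theorem from scratch.

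Your sketch of that direct proof is in the right spirit and the ingredients you list are the correct ones: expand $t\mapsto H(F(x+th)-y-tk)+\beta^3 G(x+th)$ to third order; bound $D^3H[g',g',g']$ by self-concordance of $H$; bound the mixed $D^2H[g',g'']$ term by Cauchy--Schwarz in the $D^2H$-inner product; exploit $\mathcal K$-concavity, i.e.\ $-g''(0)\in\mathcal K$, to get $DH(z)[g'']\ge\sqrt{D^2H(z)[g'',g'']}\ge 0$ via the recessive-direction property of barriers; and use compatibility to control $DH[g''']$. However, you explicitly leave the final step --- reorganizing these into a single cubic-vs-quadratic estimate with constant exactly $2$ --- as ``the heart of the argument'' and do not carry it out. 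As it stands the proposal is therefore a plan, not a proof; the paper sidesteps this entirely by citing \cite{Nesterov1994,Nesterov2018}. Also note that your barrier-property argument (``if $x\to\partial C$ then $\beta^3 G(x)\to+\infty$ dominates'') silently requires $\beta>0$, while the theorem is stated with $\beta\ge0$; in the paper's applications $\beta=1$, so this is harmless, but it is a gap if you mean to prove the statement as written.

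The remainder of your proposal (instantiating the theorem for the RényiQKD and FastRényiQKD cones) does match the paper's two corollaries following Theorem~\ref{SCKerry}, with the same choices of $C$, $G$, $\mathcal K$, $H$, $F$, and $\beta=1$. Your additional remark that $\Psihat$ is positively $1$-homogeneous, so that the resulting barrier is in fact logarithmically homogeneous (not merely self-concordant) after the sign flip $u\mapsto s_\alpha u$, is a genuinely useful observation the paper leaves implicit --- it is what promotes the SCB delivered by Theorem~\ref{SCKerry} to the LHSCB of Definition~\ref{Def: LHSCB} that the conic solver actually needs.
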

This is a particular case of \cite[Theorem 5.4.4]{Nesterov2018}, noticing proposition \ref{Kb-Compat<=>G} (i).

\begin{corollary}
	The function \eqref{eq:truebarrier hypo}
	is a $(1+q+m)$-SCB for $\cl\hypo\big( \Psihat(\rho, \sigma) \big).$
\end{corollary}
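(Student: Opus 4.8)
The plan is to obtain the corollary directly from Theorem~\ref{SCKerry} (He's Lemma~2.4) by plugging in the right ingredients, since all the substantive analytic work---the $(\R_\geq,1)$-compatibility of $\Psihat$---has already been carried out in the preceding subsection. Concretely, I would instantiate Theorem~\ref{SCKerry} with $\mathcal{U} = \HH^q\times\HH^m$, the closed convex set $C = \HH^q_\succeq\times\HH^m_\succeq$ (so $\interior C = \HH^q_\succ\times\HH^m_\succ$), the spaces $\mathcal{V} = \mathcal{W} = \R$, and the cone $\mathcal{K} = \R_\geq$. The barrier $G$ for $C$ is taken to be $G(\rho,\sigma) = -\logdet(\rho) - \logdet(\sigma)$, the $(\mathcal{K},\beta)$-compatible map $F$ is taken to be $\Psihat$ with $\beta = 1$, and the barrier $H$ for $\mathcal{K}$ is taken to be $H(t) = -\log(t)$.

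First I would record two classical facts \cite{Nesterov1994,Nesterov2018}: $-\logdet$ is the canonical $d$-self-concordant barrier for $\HH^d_\succeq$, so $G$, being the direct sum of the barrier on the $\rho$-block and the one on the $\sigma$-block, is a $(q+m)$-SCB for $C$; and $-\log$ is the standard $1$-SCB for $\R_\geq$. Thus $\nu = q+m$ and $\eta = 1$. Next I would invoke the output of the previous subsection: $\Psi_{\alpha,S}$ is $(\R_\geq,1)$-compatible with $\HH^k_\succeq\times\HH^n_\succeq$ for $\alpha\in[\tfrac12,1)$; and since $\Psihat = \Psi_{\alpha,S}\circ\lmap_{\gmap,\zmap}$ with $\lmap_{\gmap,\zmap}(\rho,\sigma) = \bigl(\gmap(\rho),\zmap(\sigma)\bigr)$ a linear (hence affine) map whose image meets $\HH^k_\succ\times\HH^n_\succ$ (because $\gmap,\zmap$ are strictly positive), Lemma~\ref{LinCompatibility}(ii) upgrades this to $(\R_\geq,1)$-compatibility of $\Psihat$ with $C$; in particular $\Psihat$ is $\claseCk{3}$ on $\interior C$. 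This fixes $\beta = 1$.

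With $\nu = q+m$, $\eta = 1$, $\beta = 1$, Theorem~\ref{SCKerry} then immediately yields that
\[
  (u,\rho,\sigma)\;\longmapsto\; H\bigl(\Psihat(\rho,\sigma) - u\bigr) + \beta^3 G(\rho,\sigma) = -\log\bigl(\Psihat(\rho,\sigma) - u\bigr) - \logdet(\rho) - \logdet(\sigma)
\]
is an $(\eta + \beta^3\nu) = (1+q+m)$-SCB for $\cl\hypo\Psihat = \cl\{(u,\rho,\sigma) : u\le\Psihat(\rho,\sigma)\}$, which is exactly the function~\eqref{eq:truebarrier hypo} and the set in the statement. The blow-up on the boundary, the logarithmic homogeneity, the self-concordance, and the barrier parameter are all part of the conclusion of Theorem~\ref{SCKerry}, so nothing further is required. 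As a bonus, composing with the linear isomorphism $(u,\rho,\sigma)\mapsto(-u,\rho,\sigma)$---legitimate since $s_\alpha = -1$ for $\alpha\in[\tfrac12,1)$, and order-isomorphisms of this type preserve the SCB property and its parameter---shows that the barrier~\eqref{eq:truebarrier} is a $(1+q+m)$-SCB for the cone $\mc{K}^{\alpha,\gmap,\zmap,S}_\text{RényiQKD}$ itself.

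I do not anticipate a genuine obstacle: the only points needing care are the bookkeeping of the barrier parameters (the $\beta^3$-weighting of $G$, which is harmless here since $\beta=1$) and checking the mild non-degeneracy hypotheses of Theorem~\ref{SCKerry}---that $\interior C$ intersects the domain of $F$ and that $\cl\hypo\Psihat$ has nonempty interior---both of which are immediate. The real content, namely the operator-concavity leading to $(\R_\geq,1)$-compatibility of $\Psi_{\alpha,S}$, is borrowed wholesale from the earlier part of this appendix.
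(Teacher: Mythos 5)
Your proposal is correct and follows essentially the same route as the paper: establish $(\R_\geq,1)$-compatibility of $\Psi_{\alpha,S}$, transfer it to $\Psihat=\Psi_{\alpha,S}\circ\lmap_{\gmap,\zmap}$ via Lemma~\ref{LinCompatibility}(ii), then instantiate Theorem~\ref{SCKerry} with $G=-\logdet\oplus-\logdet$ ($\nu=q+m$), $F=\Psihat$ ($\beta=1$), and $H=-\log$ ($\eta=1$). The closing remark connecting $\cl\hypo\Psihat$ to the RényiQKD cone via $(u,\rho,\sigma)\mapsto(-u,\rho,\sigma)$ is a correct observation that the paper leaves implicit, but the core argument for the stated corollary is identical to the paper's.
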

\begin{proof}
	We apply Theorem \ref{SCKerry} with
	\begin{itemize}
		\item ${\cal U} = \HH^q\times \HH^m$, $C = \HH^q_\succeq\times \HH^m_\succeq$, $G(\rho, \sigma) = -\logdet(\rho) -\logdet(\sigma)$, which is a $(q+m)$-SCB for $\HH^q_\succeq\times \HH^m_\succeq$ $(\nu = q+m)$.
		\item ${\cal V} = \R$, ${\cal K} = \R_\geq$.
		\item $F = \Psihat : \HH^q_\succ\times \HH^m_\succ \to \R$, which is $(\R_\geq, 1)$-compatible with $\HH^q_\succeq\times \HH^m_\succeq$ $(\beta = 1)$.
		\item $H = -\log$, which is a $1$-SCB for $\R_\geq$ $(\eta = 1)$.
	\end{itemize}
\end{proof}

\subsection{Proof of self-concordance for the FastRényiQKD cone}

Following an analogous argument, we can rewrite the function \eqref{eq:fastpsi} in terms of \eqref{eq: psi pre true-fast renyi}:
\begin{equation}
	\Psihat(\rho) = \Psi_{\alpha,S}\circ \lmap_{\gmap,\zgmap}(\rho),
\end{equation}
where $\lmap_{\gmap,\zgmap}(\rho) = \big( \gmap(\rho),\zgmap(\rho) \big)$. Applying again Lemma \ref{LinCompatibility} (ii), $\Psihat(\rho)$ is $(\R_\geq,1)$-compatible with $\HH^q_\succeq$.
We can prove then that the function
\begin{equation}
	f(u, \rho) = -\log(\Psihat(\rho) - u) - \logdet(\rho) - \logdet(\sigma) \label{eq:fastbarrier hypo}
\end{equation}
is a self-concordant barrier for $\cl\hypo\big( \Psihat(\rho) \big)$:

\begin{corollary}
	The function \eqref{eq:fastbarrier hypo}
	is a $(1+q)$-SCB for $\cl\hypo\big( \Psihat(\rho) \big).$
\end{corollary}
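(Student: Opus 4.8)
The plan is to repeat almost verbatim the argument of the preceding corollary (self-concordance for the RényiQKD cone), specialised to the single-matrix function $\Psihat(\rho)$ of \eqref{eq:fastpsi}. All the required ingredients have already been assembled in the previous subsection: we wrote $\Psihat(\rho) = \Psi_{\alpha,S}\circ\lmap_{\gmap,\zgmap}(\rho)$ with $\lmap_{\gmap,\zgmap}(\rho) = \big(\gmap(\rho),\zgmap(\rho)\big)$, and, combining the $(\R_\geq,1)$-compatibility of $\Psi_{\alpha,S}$ with the appropriate product of PSD cones together with Lemma \ref{LinCompatibility}(ii) applied to the affine (here linear) map $\lmap_{\gmap,\zgmap}$, we concluded that $\Psihat(\rho)$ is $(\R_\geq,1)$-compatible with $\HH^q_\succeq$. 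I note in passing that the term $-\logdet(\sigma)$ appearing in \eqref{eq:fastbarrier hypo} is a typographical leftover from the RényiQKD case; the FastRényiQKD cone has no $\sigma$ argument, and the barrier is $f(u,\rho) = -\log(\Psihat(\rho)-u) - \logdet(\rho)$, in agreement with \eqref{eq:fastbarrier} after the sign convention $s_\alpha = -1$ valid for $\alpha\in[1/2,1)$.

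With this in hand, the proof is a single application of Theorem \ref{SCKerry}, with the choices: $\mc{U} = \HH^q$, $C = \HH^q_\succeq$, and $G(\rho) = -\logdet(\rho)$, the standard $q$-self-concordant barrier for $\HH^q_\succeq$ (so $\nu = q$); $\mc{V} = \R$ and $\mc{K} = \R_\geq$; $F = \Psihat : \HH^q_\succ \to \R$, which is $(\R_\geq,1)$-compatible with $C$ by the previous step (so $\beta = 1$); and $H = -\log$, the $1$-self-concordant barrier for $\R_\geq$ (so $\eta = 1$). Theorem \ref{SCKerry} then yields that $(u,\rho)\mapsto H(\Psihat(\rho)-u) + \beta^3 G(\rho) = -\log(\Psihat(\rho)-u) - \logdet(\rho)$ is a $(\eta + \beta^3\nu) = (1+q)$-self-concordant barrier for $\cl\hypo\big(\Psihat(\rho)\big)$, which is precisely the asserted barrier and domain.

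I do not anticipate a genuine obstacle, since the two-argument version has just been established and the reduction to one argument is strictly simpler: the product barrier $-\logdet(\rho) - \logdet(\sigma)$ collapses, and Lemma \ref{LinCompatibility}(ii) still applies because $\interior C\cap\im\lmap_{\gmap,\zgmap}\neq\emptyset$. The only step deserving a word of care is checking that $\lmap_{\gmap,\zgmap}$ maps $\HH^q_\succ$ into the interior of the relevant product cone — equivalently, that $\gmap$ and $\zgmap$ are strictly positive maps — so that $\Psi_{\alpha,S}$ is evaluated on strictly positive matrices and $\Psihat$ is a $\claseCk{3}$ function on $\HH^q_\succ$; this holds by construction, since facial reduction was carried out precisely so that these maps become strictly positive, which is what legitimises applying Theorem \ref{SCKerry} at all.
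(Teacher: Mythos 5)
Your proof matches the paper's essentially verbatim: the same application of Theorem~\ref{SCKerry} with $\mc{U}=\HH^q$, $C=\HH^q_\succeq$, $G=-\logdet$, $\nu=q$, $\mc{V}=\R$, $\mc{K}=\R_\geq$, $F=\Psihat$, $\beta=1$, $H=-\log$, $\eta=1$, yielding a $(1+q)$-SCB. You are also right that the $-\logdet(\sigma)$ term in \eqref{eq:fastbarrier hypo} is a typographical leftover from the two-argument case; the intended barrier has no $\sigma$ argument, which is what the stated parameter $1+q$ (rather than $1+q+m$) requires.
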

\begin{proof}
	We apply Theorem \ref{SCKerry} with
	\begin{itemize}
		\item ${\cal U} = \HH^q$, $C = \HH^q_\succeq$, $G(\rho) = -\logdet(\rho)$, which is a $q$-SCB for $\HH^q_\succeq$ $(\nu = q)$.
		\item ${\cal V} = \R$, ${\cal K} = \R_\geq$.
		\item $F = \Psihat : \HH^q_\succ \to \R$, which is $(\R_\geq, 1)$-compatible with $\HH^q_\succeq$ $(\beta = 1)$.
		\item $H = -\log$, which is a $1$-SCB for $\R_\geq$ $(\eta = 1)$.
	\end{itemize}
\end{proof}

\section{Derivatives}\label{sec:derivatives}

\subsection{Background}

Real functions can be extended to act on Hermitian matrices by diagonalizing the matrix and applying the function to each (real) eigenvalue. More precisely, let $f:\mathbb{R} \mapsto \mathbb{R}$ and $X \in \mathbb{H}^n$ with spectral decomposition $X = U\operatorname{diag}(\vec{\lambda})U^\dagger$, where $\vec{\lambda} \in \mathbb{R}^n$ is the vector of eigenvalues. Then we define $f(X)$ as
\begin{equation}
    f(X) = U\operatorname{diag}(f(\vec{\lambda}))U^\dagger,
\end{equation}
where $f$ acts on $\vec{\lambda}$ elementwise. We will call $f : \mathbb{H}^n \to \mathbb{H}^n$ a \textit{spectral function}.

The barriers of our cones, eqs. \eqref{eq:truebarrier} and \eqref{eq:fastbarrier}, are defined in terms of spectral functions. In order to differentiate spectral functions, we need to extend the usual rules of multivariable calculus to these functions. In this section, we collect some of these rules for convenience, but they can also be found in matrix analysis books like Refs. \cite{Higham2008, Hiai2014}. We will refer to these properties in the computation of the gradient, Hessian and third derivatives of the barrier.

First, let's recall some rules valid for general functions. The Fréchet derivative of the composition of two functions applied to a vector direction $H$ obeys the chain rule:
\begin{equation}\label{eq:chain rule}
    D(f\circ g)(X)[H] = Df(g(X))[Dg(X)[H]] \:,
\end{equation}
and the derivative of a product obeys the product rule:
\begin{equation}\label{eq:product rule}
    D(fg)(X)[H] = Df(X)[H]g(X) + f(X)Dg(X)[H] \:.
\end{equation}
In particular, we will use it when the product is composed with the trace. Since the trace is a linear operator, its differential is itself and we obtain
\begin{equation}\label{eq:trace product rule}
    D(\Tr \circ fg)(X)[H] = \inprod{Df(X)[H]}{g(X)} +  \inprod{f(X)}{Dg(X)[H]} \:.
\end{equation}
Turning our attention to spectral functions, the derivative is given by
\begin{equation}\label{eq:frechet derivative}
    Df(X)[H] = U(f^{[1]}(\Lambda) \odot (U^\dagger H U))U^\dagger \:,
\end{equation}
where $X = U\Lambda U^\dagger$ is the diagonalization of $X$, $f^{[1]}$ are the first divided differences of $f$ and $\odot$ is the Hadamard or element-wise matrix product.

Another important property is that the derivative of a spectral function composed with the trace is equal to:
\begin{equation}\label{eq:trace derivative}
    D(\Tr \circ f)(X)[H] = \inprod{H}{f'(X)} \:,
\end{equation}
where $\langle A, B\rangle = \Tr(AB^\dagger)$ is the Hilbert-Schmidt inner product.
We will call this equation the \textit{trace derivative}. We can combine the chain rule and the trace derivative to obtain:
\begin{equation}\label{eq:trace chain rule}
    D(\Tr\circ f\circ g)(X)[H] = D(\Tr\circ f)(g(X))[Dg(X)[H]] = \inprod{Dg(X)[H]}{f'(g(X)} \:.
\end{equation}
These equations will be used extensively in the upcoming derivations.

To compute the Hessian and third derivatives of the barriers, we will need higher-order Fréchet derivatives of spectral functions. The second-order derivative applied to two vector directions is given by
\begin{equation}
    D^2 f(X)[H,K] = U \left[ \sum_{k=1}^n f_k^{[2]}(\Lambda)\odot \Big( U^\dagger H U \ket{k}\bra{k} U^\dagger K U + U^\dagger K U \ket{k}\bra{k} U^\dagger H U \Big) \right] U^\dagger \:,
\end{equation}
where $X = U\Lambda U^\dagger$ and $f_k^{[2]}(\Lambda) = \left[f^{[2]}(\lambda_i,\lambda_j,\lambda_k) \right]_{i,j = 1}^n$ (second divided differences of $f$).

The third derivative applied twice to the same vector direction is
\begin{align}
    D^3 f(X)[H, H, K] = U\Bigg[2\,\sum_{k,\ell =1}^n f_{k,\ell}^{[3]}(\Lambda)\odot \big[ U^\dagger H U \ket{k}\bra{k} U^\dagger H U \ket{\ell}\bra{\ell} U^\dagger K U + \nonumber \\[1mm]
    + U^\dagger H U \ket{k}\bra{k} U^\dagger K U \ket{\ell}\bra{\ell} U^\dagger H U + U^\dagger K U \ket{k}\bra{k} U^\dagger H U \ket{\ell}\bra{\ell} U^\dagger H U \big]\Bigg] U^{\dagger} ,
\end{align}
where $X = U\Lambda U^\dagger$ and $f_{k \ell}^{[3]}(\Lambda) = \left[f^{[3]}(\lambda_i,\lambda_k, \lambda_\ell, \lambda_j) \right]_{i,j = 1}^n$\, (third divided differences of $f$).

Finally, to isolate some of the vector directions later on, we will use the property that Fréchet derivatives of spectral functions are self-adjoint:
\begin{align}\label{eq:self adjointness derivative}
    \inprod{A}{Df(X)[H]} & = \inprod{A}{U(f^{[1]}(\Lambda) \odot (U^\dagger H U))U^\dagger} \nonumber \\[1mm]
     & = \inprod{U^\dagger A U}{f^{[1]}(\Lambda) \odot (U^\dagger H U)} \nonumber \\[1mm]
     & = \inprod{f^{[1]}(\Lambda) \odot (U^\dagger A U)}{U^\dagger H U} \nonumber \\[1mm]
     & = \inprod{U(f^{[1]}(\Lambda) \odot (U^\dagger A U))U^\dagger}{H} = \inprod{Df(X)[A]}{H} \:.
\end{align}

By applying this property successive times, we can prove that higher-order derivatives are also self-adjoint:
\begin{align}
    \inprod{A}{D^2f(X)[H, K]} & = \inprod{D^2f(X)[H, A]}{K} \:, \label{eq:self adjointness second derivative} \\[1mm]
    \inprod{A}{D^3f(X)[H, K, L]} & = \inprod{D^3f(X)[H, K, A]}{L}  \label{eq:self adjointness third derivative} \:.
\end{align}

\subsection{RényiQKD cone}

\subsubsection{Gradient}\label{sec:gradient qkd cone}

The gradient of the barrier \eqref{eq:truebarrier} is 
\begin{align}
    & \nabla_h f(h, \rho, \sigma) = -\frac{1}{z} \:, \\
    & \nabla_\rho f(h, \rho, \sigma) = \frac{s_\alpha}{z} \, \nabla_\rho \Psihat (\rho, \sigma) - \rho^{-1} \:, \\
    & \nabla_\sigma f(h, \rho, \sigma) = \frac{s_\alpha}{z} \, \nabla_\sigma \Psihat (\rho, \sigma) - \sigma^{-1} \:,    
\end{align}
where $z = h-s_\alpha \Psihat (\rho, \sigma)$ and $\Psihat$ is given by Equation \eqref{eq:truepsi}.

For convenience let us define the intermediate variables $G := \gmap(\rho)$ and $Z := \zmap(\sigma)$, and the functions $g(x) = x^\alpha$ and $h(x) = x^\frac{1-\alpha}{\alpha}$, so that we can write
\begin{equation}\label{eq:goodequivalenceGZ}
    \Psihathat(G,Z) := \Psihat(\rho,\sigma) = \Tr\left[g\big(G^\frac12 S^\dagger h(Z) S G^\frac12\big)\right] = \Tr\left[g\big(Z_S^\frac12 G Z_S^\frac12\big) \right] \:,
\end{equation}
where we defined $Z_S := S^\dagger h(Z)S$ for brevity. The two expressions we give for $\Psihathat(G,Z)$ come from eq. \eqref{eq:goodequivalence}, and as mentioned before, the operators appearing there are positive definite, which is necessary for the derivatives to be well-defined.

The chain rule \eqref{eq:chain rule} gives us the derivatives of $\Psihat$ in terms of the derivatives of $\Psihathat$:
\begin{gather}
    D_\rho\Psihat(\rho, \sigma)[H] = D_G\Psihathat(G,Z)[D_\rho G(\rho)[H]] \:, \\
    D_\sigma\Psihat(\rho, \sigma)[H] = D_Z\Psihathat(G,Z)[D_\sigma Z(\sigma)[H]] \:.
\end{gather}
Applying the chain rule and the trace derivative \eqref{eq:trace chain rule} to the expression $\Tr\left[g\big(Z_S^\frac12 G Z_S^\frac12\big) \right]$ in eq. \eqref{eq:goodequivalenceGZ} gives us
\begin{align}
    D_G\Psihathat(G,Z)[H_G] &= \inprod{Z_S^\frac12 H_G Z_S^\frac12}{g'\big(Z_S^\frac12 G Z_S^\frac12\big)} = \inprod{H_G}{Z_S^\frac12 g'\big(Z_S^\frac12 G Z_S^\frac12 \big) Z_S^\frac12} \:.\label{eq:gooddg}
\end{align}
Now, using the expression $\Tr\left[g\big(G^\frac12 S^\dagger h(Z) S G^\frac12\big)\right]$ from Equation \eqref{eq:goodequivalenceGZ} to compute the derivative with respect to $Z$, we obtain
\begin{align}
    D_Z\Psihathat(G,Z)[H_Z] &= \inprod{G^\frac12 S^\dagger Dh(Z)[H_Z] S G^\frac12}{g'\big(G^\frac12 Z_S G^\frac12\big)} \\
    &= \inprod{  Dh(Z)[H_Z] }{S G^\frac12 g'\big(G^\frac12 Z_S G^\frac12\big)G^\frac12 S^\dagger} \label{eq:gooddz}\\
    &= \inprod{ H_Z }{ Dh(Z)[S G^\frac12 g'\big(G^\frac12 Z_S G^\frac12\big)G^\frac12 S^\dagger]} \:,
\end{align}
where in the last step we used the self-adjointness of the spectral derivative \eqref{eq:self adjointness derivative}. Now, substituting $G=\gmap(\rho)$, $Z=\zmap(\sigma)$, $H_G = D_\rho G(\rho)[H_\rho] = \gmap(H_\rho)$ and $H_Z = D_\sigma Z(\sigma)[H_\sigma] = \zmap(H_\sigma)$, we obtain
\begin{gather}
    D_\rho\Psihat(\rho, \sigma)[H_\rho] = \inprod{H_\rho}{\gmap^\dagger(Z_S^\frac12 g'\big(Z_S^\frac12 G Z_S^\frac12 \big) Z_S^\frac12)} \:, \\
    D_\sigma\Psihat(\rho, \sigma)[H_\sigma] = \inprod{H_\sigma}{\zmap^\dagger(Dh(Z)[S G^\frac12 g'\big(G^\frac12 Z_S G^\frac12\big)G^\frac12 S^\dagger])} \:,
\end{gather}
and therefore the gradient is
\begin{gather}
    \nabla_\rho\Psihat(\rho, \sigma) = \gmap^\dagger(Z_S^\frac12 g'\big(Z_S^\frac12 G Z_S^\frac12 \big)Z_S^\frac12) \:, \label{eq:truegradient rho} \\
    \nabla_\sigma\Psihat(\rho, \sigma) = \zmap^\dagger(Dh(Z)[S G^\frac12 g'\big(G^\frac12 Z_S G^\frac12\big)G^\frac12 S^\dagger]) \:. \label{eq:truegradient sigma}
\end{gather}
The gradient can be significantly simplified if we write it in terms of the singular value decomposition of $Z_S^\frac12 G^\frac12 = U \Lambda V^\dagger$:
\begin{gather}
    \nabla_\rho\Psihat(\rho, \sigma) = \gmap^\dagger(Z_S^\frac12 U g'\big(\Lambda^2\big) U^\dagger  Z_S^\frac12) \:, \\
    \nabla_\sigma\Psihat(\rho, \sigma) = \zmap^\dagger(Dh(Z)[S G^\frac12 V g'\big(\Lambda^2\big) V^\dagger G^\frac12 S^\dagger]) \:.
\end{gather}

\subsubsection{Hessian}

The Hessian of the barrier \eqref{eq:truebarrier} is 
\begin{align}
	& \nabla^2_{hh} f(h, \rho, \sigma) = \frac{1}{z^2} \:, \\[1.5mm]
	& \nabla^2_{h\rho} f(h, \rho, \sigma) = - \frac{s_\alpha}{z^2} \, \nabla_\rho \Psihat (\rho, \sigma) \:, \\[1.25mm]
    & \nabla^2_{h\sigma} f(h, \rho, \sigma) = - \frac{s_\alpha}{z^2} \, \nabla_\sigma \Psihat (\rho, \sigma) \:, \\[1.25mm]
	& \nabla^2_{\rho\rho} f(h, \rho, \sigma) = H_\rho \mapsto \frac{1}{z^2} \inprod{ \nabla_\rho \Psihat (\rho, \sigma)}{H_\rho} \, \nabla_\rho \Psihat (\rho, \sigma) + \frac{s_\alpha}{z} \nabla^2_{\rho\rho} \, \Psihat (\rho, \sigma) H_\rho + \rho^{-1} H_\rho \rho^{-1}, \\[1.25mm]
    & \nabla^2_{\sigma\sigma} f(h, \rho, \sigma) = H_\sigma \mapsto \frac{1}{z^2} \inprod{ \nabla_\sigma \Psihat (\rho, \sigma)}{H_\sigma} \, \nabla_\sigma \Psihat (\rho, \sigma) + \frac{s_\alpha}{z} \nabla^2_{\sigma\sigma} \, \Psihat (\rho, \sigma)  H_\sigma + \sigma^{-1} H_\sigma \sigma^{-1}, \\[1.25mm]
    & \nabla^2_{\rho\sigma} f(h, \rho, \sigma) = H_\sigma \mapsto \frac{1}{z^2} \inprod{ \nabla_\sigma \Psihat (\rho, \sigma)}{H_\sigma} \, \nabla_\rho \Psihat (\rho, \sigma) + \frac{s_\alpha}{z} \nabla^2_{\rho\sigma} \, \Psihat (\rho, \sigma) H_\sigma \:, \\
    & \nabla^2_{\sigma\rho} f(h, \rho, \sigma) = H_\rho \mapsto \frac{1}{z^2} \inprod{ \nabla_\rho \Psihat (\rho, \sigma)}{H_\rho} \, \nabla_\sigma \Psihat (\rho, \sigma) + \frac{s_\alpha}{z} \nabla^2_{\sigma\rho} \, \Psihat (\rho, \sigma) H_\rho \:,      
\end{align}
where $z = h-s_\alpha\Psihat (\rho, \sigma)$.
\vspace{1mm}

Similarly to the derivation of the gradient, we obtain $D_{GG}\Psihathat(G,Z)$ by differentiating \eqref{eq:gooddg} over $G$:
\begin{equation}\label{eq:DGG}
    D_{GG}^2\Psihathat(G,Z)[H_{G_1}, H_{G_2}] = \inprod{H_{G_1}}{Z_S^\frac12 Dg'(Z_S^\frac12 G Z_S^\frac12)[Z_S^\frac12 H_{G_2} Z_S^\frac12] Z_S^\frac12}
\end{equation}
We obtain $D_{ZZ}^2\Psihathat(G,Z)$ by differentiating \eqref{eq:gooddz} over $Z$ with the product rule \eqref{eq:trace product rule}:
\begin{multline}\label{eq:DZZ}
    D_{ZZ}^2\Psihathat(G,Z)[H_{Z_1}, H_{Z_2}] = \inprod{D^2h(Z)[H_{Z_1}, H_{Z_2}] }{S G^\frac12 g'\big(G^\frac12 Z_S G^\frac12\big)G^\frac12 S^\dagger} + \\
    +  \inprod{Dh(Z)[H_{Z_1}] }{S G^\frac12 D g'\big(G^\frac12 Z_S G^\frac12\big)\Big[G^\frac12 S^\dagger Dh(Z)[H_{Z_2}]S G^\frac12\Big]G^\frac12 S^\dagger} \:.
\end{multline}
Using He's Lemma 5.1 \cite{he2025}, we have that
\begin{align}
    D_Z\Psihathat(G,Z)[H_Z] &= \inprod{  Dh(Z)[H_Z] }{S G^\frac12 g'\big(G^\frac12 Z_S G^\frac12\big)G^\frac12 S^\dagger} \\
                            &= \inprod{  Dh(Z)[H_Z] }{S Z_S^{-\frac12} \tilde g\big(Z_S^\frac12 G Z_S^\frac12\big)Z_S^{-\frac12} S^\dagger} \:,
\end{align}
where $\tilde g(x) = xg'(x)$. With this second expression we can obtain $D_{ZG}^2$ easily by differentiating it over $G$:
\begin{equation}\label{eq:DZG}
    D_{ZG}^2\Psihathat(G,Z)[H_{Z_1}, H_{G_2}] = \inprod{  Dh(Z)[H_{Z_1}] }{S Z_S^{-\frac12} D\tilde g\big(Z_S^\frac12 G Z_S^\frac12\big)[Z_S^\frac12 H_{G_2} Z_S^\frac12]Z_S^{-\frac12} S^\dagger} \:.
\end{equation}

Now we can substitute $G=\gmap(\rho)$, $Z=\zmap(\sigma)$ and the vector directions $H_{G_i} = \gmap(H_{\rho_i})$ and $H_{Z_i} = \zmap(H_{\sigma_i})$. To obtain the product of the Hessian times a vector we need to isolate the first vector directions, $H_{\rho_1}$ and $H_{\sigma_1}$:
\begin{equation}\label{eq:hessian rho rho}
     \nabla^2_{\rho\rho}\Psihat (\rho, \sigma) \, H_{\rho_2} = \gmap^\dagger(Z_S^\frac12 Dg'(Z_S^\frac12 G Z_S^\frac12)[Z_S^\frac12 \gmap(H_{\rho_2}) Z_S^\frac12] Z_S^\frac12) \:,
\end{equation}
\begin{equation}\label{eq:hessian sigma rho}
    \nabla^2_{\sigma \rho}\Psihat (\rho, \sigma) \, H_{\rho_2} = \zmap^\dagger(Dh(Z)[S Z_S^{-\frac12} D\tilde g\big(Z_S^\frac12 G Z_S^\frac12\big)[Z_S^\frac12 \gmap(H_{\rho_2}) Z_S^\frac12]Z_S^{-\frac12}S^\dagger]) \:,
\end{equation}
\begin{equation}\label{eq:hessian rho sigma}
    \nabla^2_{\rho \sigma}\Psihat (\rho, \sigma) \, H_{\sigma_2} = \gmap^\dagger(Z_S^\frac12  D\tilde g\big(Z_S^\frac12 G Z_S^\frac12\big)[Z_S^{-\frac12} S^\dagger Dh(Z)[\zmap(H_{\sigma_2})] S Z_S^{-\frac12}] Z_S^\frac12) \:,
\end{equation}
\begin{multline}\label{eq:hessian sigma sigma}
    \nabla^2_{\sigma \sigma}\Psihat (\rho, \sigma) \, H_{\sigma_2} = \zmap^\dagger\left(D^2h(Z)[\zmap(H_{\sigma_2}), S G^\frac12 g'\big(G^\frac12 Z_S G^\frac12\big)G^\frac12 S^\dagger]\right)  \\[1mm]
    + \zmap^\dagger(Dh(Z)[S G^\frac12 D g'\big(G^\frac12 Z_S G^\frac12\big)\Big[G^\frac12 S^\dagger Dh(Z)[\zmap(H_{\sigma_2})]S G^\frac12\Big]G^\frac12 S^\dagger]) \:.
\end{multline}
And we obtain the Hessian, $\nabla^2 \Psihat$, by vectorizing (see Appendix A of Ref. \cite{Lorente2025quantumkey}) the previous equations and thus isolating the remaining vector directions, $H_{\rho_2}$ and $H_{\sigma_2}$.

\subsubsection{Third derivatives}\label{sec:third derivatives qkd cone}

The third derivatives of the barrier \eqref{eq:truebarrier} are 
\begin{align}
	\nabla^3_{hhh} f(h, \rho, \sigma) &= -\frac{2}{z^3} \:, \\[1.5mm]
	\nabla^3_{hh\rho} f(h, \rho, \sigma) &= \frac{2 s_\alpha}{z^3} \, \nabla_\rho \widehat\Psi_\alpha (\rho, \sigma) \:, \\[1.25mm]
    \nabla^3_{hh\sigma} f(h, \rho, \sigma) &= \frac{2 s_\alpha}{z^3} \, \nabla_\sigma \widehat\Psi_\alpha (\rho, \sigma) \:, \\[1.25mm]
	\nabla^3_{h\rho\rho} f(h, \rho, \sigma) &= H_\rho \mapsto -\frac{2 s_\alpha^2}{z^3} \inprod{ \nabla_\rho \widehat\Psi_\alpha (\rho, \sigma)}{H_\rho} \, \nabla_\rho \widehat\Psi_\alpha (\rho, \sigma) - \frac{s_\alpha}{z^2} \nabla^2_{\rho\rho} \, \widehat\Psi_\alpha(\rho, \sigma) H_\rho \:, \\[1.25mm]
    \nabla^3_{h\sigma\sigma} f(h, \rho, \sigma) &= H_\sigma \mapsto -\frac{2 s_\alpha^2}{z^3} \inprod{ \nabla_\sigma \widehat\Psi_\alpha (\rho, \sigma)}{H_\sigma} \, \nabla_\sigma \widehat\Psi_\alpha (\rho, \sigma) - \frac{s_\alpha}{z^2} \nabla^2_{\sigma\sigma} \, \widehat\Psi_\alpha(\rho, \sigma) H_\sigma \:, \\[1.25mm]
    \nabla^3_{h\sigma\rho} f(h, \rho, \sigma) &= H_\rho \mapsto -\frac{2 s_\alpha^2}{z^3} \inprod{ \nabla_\rho \widehat\Psi_\alpha (\rho, \sigma)}{H_\rho} \, \nabla_\sigma \widehat\Psi_\alpha (\rho, \sigma) - \frac{s_\alpha}{z^2} \nabla^2_{\sigma\rho} \, \widehat\Psi_\alpha(\rho, \sigma) H_\rho \:, \\[1.25mm]
    \nabla^3_{\rho\rho\rho} f(h, \rho, \sigma) &= (H_\rho, H_\rho) \mapsto \frac{2 s_\alpha^3}{z^3} \inprod{ \nabla_\rho \widehat\Psi_\alpha (\rho, \sigma)}{H_\rho}^2  \, \nabla_\rho \widehat\Psi_\alpha (\rho, \sigma) + \nonumber \\[1.25mm]
    &+ \frac{s_\alpha^2}{z^2} \inprod{\nabla^2_{\rho\rho}\widehat\Psi_\alpha H_\rho}{H_\rho} \, \nabla_\rho\widehat\Psi_\alpha(\rho, \sigma) + \frac{2 s_\alpha^2}{z^2} \inprod{\nabla_{\rho}\widehat\Psi_\alpha}{H_\rho} \, \nabla^2_{\rho\rho} \widehat\Psi_\alpha(\rho, \sigma) H_\rho \nonumber \\[1.25mm]
    & +\frac{s_\alpha}{z}  \nabla^3_{\rho\rho\rho} \widehat\Psi_\alpha(\rho, \sigma)[\,\cdot\,, H_\rho, H_\rho] - 2\rho^{-1} H_\rho \rho^{-1} H_\rho \rho^{-1} \:,\\[1.25mm]
    \nabla^3_{\sigma\sigma\sigma} f(h, \rho, \sigma) &= (H_\sigma, H_\sigma) \mapsto \frac{2 s_\alpha^3}{z^3} \inprod{ \nabla_\sigma \widehat\Psi_\alpha (\rho, \sigma)}{H_\sigma}^2  \, \nabla_\sigma \widehat\Psi_\alpha (\rho, \sigma) + \nonumber \\[1.25mm]
    &+ \frac{s_\alpha^2}{z^2} \inprod{\nabla^2_{\sigma\sigma}\widehat\Psi_\alpha H_\sigma}{H_\sigma} \, \nabla_\sigma\widehat\Psi_\alpha(\rho, \sigma) + \frac{2 s_\alpha^2}{z^2} \inprod{\nabla_{\sigma}\widehat\Psi_\alpha}{H_\sigma} \, \nabla^2_{\sigma\sigma} \widehat\Psi_\alpha(\rho, \sigma) H_\sigma \nonumber \\[1.25mm]
    & +\frac{s_\alpha}{z}  \nabla^3_{\sigma\sigma\sigma} \widehat\Psi_\alpha(\rho, \sigma)[\,\cdot\,, H_\sigma, H_\sigma] - 2\sigma^{-1} H_\sigma \sigma^{-1} H_\sigma \sigma^{-1}\:,
\end{align}
\begin{align}
    \nabla^3_{\rho\rho\sigma} f(h, \rho, \sigma) &= (H_\rho, H_\sigma) \mapsto \frac{2 s_\alpha^3}{z^3} \inprod{ \nabla_\rho \widehat\Psi_\alpha (\rho, \sigma)}{H_\rho}\inprod{ \nabla_\sigma \widehat\Psi_\alpha (\rho, \sigma)}{H_\sigma}  \, \nabla_\rho \widehat\Psi_\alpha (\rho, \sigma)  \nonumber \\[1.25mm]
    &+ \frac{s_\alpha^2}{z^2} \inprod{\nabla^2_{\rho\sigma}\widehat\Psi_\alpha H_\sigma}{H_\rho} \, \nabla_\rho\widehat\Psi_\alpha(\rho, \sigma) + \frac{s_\alpha^2}{z^2} \inprod{\nabla_{\rho}\widehat\Psi_\alpha}{H_\rho} \, \nabla^2_{\rho\sigma} \widehat\Psi_\alpha(\rho, \sigma) H_\sigma \nonumber \\[1.25mm]
    & \frac{s_\alpha^2}{z^2} \inprod{\nabla_{\sigma}\widehat\Psi_\alpha}{H_\sigma} \, \nabla^2_{\rho\rho} \widehat\Psi_\alpha(\rho, \sigma) H_\rho +\frac{s_\alpha}{z}  \nabla^3_{\rho\rho\sigma} \widehat\Psi_\alpha(\rho, \sigma)[\,\cdot\,, H_\rho, H_\sigma] \:,
\end{align}
where $z = h-s_\alpha\widehat\Psi_\alpha (\rho)$. The remaining derivatives are completely analogous. Here we simplified the third derivative $\nabla^3_{\rho\rho\rho} f$ (and similarly for other terms) by applying it twice to the same vector because that is what the conic solver uses.

To obtain $\nabla^3 \Psihat(\rho, \sigma)$ we first differentiate the second derivatives of $\Psihathat(G,Z)$ and then substitute $G$ and $Z$. We obtain $D_{GGG}\Psihathat(G,Z)$ by differentiating \eqref{eq:DGG} over $G$:
\begin{equation}
    D_{GGG}^3\Psihathat[H_{G_1}, H_{G_2}, H_{G_3}] = \inprod{H_{G_1}}{Z_S^\frac12 D^2g'(Z_S^\frac12 G Z_S^\frac12)[Z_S^\frac12 H_{G_2} Z_S^\frac12, Z_S^\frac12 H_{G_3} Z_S^\frac12] Z_S^\frac12} \:.
\end{equation}
We obtain $D_{ZGG}^3\Psihathat(G,Z)$ by differentiating \eqref{eq:DZG} over $G$:
\begin{equation}
\begin{aligned}
    D_{ZGG}\Psihathat[H_{Z_1}, H_{G_2}, H_{G_3}] &= \inprod{  Dh(Z)[H_{Z_1}] }{S Z_S^{-\frac12} D^2\tilde g\big(Z_S^\frac12 G Z_S^\frac12\big)[Z_S^\frac12 H_{G_2} Z_S^\frac12, Z_S^\frac12 H_{G_3} Z_S^\frac12]Z_S^{-\frac12} S^\dagger} \\[1mm]
    &= \inprod{  H_{Z_1} }{Dh(Z)\left[S Z_S^{-\frac12} D^2\tilde g\big(Z_S^\frac12 G Z_S^\frac12\big)[Z_S^\frac12 H_{G_2} Z_S^\frac12, Z_S^\frac12 H_{G_3} Z_S^\frac12]Z_S^{-\frac12} S^\dagger \right]} \:.
\end{aligned}
\end{equation}
Because the third derivatives are continuous, $D^3_{GZG}\Psihathat = D_{ZGG}^3\Psihathat$. In order to solve the first vector direction and obtain $\nabla^3_{GZG}\Psihathat[\,\cdot\,,H_{Z_2},H_{G_3}]$, it is convenient to reorganize $D^3_{GZG}\Psihathat$:
\begin{equation}
    D_{GZG}^3\Psihathat(G,Z)[H_{G_1}, H_{Z_2}, H_{G_3}] = \inprod{H_{G_1}}{Z_S^\frac12 D^2\tilde g\big(Z_S^\frac12 G Z_S^\frac12\big)[Z_S^{-\frac12} S^\dagger Dh(Z)[H_{Z_2}] S Z_S^{-\frac12}, Z_S^\frac12 H_{G_3} Z_S^\frac12]Z_S^\frac12} \:,
\end{equation}
and analogously for $D_{GGZ}^3\Psihathat$.

To obtain $D_{GZZ}\Psihathat$, we follow Ref. \cite[Lemma 5.1]{he2025} and use the expression
\begin{equation}\label{swMatrix}
	\inprod{A}{X f(X Y^2 X) X} = \inprod{A}{\inv{Y} (Y X^2 Y) f(Y X^2 Y) \inv{Y}}\:.
\end{equation}
With this equation, we can rewrite 
\begin{equation}
    D_G\Psihathat(G,Z)[H_G] = \inprod{H_G}{Z_S^\frac12 g'\big(Z_S^\frac12 G Z_S^\frac12 \big) Z_S^\frac12} = \inprod{H_G}{G^{-\frac12} \tilde g\big(G^\frac12 Z_S G^\frac12 \big) G^{-\frac12}}\:,
\end{equation}
where $\tilde g(x) = xg'(x)$, and differentiate twice with respect to $Z$:
\begin{equation}
\begin{aligned}
    & D_{GZZ}\Psihathat[H_{G_1}, H_{Z_2}, H_{Z_3}] = \inprod{G^{-\frac12} H_{G_1} G^{-\frac12}}{D\tilde g(G^\frac12 Z_S G^\frac12)[G^\frac12 S^\dagger D^2h[H_{Z_2}, H_{Z_3}] S G^\frac12]} \\[2mm]
    \qquad &+\inprod{G^{-\frac12} H_{G_1} G^{-\frac12}}{D^2\tilde g(G^\frac12 Z_S G^\frac12)[G^\frac12 S^\dagger Dh[H_{Z_2}] S G^\frac12, G^\frac12 S^\dagger Dh[H_{Z_3}] S G^\frac12]}  \:.
\end{aligned}
\end{equation}

Again, to obtain $\nabla^3_{ZGZ}\Psihathat[\,\cdot\,,H_{G_2},H_{Z_3}]$ we simply reorganize $D_{GZZ}\Psihathat$:
\begin{equation}
\begin{aligned}
    & D_{ZGZ}\Psihathat[H_{Z_1}, H_{G_2}, H_{Z_3}] =
    \inprod{H_{Z_1}}{D^2h[S G^\frac12 D\tilde g(G^\frac12 Z_S G^\frac12)[G^{-\frac12} H_{G_2} G^{-\frac12}]G^\frac12 S^\dagger, H_{Z_3}]} \\[2mm]
    \qquad &+\inprod{H_{Z_1}}{Dh[ S G^\frac12D^2\tilde g(G^\frac12 Z_S G^\frac12)[G^{-\frac12} H_{G_2} G^{-\frac12}, G^\frac12 S^\dagger Dh[H_{Z_3}] S G^\frac12]G^\frac12 S^\dagger]}  \:.
\end{aligned}
\end{equation}
And analogously for $D_{ZZG}\Psihathat$.

Finally, we differentiate Equation \eqref{eq:DZZ} with respect to Z using the product rule \eqref{eq:trace product rule} and use the self-adjointness of the spectral derivatives \eqref{eq:self adjointness third derivative} to obtain 
\begin{align}
    &D_{ZZZ}\Psihathat[H_{Z_1}, H_{Z_2}, H_{Z_3}] = \inprod{H_{Z_1}}{D^3h(Z)[S G^\frac12 g'\big(G^\frac12 Z_S G^\frac12\big)G^\frac12 S^\dagger, H_{Z_2}, H_{Z_3}]} \nonumber\\
    &+ \inprod{H_{Z_1} }{D^2h(Z)\left[H_{Z_2}, S G^\frac12 Dg'\big(G^\frac12 Z_S G^\frac12\big)[G^\frac12 S^\dagger Dh(Z)[H_{Z_3}] S G^\frac12]G^\frac12 S^\dagger\right]}  \nonumber\\
    &+ \inprod{H_{Z_1} }{D^2h(Z)\left[S G^\frac12 D g'\big(G^\frac12 Z_S G^\frac12\big)\Big[G^\frac12 S^\dagger Dh(Z)[H_{Z_2}]S G^\frac12\Big]G^\frac12 S^\dagger, H_{Z_3}\right]}  \\ 
    &+ \inprod{H_{Z_1}}{Dh(Z)\left[S G^\frac12 D g'\big(G^\frac12 Z_S G^\frac12\big)\Big[G^\frac12 S^\dagger D^2h(Z)[H_{Z_2}, H_{Z_3}]S G^\frac12\Big]G^\frac12 S^\dagger\right]} \nonumber\\
    &+ \inprod{H_{Z_1}}{Dh(Z)\left[S G^\frac12 D^2 g'\big(G^\frac12 Z_S G^\frac12\big)\Big[G^\frac12 S^\dagger Dh(Z)[H_{Z_2}]S G^\frac12, G^\frac12 S^\dagger Dh(Z)[H_{Z_3}]S G^\frac12\Big]G^\frac12 S^\dagger\right]} \:. \nonumber
\end{align}

Now we can simply substitute $G=\gmap(\rho)$, $Z=\zmap(\sigma)$ and the vector directions $H_{G_i} = \gmap(H_{\rho_i})$ and $H_{Z_i} = \zmap(H_{\sigma_i})$. We then solve the first vector direction to recover the third derivatives $\nabla^3 \Psihat(\rho, \sigma)$ applied to two vectors (actually, twice the same vector), which is what the conic solver needs.

\subsection{FastRényiQKD cone}

\subsubsection{Gradient}\label{sec:gradient fast qkd cone}

The gradient of the barrier \eqref{eq:fastbarrier} is 
\begin{align}
    & \nabla_h f(h, \rho) = -\frac{1}{z} \:, \\
    & \nabla_\rho f(h, \rho) = \frac{s_\alpha}{z} \, \nabla_\rho \Psihat (\rho) - \rho^{-1} \:,
\end{align}
where $z = h-s_\alpha\Psihat (\rho)$ and $\Psihat$ is given by Equation \eqref{eq:fastpsi}.

The gradient of $\Psihat(\rho)$ can be obtained from the gradient of $\Psihat(\rho, \sigma)$ (computed in \ref{sec:gradient qkd cone}) just by composing $\Psihat(\rho, \sigma)$ with the projection $(\rho, \sigma) \mapsto (\rho, \rho)$ and replacing $\zmap$ with $\zgmap$:
\begin{align}\label{eq:grad fast psi}
    \nabla_\rho\Psihat(\rho) &= \gmap^\dagger(Z_S^\frac12 g'\big(Z_S^\frac12 G Z_S^\frac12 \big)Z_S^\frac12) + \zgmap^\dagger(Dh(Z)[S G^\frac12 g'\big(G^\frac12 Z_S G^\frac12\big)G^\frac12 S^\dagger]) \\[1mm]
    &= \gmap^\dagger(Z_S^\frac12 U g'\big(\Lambda^2\big) U^\dagger  Z_S^\frac12) + \zgmap^\dagger(Dh(Z)[S G^\frac12 V g'\big(\Lambda^2\big) V^\dagger G^\frac12 S^\dagger]) \:,
\end{align}
where in the last equality we used the singular value decomposition of $Z_S^\frac12 G^\frac12 = U \Lambda V^\dagger$.

\subsubsection{Hessian}

The Hessian of the barrier \eqref{eq:fastbarrier} is 
\begin{align}
	& \nabla^2_{hh} f(h, \rho) = \frac{1}{z^2} \:, \\[1.5mm]
	& \nabla^2_{h\rho} f(h, \rho) = - \frac{s_\alpha}{z^2} \, \nabla_\rho \Psihat (\rho) \:, \\[1.25mm]
	& \nabla^2_{\rho\rho} f(h, \rho) = H \mapsto \frac{s_\alpha^2}{z^2} \inprod{ \nabla_\rho \Psihat (\rho)}{H} \, \nabla_\rho \Psihat (\rho) + \frac{s_\alpha}{z} \nabla^2_{\rho\rho} \, \Psihat(\rho) H + \rho^{-1} H \rho^{-1} \:,
\end{align}
where $z = h-s_\alpha\Psihat (\rho)$.

To obtain the product of the Hessian times a vector $H$, we can simply compose $\Psihat(\rho, \sigma)$ with the projection $(\rho, \sigma) \mapsto (\rho, \rho)$, replace $\zmap$ with $\zgmap$, and recycle the equations for the Hessian of $\Psihat(\rho, \sigma)$, eqs. \eqref{eq:hessian rho rho}-\eqref{eq:hessian sigma sigma}:
\begin{equation}
    \nabla^2\Psihat(\rho) \, H = \nabla^2_{\rho \rho}\Psihat(\rho, \rho) \, H + \nabla^2_{\sigma \rho}\Psihat(\rho, \rho) \, H + \nabla^2_{\rho \sigma}\Psihat(\rho, \rho) \, H + \nabla^2_{\sigma \sigma}\Psihat(\rho, \rho) \, H \:.
\end{equation}
To obtain the Hessian, we can vectorize $\nabla^2\Psihat(\rho) \, H$ and isolate the vector direction $H$.

\subsubsection{Third derivatives}

Finally, the conic solver needs the third derivatives of the barrier \eqref{eq:fastbarrier} are
\begin{align}
	\nabla^3_{hhh} f(h, \rho) &= -\frac{2}{z^3} \:, \\[1.5mm]
	\nabla^3_{hh\rho} f(h, \rho) &= \frac{2 s_\alpha}{z^3} \, \nabla_\rho \Psihat (\rho) \:, \\[1.25mm]
	\nabla^3_{h\rho\rho} f(h, \rho) &= H \mapsto -\frac{2 s_\alpha^2}{z^3} \inprod{ \nabla_\rho \Psihat (\rho)}{H} \, \nabla_\rho \Psihat (\rho) - \frac{s_\alpha}{z^2} \nabla^2_{\rho\rho} \, \Psihat(\rho) \,H \:, \\[1.25mm]
    \nabla^3_{\rho\rho\rho} f(h, \rho) &= (H_1, H_2) \mapsto \frac{2 s_\alpha^3}{z^3} \inprod{ \nabla_\rho \Psihat (\rho)}{H_1} \inprod{ \nabla_\rho \Psihat (\rho)}{H_2} \, \nabla_\rho \Psihat (\rho) \nonumber \\[1.25mm]
    &+ \frac{s_\alpha^2}{z^2} \inprod{\nabla^2_{\rho\rho}\Psihat \,H_2}{H_1} \, \nabla_\rho\Psihat(\rho) +\frac{s_\alpha^2}{z^2} \inprod{\nabla_{\rho}\Psihat}{H_1} \, \nabla^2_{\rho\rho} \Psihat(\rho)\,H_2 \nonumber \\[1.25mm]
    & + \frac{s_\alpha^2}{z^2} \inprod{\nabla_{\rho}\Psihat}{H_2} \nabla^2_{\rho\rho} \, \Psihat(\rho) \, H_1  +\frac{s_\alpha}{z}  \nabla^3_{\rho\rho\rho} \Psihat(\rho)[\,\cdot\,, H_1, H_2] \nonumber \\[1.25mm]
    & - \rho^{-1} H_2 \rho^{-1} H_1 \rho^{-1} - \rho^{-1} H_1 \rho^{-1} H_2 \rho^{-1}\:,
\end{align}
where $z = h-s_\alpha\Psihat (\rho)$. Since the conic solver will use $\nabla^3 f$ applied twice to the same vector direction $H$, we can simplify the last derivative:
\begin{equation}
\begin{aligned}
    \nabla^3_{\rho\rho\rho} f(h, \rho) [\,\cdot\,, H, H] &= \frac{2 s_\alpha^3}{z^3} \inprod{ \nabla_\rho \Psihat (\rho)}{H}^2 \, \nabla_\rho \Psihat (\rho) \nonumber \\[1.25mm]
    &+ \frac{s_\alpha^2}{z^2} \inprod{\nabla^2_{\rho\rho}\Psihat \,H}{H} \, \nabla_\rho\Psihat(\rho) + 2 \frac{s_\alpha^2}{z^2} \inprod{\nabla_{\rho}\Psihat}{H} \, \nabla^2_{\rho\rho} \Psihat(\rho)\,H \nonumber \\[1.25mm]
    & +\frac{s_\alpha}{z}  \nabla^3_{\rho\rho\rho} \Psihat(\rho)[\,\cdot\,, H, H] \nonumber - 2\rho^{-1} H \rho^{-1} H \rho^{-1}\:.
\end{aligned}
\end{equation}

Finally, as in the previous sections, by composing $\Psihat(\rho, \sigma)$ with the projection $(\rho, \sigma) \mapsto (\rho, \rho)$ and taking $H := H_\sigma = H_\rho$, we can recycle the third derivatives of section \ref{sec:third derivatives qkd cone}:
\begin{equation}
\begin{aligned}
    \nabla^3_{\rho\rho\rho} \Psihat(\rho)[\,\cdot\,, H, H] &= \nabla^3_{\rho\rho\rho} \Psihat(\rho, \rho)[\,\cdot\,, H, H] + \nabla^3_{\sigma\rho\rho} \Psihat(\rho, \rho)[\,\cdot\,, H, H] \\[1mm]
    &+ \nabla^3_{\rho\sigma\rho} \Psihat(\rho, \rho)[\,\cdot\,, H, H] + 
    \nabla^3_{\rho\rho\sigma} \Psihat(\rho, \rho)[\,\cdot\,, H, H] \\[1mm]
    &+ \nabla^3_{\sigma\sigma\rho} \Psihat(\rho, \rho)[\,\cdot\,, H, H] + 
    \nabla^3_{\sigma\rho\sigma} \Psihat(\rho, \rho)[\,\cdot\,, H, H] \\[1mm]
    &+ \nabla^3_{\rho\sigma\sigma} \Psihat(\rho, \rho)[\,\cdot\,, H, H] +
    \nabla^3_{\sigma\sigma\sigma} \Psihat(\rho, \rho)[\,\cdot\,, H, H] \:.
\end{aligned}
\end{equation}

\section{Starting point}\label{sec:startingpoint}

\subsection{RényiQKD cone}

The starting point for the RényiQKD cone \eqref{eq: qkd cone} is given by the solution of
\begin{equation}
    (h,\rho,\sigma) = -\nabla f(h,\rho,\sigma),
\end{equation}
where $f(h,\rho,\sigma)$ is the barrier function \eqref{eq:truebarrier}. Recalling the derivatives from Appendix \ref{sec:gradient qkd cone}, we obtain the following system of non-linear equations:
\begin{subequations}\label{eq:initialcentral}
\begin{align}
h &= \frac1z, \label{eq:h}  \\
\rho &= \frac1z \nabla_\rho z + \rho^{-1}, \label{eq:rho} \\
\sigma &= \frac1z \nabla_\sigma z + \sigma^{-1}, \label{eq:sigma}
\end{align}
\end{subequations}
where $z = h - s_\alpha\Psihat$. 

If $d_\rho$ and $d_\sigma$ are the dimensions of $\rho$ and $\sigma$, we can plug in the ansatz $\rho = \gamma \id_{d_\rho}, \sigma = \delta \id_{d_\sigma}$. In the particular case where $\gmap(\id_{d_\rho}) = \gmap^\dagger(\id_{d_\rho}) = \id_{d_\rho}$ and $\zmap(\id_{d_\sigma}) = \zmap^\dagger(\id_{d_\sigma}) = \id_{d_\sigma}$, Equations \eqref{eq:truegradient rho} and \eqref{eq:truegradient sigma} give us
\begin{gather}
\nabla_{\rho}\Psihat(\rho,\sigma) = \alpha \de{\frac\gamma\delta}^{\alpha-1}\id_{d_\rho} \\
\nabla_{\sigma}\Psihat(\rho,\sigma) = (1-\alpha) \de{\frac\gamma\delta}^{\alpha}\id_{d_\sigma}
\end{gather}
In this particular case we get $\Psihat = d_\rho\de{\frac\gamma\delta}^{\alpha} \delta$, so we substitute that in Equations \eqref{eq:initialcentral}, obtaining
\begin{subequations}
\begin{gather}
h^2 - s_\alpha h d_\rho\de{\frac\gamma\delta}^{\alpha} \delta -1 = 0 \label{eq:h2}\\
-s_\alpha h \alpha \de{\frac\gamma\delta}^{\alpha-1} = \gamma - \frac1\gamma \label{eq:rho2} \\
-s_\alpha h (1-\alpha) \de{\frac\gamma\delta}^{\alpha} = \delta - \frac1\delta \label{eq:sigma2}
\end{gather}
\end{subequations}
We eliminate $h$ by combining \eqref{eq:h2} and \eqref{eq:rho2} and also by combining \eqref{eq:rho2} and \eqref{eq:sigma2}, obtaining
\begin{gather}
    (\gamma^2 -1)^2 \gamma^{-2\alpha} (\delta^2)^{\alpha-1} + d_\rho\alpha(\gamma^2-1) - \alpha^2 = 0 \label{eq:gamma} \\
    \frac{\gamma^2-1}{\alpha} = \frac{\delta^2-1}{1-\alpha} \label{eq:delta} 
\end{gather}
We eliminate $\delta^2$ from Equation \eqref{eq:gamma} using \eqref{eq:delta}, obtaining an equation on a single variable that can then be solved numerically by Newton's method. We substitute back in Equation \eqref{eq:delta} to obtain $\delta$, and in Equation \eqref{eq:h} to obtain $h$.

In general, computing the initial point for an arbitrary choice of $\gmap,\zmap$ is not feasible. Note, however, that this strategy always produces a valid starting point, since $\gamma\id_{d_\rho} \succ 0, \delta\id_{d_\sigma} \succ 0$ and Equation \eqref{eq:h} implies that $h > s_\alpha\Psihat$, and therefore $(h,\gamma\id_{d_\rho},\delta\id_{d_\sigma})$ belongs to the interior of the cone.

If the solver is going to be run repeatedly for a fixed choice of $\gmap, \zmap$, one can compute numerically the central point for these maps and supply the result to the solver.

\subsection{FastRényiQKD cone}

The starting point for the FastRényiQKD cone \eqref{eq: fast qkd cone} is given by the solution of  
\begin{equation}
    (h,\rho) = -\nabla f(h,\rho),
\end{equation}
where $f(h, \rho)$ is the barrier \eqref{eq:fastbarrier}. Substituting the derivatives from Appendix \ref{sec:gradient fast qkd cone} we obtain the equations
\begin{align}
h &= \frac1z, \\
\rho &= \frac1z \nabla_\rho z + \rho^{-1}, \label{eq:startrho}
\end{align}
where $z = h - s_\alpha\Psihat$.

The first equation, together with the requirement that $h \ge s_\alpha\Psihat$, implies
\begin{equation}\label{eq:optimalh}
    h = \frac{s_\alpha\Psihat}{2} + \sqrt{1 + \frac{\Psihat^2}{4}}
\end{equation}
Using the ansatz $\rho = \gamma \id$ and in the particular case that $\gmap(\id) = \gmap^\dagger(\id) = \zgmap(\id) = \zgmap^\dagger(\id) = \id,$ Equation \eqref{eq:grad fast psi} gives us
\begin{equation}
\nabla_{\rho}\Psi(\rho) = \alpha \id + (1-\alpha)\id = \id.
\end{equation}
Equation \eqref{eq:startrho} then boils down to
\begin{equation}
-s_\alpha h = \gamma - \frac1\gamma
\end{equation}
In this particular case, we get $\Psihat = \gamma d$, so we solve
\begin{equation}
    -\frac{\gamma d}{2} -s_\alpha \sqrt{1 + \frac{\gamma^2d^2}{4}} = \gamma - \frac1\gamma
\end{equation}
to obtain
\begin{equation}
    \gamma = \sqrt{\frac{d+3}{2d+2}-\frac{s_\alpha}{2}\sqrt{1 + \frac{4}{(d+1)^2}}} \:.
\end{equation}
As with the RényiQKD cone, computing the initial point for an arbitrary choice of $\gmap,\zgmap$ is not feasible. Note, however, that choosing $\rho = \gamma\id$ and setting $h$ to be given by Equation \eqref{eq:optimalh} is always a valid starting point, since $\gamma\id \succ 0$ and $\frac{s_\alpha\Psihat}{2} + \sqrt{1 + \frac{\Psihat^2}{4}} > s_\alpha\Psihat$, and thus it belongs to the interior of the cone.
\end{document}